\theoremstyle{plain}
\newtheorem{theorem}{Theorem}[section]
\newtheorem{corollary}[theorem]{Corollary}
\newtheorem{proposition}[theorem]{Proposition}
\newtheorem{lemma}[theorem]{Lemma}
\theoremstyle{definition}
\newtheorem{definition}[theorem]{Definition}
\theoremstyle{remark}
\newtheorem{remark}[theorem]{Remark}
\numberwithin{equation}{section}
\numberwithin{figure}{section}
\numberwithin{table}{section}
\newcommand{\R}{\mathbb{R}}
\newcommand{\N}{\mathbb{N}}
\newcommand{\C}{\mathbb{C}}
\newcommand{\Z}{\mathbb{Z}}
\newcommand{\s}[1]{\CMcal{#1}}
\newcommand{\bb}[1]{\mathscr{#1}}
\newcommand{\rr}[1]{\mathfrak{#1}}
\newcommand{\ketbra}[2]{|#1\rangle\langle#2|}
\newcommand{\expo}[1]{\,\mathrm{e}^{#1}\,}
\newcommand{\dd}{\,\mathrm{d}}
\newcommand{ \ii}{\,\mathrm{i}\,}
\newcommand{\virg}[1]{\lq\lq#1\rq\rq}                \newcommand{\ie}{\textsl{i.\,e.\,}}
\newcommand{\eg}{\textsl{e.\,g.\,}}
\newcommand{\cf}{\textsl{cf}.\,}
\DeclareMathOperator{\Tr}{Tr}
\newcommand{\HC}{H\!C}
\begin{document}

\title[The noncommutative geometry of the Landau Hamiltonian]
{The noncommutative geometry of the Landau Hamiltonian: Differential aspects}

\author[G. De~Nittis]{Giuseppe De Nittis}
\address[G. De~Nittis]{Facultad de Matemáticas \& Instituto de Física,
	Pontificia Universidad Católica de Chile,
	Santiago, Chile.}
\email{gidenittis@mat.uc.cl}

\author[M. Sandoval]{Maximiliano Sandoval}
\address[M. Sandoval]{Facultad de Matemáticas,
	Pontificia Universidad Católica de Chile,
	Santiago, Chile.}
\email{msandova@protonmail.com}

\vspace{2mm}

\date{\today}

\maketitle

\begin{abstract}
	{In this work we study the differential aspects of the noncommutative geometry for the magnetic $C^*$-algebra
		which is a 2-cocycle deformation of the group $C^*$-algebra of $\R^2$.
		This algebra is intimately related to the study of the Quantum Hall Effect in the continuous,
		and our results aim to provide a new geometric interpretation of the related Kubo's formula. Taking inspiration from the ideas
		developed by Bellissard during the 80's, we build an appropriate  Fredholm module for the
		magnetic $C^*$-algebra based on the magnetic Dirac operator
		which is the square root (\`a la Dirac) of the quantum harmonic oscillator. Our main result consist of
		establishing an important piece of Bellissard's theory, the so-called second Connes'
		formula. In order to do so, we establish the  equality of three cyclic 2-cocycles defined on
		a dense subalgebra of the   magnetic $C^*$-algebra. Two of these 2-cocycles are new in the literature and are defined by Connes' quantized
		differential calculus, with the use of the Dixmier trace and  the magnetic Dirac operator. }

	\medskip

	\noindent
	{\bf MSC 2010}:
	Primary: 	81Q10;
	Secondary: 	81Q05, 81Q15, 33C1.\\
	\noindent
	{\bf Keywords}:
	{\it Landau Hamiltonian, spectral triple, Dixmier trace, Connes' formulas.}

\end{abstract}

\tableofcontents

\section{Introduction}\label{sec:BG_mat}

{This work continues the  study of the noncommutative geometry of the magnetic $C^*$-algebra $\bb{C}_B$ associated with the Landau Hamiltonian started in~\cite{denittis-sandoval-00}. While the previous work has been devoted to the analysis of metric aspects, in the present work we will investigate the topological properties by developing an appropriate quantized calculus based on the spectral triple introduced in~\cite{denittis-sandoval-00} and endowed with a suitable grading. The main result of this paper is the proof of the equality of three cyclic 2-cocycles $\Psi_B$, $\rr{Ch}_B$  and  $\tau_{B,2}$ defined on
	a dense subalgebra of $\bb{C}_B$. The  2-cocycle $\Psi_B$ is  standard in the literature concerning the topology of $\bb{C}_B$
	while $\rr{Ch}_B$  and  $\tau_{B,2}$ are new and are defined by Connes' quantized
	differential calculus, with the use of the Dixmier trace and the the spectral triple introduced in~\cite{denittis-sandoval-00}. The equalities $\Psi_B=\rr{Ch}_B$ and $\Psi_B=\tau_{B,2}$, called the \emph{second Connes' formulae}
	in agreement with the name used in the seminal paper \cite{bellissard-elst-schulz-baldes-94}, provide
	a new way of representing the Kubo’s formula   for the Quantum Hall effect inside the noncommutative geometry of the magnetic $C^*$-algebra $\bb{C}_B$.
	In particular, the construction of $\tau_{B,2}$ requires the  introduction of the notion of \emph{quasi-even Fredholm} which can be considered as a new
	idea in noncommutative geometry extending the usual concept
	of Fredholm module. Our hope is that this idea could be of some interest also  for further applications in noncommutative geometry. In the rest of this introduction we will give a more detailed account of our results by comparing them with the existing literature.}

\subsection{Background material and known results}
In order to describe the main results of this work, we will first proceed to
introduce the necessary background. The  material and the notation presented
below  are borrowed from~\cite{denittis-gomi-moscolari-19, denittis-sandoval-00}.

\medskip

Consider  the Hilbert space $L^2(\R^2)$, and let $\{\psi_{n,m}\}\subset L^2(\R^2)$, with
$n,m\in\N_0:=\N\cup\{0\}$, be the orthonormal basis provided by the
\emph{generalized Laguerre
	basis} defined by
\begin{equation}\label{eq:lag_pol}
	\psi_{n,m}(x)\;:=\;\psi_{0,0}(x)\ \sqrt{\frac{n!}{m!}}\left[\frac{x_1+\ii x_2}{\sqrt{2}\ell_B}\right]^{m-n}L_{n}^{(m-n)}\left(\frac{|x|^2}{2\ell_B^2}\right)\; ,
\end{equation}
where
\[
	L_n^{(\alpha)}\left(\zeta\right)\;:=\;\sum_{j=0}^{n}\frac{(\alpha+n)(\alpha+n-1)\ldots(\alpha+j+1)}{j!(n-j)!}\left(-\zeta\right)^j\;,\qquad \alpha,\zeta\in \R
\]
are the {generalized Laguerre polynomial} of degree $m$ (with the usual convention $0!=1$) and
\begin{equation}\label{eq:herm1}
	\psi_{0,0}(x)\;:=\;\frac{1}{\sqrt{2\pi}\ell_B}\ \expo{-\frac{|x|^2}{4\ell_B^2}}\;.
\end{equation}
The parameter $\ell_B>0$ is called
\emph{magnetic length} and the (singular) limit $\ell_B\to +\infty$ corresponds to the
limit where the magnetic field $B$ vanishes.
Let us introduce the family $\left\{\Upsilon_{j\mapsto k}\;|\, (j,k)\in \N_0^2\right\}$ of \emph{transition operators} on $L^2(\R^2)$  defined by
\begin{equation}\label{eq:intro:basic_op}
	\Upsilon_{j\mapsto k}\psi_{n,m}\;:=\;\delta_{j,n}\;\psi_{k,m}\;,\qquad k,j,n,m\in\N_0\;.
\end{equation}
A direct computation~\cite[Proposition 2.10]{denittis-sandoval-00} provides
\begin{equation}\label{eq:intro_01}
	(\Upsilon_{j\mapsto k})^*\;=\;\Upsilon_{k\mapsto j}\;,\qquad
	\Upsilon_{j\mapsto k}\Upsilon_{m\mapsto n}=\delta_{j,n}\Upsilon_{m\mapsto k}\;,
\end{equation}
and with these rules in hand one can define the \emph{magnetic $C^*$-algebra}
\begin{equation}\label{eq:equal_C-Ups}
	\bb{C}_B\;=\;C^*(\Upsilon_{j\mapsto k},\;k,j\in\N_0)
\end{equation}
as the $C^*$-algebra generated by the {transition} operators. This name is justified by the fact that the \emph{Landau projections}
\[
	\Pi_j\;:=\; \Upsilon_{k\mapsto j}\Upsilon_{j\mapsto k}\;=\;\sum_{r\in\N_0}\ketbra{\psi_{j,r}}{\psi_{j,r}}\;,\qquad j\in\N_0
\]
(independently of $k$) are elements of $\bb{C}_B$, the latter being the spectral projection of the Landau Hamiltonian
\begin{equation}\label{eq:int_LH}
	H_{B}= \frac{\epsilon_B}{2} \left(K_{1}^{2} + K_{2}^{2}\right),
\end{equation}
where
\begin{equation}\label{eq:momenta}
	K_{1} \;=\; -\ii \ell_{B}\frac{\partial}{\partial x_{1}}  - \frac{1}{2 \ell_{B}} x_{2}\;,\qquad
	K_{2} \;=\; -\ii \ell_{B}\frac{\partial}{\partial x_{2}}  + \frac{1}{2 \ell_{B}} x_{1}
\end{equation}
are
the \emph{magnetic momenta} and the constant $\epsilon_B$ is the fundamental \emph{magnetic energy}.

\medskip

There are interesting spaces of operators contained in $\bb{C}_B$. Let us introduce the following notation
\begin{equation}\label{eq:exp_op}
	\begin{aligned}
		\bb{S}_B\;   & :=\;\left.\left\{A\;:=\;\sum_{(j,k)\in\N_0^2}a_{j,k}\Upsilon_{j\mapsto k}\;\right|\;\{a_{j,k}\}\in S(\N_0^2)\right\}\;,      \\
		\bb{L}^p_B\; & :=\;\left.\left\{A\;:=\;\sum_{(j,k)\in\N_0^2}a_{j,k}\Upsilon_{j\mapsto k}\;\right|\;\{a_{j,k}\}\in \ell^p(\N_0^2)\right\}\;,
	\end{aligned}
\end{equation}
where
$S(\N_0^2)$ is the space
of \emph{rapidly decreasing} sequences,
and $\ell^p(\N_0^2)$ are the usual discrete $L^p$ spaces. It turns out that~\cite[Proposition 2.17]{denittis-sandoval-00}
$$
	\bb{S}_B\;\subset\;\bb{L}^1_B\;\subset\;\bb{I}_B\;\subset\;\bb{L}^2_B\;\subset\;\bb{C}_B\;\subset\;\bb{M}_B\;,
$$
where
$$
	\bb{I}_B\;:=\;\left\{S=AB\;|\; A,B\in\bb{L}_{B}^2 \right\}\;\equiv\;\left(\bb{L}_{B}^2\right)^2\;
$$
and $\bb{M}_B$ is the enveloping von Neumann algebra of $\bb{C}_B$.
All these subspaces are dense in $\bb{C}_B$ with respect to the operator norm,
and in $\bb{M}_B$ with respect to the weak or strong operator topologies. Both $\bb{L}^2_B$, and consequently $\bb{I}_B$, are
self-adjoint two-sided ideals of $\bb{M}_B$.
The spaces $\bb{S}_B$ and $\bb{L}^2_B$ admit special characterizations
in terms of integral kernel operators. Let us start with $\bb{L}^2_B$ (\cf
\cite[Section 2.4]{denittis-sandoval-00}). One gets that $A\in \bb{L}^2_B$, if and
only if, there is a function $f_A\in L^2(\R^2)$ such that
\begin{equation}\label{eq:bg1}
	(A\varphi)(x)\;=\;\frac{1}{2\pi \ell_B^{2}} \int_{\mathbb{R}^{2}}\dd y\; f_A(y-x)\;\Phi_B(x,y)\;\varphi(y)\;,
	\;
	\quad \forall\;\varphi \in L^2(\R^2)\;
\end{equation}
where the function
$$
	\Phi_B(x,y)\;:=\;\expo{\ii\frac{x_{1}y_{2} -  x_{2}y_{1}}{2\ell_{B}^{2}}}\;,\qquad x,y\in\R^2
$$
is known as \emph{magnetic  2-cocycle}.
The relation between the integral kernel $f_A$ and the sequence
$\{a_{j,k}\}\in \ell^2(\N_0^2)$ which identifies the expansion of $A$ in the basis $\Upsilon_{j\mapsto k}$ is given by
\begin{equation}\label{eq:bg2}
	f_A\;=\;\sqrt{2\pi}\,\ell_B\sum_{(j,k)\in\N_0^2}(-1)^{j-k}a_{j,k}\;\psi_{k,j}\;
\end{equation}
and the norm bound $\sqrt{2\pi}\ell_B\|A\|\leqslant\|f_A\|_{L^2}$ holds true.
A similar result holds for $\bb{S}_B$, namely  $A\in \bb{S}_B$, if and only if, there is a Schwarz function $f_A\in S(\R^2)$ such that
$A$ has an integral representation of the type~\eqref{eq:bg1}
and the relation between $A$ and its kernel is given again by \eqref{eq:bg2}. In addition, $\bb{S}_B$ has the structure of  a
Fréchet pre-$C^*$-algebra of $\bb{C}_B$~\cite[Proposition 2.8 \& Proposition 2.14]{denittis-sandoval-00}. Behind the integral representation \eqref{eq:bg1} there is the fact that $\bb{C}_B$ is nothing more than the \emph{group $C^*$-algebra} of $\R^2$ \emph{twisted} by the cocycle  $\Phi_B$ (\cf~\cite[Section 2.2]{denittis-sandoval-00} and references therein).

\medskip

As discussed in~\cite[Section 2.6]{denittis-sandoval-00}, one can  endow
the von Neumann algebra $\bb{M}_B$ with a remarkable \emph{normal}, \emph{faithful} and \emph{semi-finite} (NFS) trace  $\fint_B$ defined
on the ideal $\bb{I}_B$, which is uniquely specified by the prescription
\begin{equation}\label{eq:trac_prod}
	{\fint_B}(A^*B)\;:=\frac{1}{2\pi\ell_B^2}\langle f_A,f_B\rangle_{L^2}\;,\qquad \forall\; A,B\in\bb{L}^2_B
\end{equation}
where $\langle\;,\;\rangle_{L^2}$ is the usual inner product in $L^2(\R^2)$ and $f_A,f_B\in L^2(\R^2)$ are the integral kernels of $A$ and $B$ respectively, as given by the prescription \eqref{eq:bg2}. The computation of the trace $\fint_B$ on elements of the domain $\bb{I}_B$ is facilitated by observing that every $S\in \bb{I}_B$ has an integral kernel
of type \eqref{eq:bg2} which satisfies $f_S\in L^2(\R^2)\cap C_0(\R^2)$, where
$C_0(\R^2)$ is the space of continuous functions which vanish at infinity. On
these elements the trace can be computed as $\fint_B(S)= f_S(0)$~\cite[Corollary 2.22]{denittis-sandoval-00}. The trace $\fint_B$ has the physical meaning of a thermodynamic limit. Indeed, one can prove that~\cite[Lemma 2.23]{denittis-sandoval-00}
\begin{equation}\label{eq:intr_10}
	\fint_B(S)\;=\;2\pi\ell_B^2\;\lim_{n\to+\infty} \frac{1}{|\Lambda_n|}{\Tr}_{L^2(\R^2)}( \chi_{\Lambda_n}S \chi_{\Lambda_n} )\;,\qquad S\in \bb{I}_B
\end{equation}
where the family $\{\Lambda_n\}$ provides an
increasing sequence of compact subsets $\Lambda_n \subseteq \R^2$ such that $\Lambda_n\nearrow\R^2$ and which satisfies the \emph{F{\o}lner condition} (see \eg~\cite{greenleaf-69} for more details),
$|\Lambda_n|$ is the Lebesgue measure of $\Lambda_n$ and $\chi_{\Lambda_n}$ is the projection defined as the multiplication operator by the characteristic function of $\Lambda_n$.
The expression on the right-hand side of \eqref{eq:intr_10} is known as \emph{trace per unit of volume}.

\medskip

The magnetic algebra $\bb{C}_B$ admits a pair of unbounded \emph{spatial derivations} which can be initially defined on the pre-$C^*$-algebra $\bb{S}_B$ by the commutators
\begin{equation}\label{eq:commut_deriv_01_intr}
	\nabla_j A\;:=\;-\ii [x_j,A],\qquad j=1,2,\quad A\in \bb{S}_B,
\end{equation}
where $x_j$ are the position operators on $L^2(\R^2)$. By closing with respect to suitable Fréchet-type norms one can define the Banach spaces $C^N(\bb{C}_B)$ of $N$-times differentiable elements (\cf~\cite[Section 2.8]{denittis-sandoval-00}). Remarkably, one has that $\bb{S}_B\subset C^\infty(\bb{C}_B)$  is made by  \emph{smooth elements}, namely by elements  which can be derived an indefinite number of times.

\medskip

The $K$-theory of  $\bb{C}_B$ is quite simple to compute. From
\cite[Proposition 2.11]{denittis-sandoval-00} we know that there is an isomorphism of $C^*$-algebras $\bb{C}_B\simeq \bb{K}$ where $\bb{K}$ is the $C^*$-algebra of compact operators. Since the $K$-theory is invariant under $C^*$-isomorphisms one immediately gets
$K_0(\bb{C}_B)\simeq\Z$ and $K_1(\bb{C}_B)=0$. A more precise description of the $K_0$-group is given by.
$$
	K_0(\bb{C}_B)\;\simeq\;K_0(\bb{S}_B)\;=\;\Z[\Pi_0]\;.
$$
The first isomorphism is justified by the fact that $\bb{S}_B$ is a pre-$C^*$-algebra of $\bb{C}_B$~\cite[Theorem 3.44]{gracia-varilly-figueroa-01} and the last equality follows by an inspection of the isomorphism $\bb{C}_B\simeq \bb{K}$. It is worth noting that since $\Pi_0\in \bb{S}_B$ then the $K$-theory of   $\bb{C}_B$ is realized inside $\bb{S}_B$. Moreover, since all the Landau projections are equivalent (in the sense of von Neumann)~\cite[Lemma 5]{bellissard-elst-schulz-baldes-94}
one has that $[\Pi_0]=[\Pi_j]$ for every $j\in\N_0$.

\medskip

The trace $\fint_B$ is a cyclic $0$-cocycle of the algebra $\bb{S}_B$ and so it defines a class $[\fint_B]\in HC^{\rm even}(\bb{S}_B)$
in the even cyclic cohomology of $\bb{S}_B$ (see Appendix \ref{app:cyc_cohom}).
Given the canonical pairing $\langle\;,\;\rangle:HC^{\rm even}(\bb{S}_B)\times K_0(\bb{S}_B)\to\C$
between the even cyclic cohomology and the even $K$-theory one can define the map
$$
	gl_B([P])\;:=\;\langle{\textstyle [\fint_B]},[P]\rangle\;=\;\fint_B(P)\;,\qquad [P]\in K_0(\bb{S}_B)
$$
where   $P\in \bb{S}_B$ is any representative of the class $[P]$
in view of the fact that the $K$-theory is entirely realized inside the algebra.
The map $gl_B$ is known as the \emph{gap labeling function}~\cite{bellissard-86,bellissard-93}, and in our specific case, it provides the group isomorphism
\begin{equation}\label{eq:BM_01}
	gl_B\;:\;K_0(\bb{S}_B)\;\stackrel{\simeq}{\longrightarrow}\;\Z
\end{equation}
generated by $\fint_B(\Pi_0)=1$~\cite[eq. (2.22)]{denittis-sandoval-00}. It is
worth mentioning that the last result is a special case of~\cite[Theorem
	2.2]{xia-88} when the hull of the potentials collapses to a singleton due to the circumstance that we are considering no
electrostatic interactions.

\medskip

By combining the trace   $\fint_B$ and the derivations $\nabla_j$ one gets the cyclic $2$-cocycle $\Psi_B$ defined by
\begin{equation}\label{eq:2-cocy-Psi}
	\Psi_B(A_0,A_1,A_2)\;:=\;\fint_B\big(A_0(\nabla_1A_1\nabla_2A_2-\nabla_2A_1\nabla_1A_2)\big)\;,
\end{equation}
for every $A_0,A_1,A_2\in \bb{S}_B$.
This provides a second class $[\Psi_B]\in HC^{\rm even}(\bb{S}_B)$ and a second formula for  the canonical pairing with the $K$-theory defined by
\begin{equation}\label{eq:BM_020}
	c_{B}([P])\;:=\;\frac{\ii}{\ell^2_B}\langle[\Psi_B],[P]\rangle\;=\;\frac{\ii}{\ell^2_B}\Psi_B(P,P,P)\;,\qquad [P]\in K_0(\bb{S}_B)
\end{equation}
where $P\in \bb{S}_B$ is any representative of  $[P]$ inside the algebra. The map $c_{B}$ provides the \emph{Chern number} of the class $[P]$ (or of the projection $P$ with a little abuse of terminology) and defines a second
group isomorphism
\begin{equation}\label{eq:BM_02}
	c_{B}\;:\;K_0(\bb{S}_B)\;\stackrel{\simeq}{\longrightarrow}\;\Z
\end{equation}
generated by $c_B(\Pi_0)=1$~\cite[Section 3.7]{denittis-gomi-moscolari-19}. Again, the integrality of the map $c_B$
above can be seen as a special case of~\cite[Theorem 3.3]{xia-88} when the hull of the potentials collapses to a singleton. In view of $HC^{\rm even}(\bb{S}_B)\simeq \Z$ (Lemma \ref{lemma:comp_cohom}) one infers that $[\fint_B]=[\Psi_B]$
and therefore one has the equality
\begin{equation}\label{eq:str_for}
	gl_B([P])\;=\;c_{B}([P])\;,\qquad\forall\; [P]\in K_0(\bb{S}_B)\;.
\end{equation}

\medskip

The maps \eqref{eq:BM_01} and \eqref{eq:BM_02} have  important physical manings in the context of the geometric interpretation of the Quantum Hall Effect~\cite{bellissard-86,xia-88,bellissard-elst-schulz-baldes-94}.
Let $H$ be a possibly unbounded self-adjoint operator affiliated to $\bb{M}_B$. Assume that the spectrum of $H$ is bounded from below,
and for  every (Fermi) energy $E\in\rho(H)$ in the resolvent set of $H$, the spectral projection $P_E:=\chi_{(-\infty,E)}(H)$ lies in the pre-$C^*$ algebra $\bb{S}_B$. In this case
\begin{equation}\label{eq:int_idos}
	N_H(E)\;:=\; \frac{1}{2\pi\ell^2_B}\;gl_B([P_E])
\end{equation}
provides the \emph{integrated density of states} of $H$ inside the spectral gap detected by $E$~\cite{veselic-08} and
\begin{equation}\label{eq:int_hall_cond}
	\sigma_H(E)\;:=\;\frac{e^2}{2\pi \hslash}\;c_B([P_E])
\end{equation}
is the \emph{Hall conductance} associated to  the energy spectrum of $H$ below the (Fermi) energy $E$ (the prefactor has the physical units of a conductance).
For instance, the results above apply to the Landau Hamiltonian $H_B$ given by \eqref{eq:int_LH} since the Landau projections $\Pi_j$ are in  $\bb{S}_B$. In this context the equality \eqref{eq:str_for} is known as \emph{Str\v{e}da formula}.

\subsection{New results}\label{int_new}
The  main novelty of this work is to reformulate the results presented in the previous section, and in particular the integrality of the maps \eqref{eq:BM_01}  and \eqref{eq:BM_02}, in the context of the geometry of the \emph{magnetic spectral triple} $(\bb{S}_B,\s{H}_4, D_B)$ introduced in~\cite{denittis-sandoval-00}. The latter is defined by the Hilbert space
\begin{equation}\label{eq:H4}
	\s{H}_4\;:=\;L^2(\R^2)\;\otimes\;\C^4,
\end{equation}
on which the von Neumann algebra $\bb{M}_B$, along with each of its subalgebras like $\bb{S}_B$, are represented diagonally, \ie
$$
	\pi\;:\;A\;\longmapsto\;A\otimes{\bf 1}_4\;=\;\left(\begin{array}{c c c c}
			A & 0 & 0 & 0 \\
			0 & A & 0 & 0 \\
			0 & 0 & A & 0 \\
			0 & 0 & 0 & A
		\end{array}
	\right)\;,\qquad A\in  \bb{M}_B\;.
$$
The \emph{magnetic Dirac operator} is defined by
\begin{equation}\label{eq:intro_D}
	D_B\;:=\;\frac{1}{\sqrt{2}}\big(K_1\;\otimes\;\gamma_1\;+\;K_2\;\otimes\;\gamma_2\;+\;G_1\;\otimes\;\gamma_3\;+\;G_2\;\otimes\;\gamma_4\big)
\end{equation}
where $K_1$ and $K_2$ are the  {magnetic momenta} \eqref{eq:momenta}, and
$G_1$ and $G_2$ are the \emph{dual magnetic momenta} given by
\begin{equation}\label{eq:dual-momenta}
	G_{1}\; =\; -\ii \ell_{B}\frac{\partial}{\partial x_{2}}  - \frac{1}{2 \ell_{B}} x_{1}\;,\qquad
	G_{2}\; =\; -\ii \ell_{B}\frac{\partial}{\partial x_{1}}  + \frac{1}{2 \ell_{B}} x_{2}
\end{equation}
and $\gamma_1,\ldots,\gamma_4$ is any set of Hermitian $4\times 4$ matrices which satisfy the fundamental anti-commutation relations of the
Clifford algebra $C\ell_4(\C)$. Without loss of generality  will fix the following convenient choice\footnote{It is worth noting that the definition of the $\gamma$-matrices differs from that in \cite[p. 31]{denittis-sandoval-00}. However the two set of $\gamma$-matrices  are related by the unitary involution
	$$
		I\; :=\; \left(
		\begin{array}{c c c c}
				0 & 0 & 0 & 1 \\
				0 & 1 & 0 & 0 \\
				0 & 0 & 1 & 0 \\
				1 & 0 & 0 & 0
			\end{array}
		\right)\;.
	$$}:
\begin{equation}\label{eq:gamm_mat}
	\begin{aligned}
		\gamma_{1}\; & :=\; \left(
		\begin{array}{c c c c}
				0 & 0 & 0 & 1 \\
				0 & 0 & 1 & 0 \\
				0 & 1 & 0 & 0 \\
				1 & 0 & 0 & 0
			\end{array}
		\right)\;,\qquad
		             &              & \gamma_{2}\;: =\; \left(\begin{array}{c c c c}
				0    & 0    & 0   & \ii \\
				0    & 0    & \ii & 0   \\
				0    & -\ii & 0   & 0   \\
				-\ii & 0    & 0   & 0
			\end{array}
		\right)\;,
		\\
		\gamma_{3}\; & : =\; \left(
		\begin{array}{c c c c}
				0  & -1 & 0 & 0 \\
				-1 & 0  & 0 & 0 \\
				0  & 0  & 0 & 1 \\
				0  & 0  & 1 & 0
			\end{array}
		\right)\;,\qquad
		             &              & \gamma_{4}\;: =\; \left(\begin{array}{c c c c}
				0   & -\ii & 0   & 0    \\
				\ii & 0    & 0   & 0    \\
				0   & 0    & 0   & -\ii \\
				0   & 0    & \ii & 0
			\end{array}
		\right)\;.
	\end{aligned}
\end{equation}
The {magnetic Dirac operator} is essentially self-adjoint on the dense domain $S(\R^2)\otimes\C^4$ and has compact resolvent~\cite[Proposition 3.1]{denittis-sandoval-00}.
By a straightforward computation one gets
\[
	D_B^2\;:=\;\left(\begin{array}{c c c c}
			Q_B & 0   & 0   & 0   \\
			0   & Q_B & 0   & 0   \\
			0   & 0   & Q_B & 0   \\
			0   & 0   & 0   & Q_B
		\end{array}
	\right)\;+\;\left(\begin{array}{c c c c}
			-{\bf 1} & 0 & 0       & 0 \\
			0        & 0 & 0       & 0 \\
			0        & 0 & {\bf 1} & 0 \\
			0        & 0 & 0       & 0
		\end{array}
	\right)\;
\]
where the operator
\begin{equation}\label{eq:harm_osc}
	Q_{B}\;: =\;  \frac{1}{2}\left(K_{1}^{2} + K_{2}^{2} + G_{1}^{2} + G_{2}^{2}\right)
\end{equation}
is the
two-dimensional isotropic \emph{harmonic oscillator} on $L^2(\R^2)$. The latter is diagonalized on the
Laguerre
basis
according to
\[
	Q_B \psi_{n,m}\;=\;(n+m+1)\ \psi_{n,m},\qquad\quad\;  (n,m)\in\N^2_0\;.
\]
As a consequence $Q_B$ has a pure point positive spectrum
with eigenvalues $\lambda_j:=j+1$, $j\in\N_0$, of finite {multiplicity} $\text{Mult}[\lambda_j]=j+1$.
The operator $D_B^2$ has a simple zero eigenvalue and therefore it is not invertible. For this reason  we need to introduce the regularized inverse powers
\begin{equation}\label{eq:reg-inv}
	{|D_{B,\varepsilon}|^{-s}}\;:=\;\left(D_B^2+\varepsilon{\bf 1}\right)^{-\frac{s}{2}}\;,\qquad \varepsilon>0\;,\quad s\geqslant 1\;.
\end{equation}

\medskip

The last ingredient we need to describe our first result is the \emph{Dixmier trace} $\Tr_{\rm Dix}$. There are several standard references
for the theory of the Dixmier trace, like~\cite[Chap.~4, Sect.~2]{connes-94},
~\cite[Appendix A]{connes-moscovici-95}, \cite[Sect.~7.5 and
	App.~7.C]{gracia-varilly-figueroa-01}, \cite{lord-sukochev-zanin-12},
\cite{alberti-matthes-02}, and we will refer to these sources  for the
construction and the properties of the Dixmier trace. A brief summary of the
most relevant information can be found in~\cite[Appendix
	B]{denittis-gomi-moscolari-19}. Here, we will  fix just  few notations (see also Appendix \ref{app_w-lp-dual}). The domain of definition of the Dixmier trace, called the \emph{Dixmier ideal}, will be denoted with $\rr{S}^{1^+}$.
The ideal  $\rr{S}^{1^+}_{0}\subset\rr{S}^{1^+}$ is the closure of the finite-rank operators in the norm of $\rr{S}^{1^+}$ and every Dixmier trace vanishes on $\rr{S}^{1^+}_{0}$.
The closed subspace of \emph{measurable elements} (those for which the Dixmier trace does not depend on the choice of scale-invariant generalized limit) will be denoted with $\rr{S}^{1^+}_{\rm m}$. Clearly $\rr{S}^{1^+}_{0}\subset \rr{S}^{1^+}_{\rm m}$. As proved in~\cite[Proposition 2.25]{denittis-sandoval-00},  one has that $|D_{B,\varepsilon}|^{-4}\in \rr{S}^{1^+}_{\rm m}$ and $\Tr_{\rm Dix}(|D_{B,\varepsilon}|^{-4})=2$.
However, this integrability property changes considerably when the quantity \eqref{eq:reg-inv} is \virg{dressed} with suitable elements of the magnetic $C^*$-algebra. Indeed from
~\cite[Proposition 2.27]{denittis-sandoval-00} one obtains that
\begin{equation}\label{eq:meas_prop_trip}
	|D_{B,\varepsilon}|^{-2}\;\pi(A)\;\in\; \rr{S}^{1^+}_{\rm m}\;,\qquad \forall\; A\in\bb{L}^1_B\;.
\end{equation}
Let us introduce the \emph{noncommutative integral} (a la Connes)
\begin{equation}\label{eq:meas_prop_trip_02}
	\rr{Int}_B(A)\;:=\;\frac{1}{4}\Tr_{\rm Dix}\left(|D_{B,\varepsilon}|^{-2}\;\pi(A)\right)\;.
\end{equation}
Then it holds true that~\cite[eq. (3.4)]{denittis-sandoval-00}
\begin{equation}\label{eq:meas_prop_trip_03}
	\rr{Int}_B(A)\;=\; \fint_B(A)\;,\qquad \forall\; A\in\bb{L}^1_B\;.
\end{equation}
Equalities \eqref{eq:meas_prop_trip_03} and~\eqref{eq:intr_10} also provide the proportionality constant between the {noncommutative integral} $\rr{Int}_B$ and the trace per unit of volume. Since $\bb{S}_B\subset \bb{L}^1_B$, one infers from \eqref{eq:meas_prop_trip} that the
	{magnetic spectral triple} $(\bb{S}_B,\s{H}_4, D_B)$ has \emph{spectral dimension} 2 as discussed in~\cite[Theorem 3.6]{denittis-sandoval-00}.

\medskip

Interestingly, the equality established by \eqref{eq:meas_prop_trip_02}, along with the properties of $\fint_B$, can be used to deduce that the {noncommutative integral}  $\rr{Int}_B$, as defined by \eqref{eq:meas_prop_trip_02}, is a
$0$-cocycle of the algebra $\bb{S}_B$ which provides a different representative for the class $[\fint_B]\in HC^{\rm even}(\bb{S}_B)$.
This fact provides a new way of computing the  gap labeling function in \eqref{eq:BM_01} via the noncommutative integral of the spectral triple $(\bb{S}_B,\s{H}_4, D_B)$.
\begin{theorem}[Gap labeling]\label{teo:01}
	Let $gl_B:K_0(\bb{S}_B)\to\Z$ be the gap labeling isomorphism defined in \eqref{eq:BM_01}. Then, it holds true that
	\begin{equation}
		\label{eq:teo-01}
		gl_B([P])\;=\;\rr{Int}_B(P)\;,\qquad [P]\in K_0(\bb{S}_B)\;
	\end{equation}
	where   $P\in \bb{S}_B$ is any representative of the class $[P]$.
\end{theorem}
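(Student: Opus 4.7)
The plan is to reduce the claim to the already established identity $\rr{Int}_B(A)=\fint_B(A)$ valid on $\bb{L}^1_B$ (equation~\eqref{eq:meas_prop_trip_03}), together with the observation that $K_0(\bb{S}_B)$ is realized by projections living inside $\bb{S}_B\subset \bb{L}^1_B$. Given how much preparatory material has been developed in the companion paper~\cite{denittis-sandoval-00}, the present theorem is essentially a restatement of that identity at the level of $K$-theory classes.

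Concretely, I would fix $[P]\in K_0(\bb{S}_B)$ and pick a representative $P\in\bb{S}_B$; such a representative exists by the discussion following~\eqref{eq:equal_C-Ups}, which identifies $K_0(\bb{S}_B)=\Z[\Pi_0]$ with $\Pi_0\in\bb{S}_B$. The definition of the gap labeling map gives $gl_B([P])=\fint_B(P)$, while the chain of inclusions $\bb{S}_B\subset \bb{L}^1_B$ allows me to apply \eqref{eq:meas_prop_trip_03} to the element $P$, yielding $\rr{Int}_B(P)=\fint_B(P)$. Chaining the two equalities proves~\eqref{eq:teo-01}.

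One small point to check is that the right-hand side of~\eqref{eq:teo-01} depends only on the class $[P]$ and not on the chosen representative. This is automatic: by~\eqref{eq:meas_prop_trip_03} the functional $\rr{Int}_B$ restricted to $\bb{S}_B$ coincides with the trace $\fint_B$, hence it is itself a cyclic $0$-cocycle on $\bb{S}_B$ and the canonical pairing $HC^{\rm even}(\bb{S}_B)\times K_0(\bb{S}_B)\to \C$ renders $\rr{Int}_B(P)$ a well defined function of $[P]$. The same identity gives the cohomological statement $[\rr{Int}_B]=[\fint_B]$ in $HC^{\rm even}(\bb{S}_B)$, which is really the content of the theorem.

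I do not anticipate any genuine obstacle in this argument: the analytic heavy lifting — the measurability of $|D_{B,\varepsilon}|^{-2}\pi(A)$ in the Dixmier ideal $\rr{S}^{1^+}_{\rm m}$ and the explicit identification of its Dixmier trace with $\fint_B(A)$ for every $A\in\bb{L}^1_B$ — has already been carried out in~\cite{denittis-sandoval-00}. Only the $K$-theoretic bookkeeping has to be assembled here. As a sanity check one may verify the formula on the generator of $K_0(\bb{S}_B)$, obtaining $\rr{Int}_B(\Pi_0)=\fint_B(\Pi_0)=1=gl_B([\Pi_0])$, which by $\Z$-linearity of both sides already forces the equality on the whole $K_0$-group.
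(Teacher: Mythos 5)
Your argument is correct and is essentially the paper's own: the theorem is stated as an immediate consequence of the identity $\rr{Int}_B(A)=\fint_B(A)$ on $\bb{L}^1_B\supset\bb{S}_B$ from \eqref{eq:meas_prop_trip_03}, combined with the definition $gl_B([P])=\fint_B(P)$ and the fact that $K_0(\bb{S}_B)=\Z[\Pi_0]$ is realized by projections inside $\bb{S}_B$. Your additional check that $\rr{Int}_B(P)$ depends only on $[P]$ (via $[\rr{Int}_B]=[\fint_B]$ in $HC^{\rm even}(\bb{S}_B)$) matches the remark the paper makes just before stating the theorem.
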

For the description of the  second main result  we need the operator
\begin{equation}\label{eq:inv_Gamma}
	\Gamma\;:=\;{\bf 1}\otimes \ii\gamma_1\gamma_2\;=\;
	\left(\begin{array}{c c c c}
			{\bf 1} & 0       & 0        & 0        \\
			0       & {\bf 1} & 0        & 0        \\
			0       & 0       & -{\bf 1} & 0        \\
			0       & 0       & 0        & -{\bf 1}
		\end{array}
	\right)\;.
\end{equation}
This is a self-adjoint involution, \ie\ $\Gamma=\Gamma^*=\Gamma^{-1}$. By combining the Dirac operator $D_B$ and the involution $\Gamma$ one can define the expression
\begin{equation}\label{eq:2-cocy-Psi-Dix}
	\rr{Ch}_B(A_0,A_1,A_2)\;:=\;2\;\rr{Int}_B\big(\Gamma\;\pi(A_0)\;[D_B,\pi(A_1)]\;[D_B,\pi(A_2)]\big)\;,
\end{equation}
where $\rr{Int}_B$ is given by \eqref{eq:meas_prop_trip_02}. It turns out that $\rr{Ch}_B$ is well-defined on every triple
$A_0,A_1,A_2\in \bb{S}_B$. More precisely, one has that:
\begin{lemma}[Second Connes' formula\footnote{The name \emph{second Connes' formula} is borrowed from~\cite[Theorem 10]{bellissard-elst-schulz-baldes-94}. It is worth to point out that the  \emph{first Connes' formula} for the magnetic spectral triple has been proved in~\cite{denittis-sandoval-00}.}
		- version 1]\label{lem:cyc2cocy_Dix}
	It holds true that
	\begin{equation}
		\label{eq:Second-connes-formula-I}
		\rr{Ch}_B(A_0,A_1,A_2)\;=\;\frac{\ii}{\ell_B^2}\; \Psi_B(A_0,A_1,A_2)\;,\qquad\forall\; A_0,A_1,A_2\in \bb{S}_B
	\end{equation}
	with $\Psi_B$ given by~\eqref{eq:2-cocy-Psi}. As a consequence,
	$\rr{Ch}_B$ is a cyclic $2$-cocycle of $\bb{S}_B$.
\end{lemma}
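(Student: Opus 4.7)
The strategy is an explicit computation, completed by invoking the already established identity $\rr{Int}_B=\fint_B$ on $\bb{L}^1_B$ (eq.~\eqref{eq:meas_prop_trip_03}). As a first step I would simplify $[D_B,\pi(A)]$ for $A\in\bb{S}_B$. Since $\pi(A)=A\otimes\mathbf{1}_4$ commutes with every $\gamma_j$, only the commutators $[K_j,A]$ and $[G_j,A]$ remain. The crucial input is that elements of $\bb{S}_B$ commute with the dual momenta $G_1,G_2$: each generator $\Upsilon_{j\mapsto k}$ acts on the first index of $\psi_{n,m}$ while $G_1,G_2$ act on the second, so $[G_\ell,\Upsilon_{j\mapsto k}]=0$, and by linearity $[G_\ell,A]=0$. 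Combining this vanishing with the explicit forms \eqref{eq:momenta}--\eqref{eq:dual-momenta} lets me trade partial derivatives for position commutators and obtain
\begin{equation*}
[D_B,\pi(A)]\;=\;\frac{\ii}{\sqrt{2}\,\ell_B}\bigl(-\nabla_2A\otimes\gamma_1+\nabla_1A\otimes\gamma_2\bigr).
\end{equation*}

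The second step is to expand $[D_B,\pi(A_1)][D_B,\pi(A_2)]$ using $\gamma_1^2=\gamma_2^2=\mathbf{1}$ and $\gamma_1\gamma_2=-\gamma_2\gamma_1$. The product splits into a \emph{symmetric} part proportional to $(\nabla_1A_1\nabla_1A_2+\nabla_2A_1\nabla_2A_2)\otimes\mathbf{1}_4$ and an \emph{antisymmetric} part proportional to $(\nabla_1A_1\nabla_2A_2-\nabla_2A_1\nabla_1A_2)\otimes\gamma_1\gamma_2$. Premultiplying by $\Gamma\,\pi(A_0)=\pi(A_0)(\mathbf{1}\otimes\ii\gamma_1\gamma_2)$ and using $(\gamma_1\gamma_2)^2=-\mathbf{1}$ swaps the two Clifford factors: the antisymmetric combination---exactly the object appearing in $\Psi_B$---ends up tensored with $\mathbf{1}_4$, while the symmetric one is tensored with $\ii\gamma_1\gamma_2$. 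Applying $\rr{Int}_B$, extended as $\tfrac{1}{4}\Tr_{\rm Dix}\bigl(|D_{B,\varepsilon}|^{-2}\,\cdot\,\bigr)$ to operators on $\s{H}_4$, splits the Dixmier trace along the four diagonal blocks of
\begin{equation*}
|D_{B,\varepsilon}|^{-2}\;=\;\mathrm{diag}\bigl((Q_B+\varepsilon-1)^{-1},\,(Q_B+\varepsilon)^{-1},\,(Q_B+\varepsilon+1)^{-1},\,(Q_B+\varepsilon)^{-1}\bigr).
\end{equation*}
The $\mathbf{1}_4$-piece is precisely the situation of \eqref{eq:meas_prop_trip_02}--\eqref{eq:meas_prop_trip_03} applied to $A_0(\nabla_1A_1\nabla_2A_2-\nabla_2A_1\nabla_1A_2)\in\bb{S}_B\subset\bb{L}^1_B$; bookkeeping the constants, together with the factor of $2$ in the definition of $\rr{Ch}_B$, produces exactly $(\ii/\ell_B^2)\,\Psi_B(A_0,A_1,A_2)$.

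The main technical point is the vanishing of the $\gamma_1\gamma_2$-tensored symmetric piece. In our conventions $\gamma_1\gamma_2=\mathrm{diag}(-\ii,-\ii,\ii,\ii)$, so the four block Dixmier traces arrive with prefactors $(-\ii,-\ii,\ii,\ii)$ and act on the same element $C:=A_0(\nabla_1A_1\nabla_1A_2+\nabla_2A_1\nabla_2A_2)\in\bb{S}_B$ through the shifted resolvents $(Q_B+c)^{-1}$ with $c\in\{\varepsilon-1,\varepsilon,\varepsilon+1\}$. The resolvent difference
\begin{equation*}
(Q_B+c)^{-1}-(Q_B+c')^{-1}\;=\;(c'-c)\,(Q_B+c)^{-1}(Q_B+c')^{-1}
\end{equation*}
decays like $Q_B^{-2}$, so after multiplication by the bounded operator $C$ it lies in the trace class and hence in $\rr{S}^{1^+}_0$, where every Dixmier trace vanishes. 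Consequently all four Dixmier traces coincide, the prefactors sum to zero, and the $\gamma_1\gamma_2$ contribution drops out. This completes \eqref{eq:Second-connes-formula-I}; the cocycle property of $\rr{Ch}_B$ is then inherited from the known cocycle property of $\Psi_B$.
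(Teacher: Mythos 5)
Your algebraic skeleton coincides with the paper's: the same reduction $[D_B,\pi(A)]=[D_{B,-},\pi(A)]$ via $[G_\ell,A]=0$, the same splitting of $[D_B,\pi(A_1)][D_B,\pi(A_2)]$ into a symmetric part tensored with ${\bf 1}_4$ and the antisymmetric part tensored with $\gamma_1\gamma_2$, and the same use of $\rr{Int}_B=\fint_B$ on $\bb{L}^1_B$ to identify the surviving term with $\frac{\ii}{\ell_B^2}\Psi_B$. The constants also check out.

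There is, however, a genuine gap in your argument for the vanishing of the $\gamma_1\gamma_2$-tensored piece. You claim that $\bigl[(Q_B+c)^{-1}-(Q_B+c')^{-1}\bigr]C=(c'-c)(Q_B+c)^{-1}(Q_B+c')^{-1}C$ is trace class because it ``decays like $Q_B^{-2}$'' and $C$ is bounded. This is false: the eigenvalue $(j+1)^{-2}$ of $Q_B^{-2}$ carries multiplicity $j+1$, so the singular values of $Q_B^{-2}$ behave like $\mu_m\propto (m+1)^{-1}$ and $\Tr(Q_B^{-2})=\sum_j(j+1)^{-1}=+\infty$. Indeed $Q_B^{-2}$ sits exactly on the Dixmier borderline --- this is the content of $\Tr_{\rm Dix}(|D_{B,\varepsilon}|^{-4})=2\neq 0$ --- so multiplying by a merely bounded $C$ cannot land you in $\rr{S}^{1}$, nor even in ${\rm Ker}(\Tr_{\rm Dix})$. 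The conclusion you want (that the four block Dixmier traces coincide, so the prefactors $(-\ii,-\ii,\ii,\ii)$ sum to zero) is true, but its proof must exploit that $C=A_0\delta_0(A_1,A_2)$ lies in $\bb{S}_B\subset\bb{L}^1_B$ and not just in $\bb{B}(L^2(\R^2))$: this is precisely the $\xi$-independence of $\Tr_{\rm Dix}\bigl((Q_B+\xi{\bf 1})^{-1}A\bigr)$ for $A\in\bb{L}^1_B$ established in~\cite[Proposition 2.27]{denittis-sandoval-00} (see also Lemma \ref{lemma:help_ideal}, where the relevant differences $D_{\varepsilon,\varepsilon'}(A)$ are shown to lie in $\rr{S}^{1^+}_0$ by controlling the kernel of $A$, not just its norm), and it is the route the paper takes, phrased there as the factorization of the Dixmier trace over $\C^4$ with $\Tr_{\C^4}(\ii\gamma_1\gamma_2)=0$. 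With that substitution your proof is complete; as written, the key vanishing step does not hold.
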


The proof of Lemma \ref{lem:cyc2cocy_Dix} relies on a direct computation and the details are postponed to Section \ref{sec_prop_intro_1}. As a direct consequence of Lemma \ref{lem:cyc2cocy_Dix}, one gets that $\rr{Ch}_B$
provides a different representative for the class $[\Psi_B]\in HC^{\rm even}(\bb{S}_B)$, up to the \emph{right} constant.
In view of this observation, one can compute the {Chern number map} \eqref{eq:BM_02}  by using directly the cocycle $\rr{Ch}_B$.
\begin{theorem}[Chern number map]\label{teo:02}
	Let $c_B:K_0(\bb{S}_B)\to\Z$ be the  isomorphism defined in \eqref{eq:BM_02}. Then, it holds true that
	\begin{equation}
		\label{eq:teo-02}
		c_B([P])\;=\; \rr{Ch}_B(P,P,P)\;,\qquad [P]\in K_0(\bb{S}_B)\;
	\end{equation}
	where   $P\in \bb{S}_B$ is any representative of the class $[P]$.
\end{theorem}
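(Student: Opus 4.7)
The proof of Theorem \ref{teo:02} is essentially an immediate corollary of Lemma \ref{lem:cyc2cocy_Dix} together with the definition \eqref{eq:BM_020} of the Chern number map. Since, as recalled in the background section, the $K$-theory of $\bb{C}_B$ is entirely realized inside the pre-$C^*$-algebra $\bb{S}_B$ (so that every class $[P] \in K_0(\bb{S}_B)$ admits a genuine projection representative $P \in \bb{S}_B$), one may simply specialize Lemma \ref{lem:cyc2cocy_Dix} to the diagonal triple $(P,P,P)$:
$$
\rr{Ch}_B(P,P,P) \;=\; \frac{\ii}{\ell_B^2}\, \Psi_B(P,P,P)\;.
$$
Comparing with \eqref{eq:BM_020}, the right-hand side is exactly $c_B([P])$, which yields \eqref{eq:teo-02}.

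The only point that requires a brief comment is the well-posedness of the formula on K-theory classes. By Lemma \ref{lem:cyc2cocy_Dix} the trilinear form $\rr{Ch}_B$ is a cyclic 2-cocycle on $\bb{S}_B$, hence determines a class $[\rr{Ch}_B] \in HC^{\rm even}(\bb{S}_B)$, and the standard pairing $\langle [\rr{Ch}_B], [P] \rangle$ coincides (in the convention used in \eqref{eq:BM_020}) with $\rr{Ch}_B(P,P,P)$. This makes the right-hand side of \eqref{eq:teo-02} independent of the chosen projection representative of $[P]$, matching the known invariance of $c_B$. Moreover, the integrality $\rr{Ch}_B(P,P,P) \in \Z$ is automatic from \eqref{eq:BM_02} and can be double-checked on the generator, where $\rr{Ch}_B(\Pi_0,\Pi_0,\Pi_0) = c_B(\Pi_0) = 1$, so that $[\rr{Ch}_B]$ realizes the generator of $HC^{\rm even}(\bb{S}_B) \simeq \Z$.

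The genuine content of the argument is therefore entirely packed into Lemma \ref{lem:cyc2cocy_Dix}: once the identification between the Connes-type 2-cocycle built from $D_B$, $\Gamma$ and the Dixmier trace, and the spatial 2-cocycle $\Psi_B$ built from the derivations $\nabla_1, \nabla_2$ and the trace $\fint_B$, has been secured, Theorem \ref{teo:02} follows by a purely formal substitution. The main obstacle — to be faced in the proof of that lemma rather than here — will be to compute the commutators $[D_B, \pi(A)]$ explicitly in terms of $\nabla_j A$ and the $\gamma$-matrices, handle the grading $\Gamma$ correctly so that the \emph{symmetric} part of the product drops out and only the antisymmetric combination $\nabla_1 A_1 \nabla_2 A_2 - \nabla_2 A_1 \nabla_1 A_2$ survives, and finally use \eqref{eq:meas_prop_trip_03} to convert the resulting Dixmier trace into the NFS trace $\fint_B$.
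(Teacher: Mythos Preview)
Your proposal is correct and matches the paper's approach: the paper also presents Theorem \ref{teo:02} as a direct consequence of Lemma \ref{lem:cyc2cocy_Dix}, obtained by specializing the identity $\rr{Ch}_B=\frac{\ii}{\ell_B^2}\Psi_B$ to the triple $(P,P,P)$ and comparing with the definition \eqref{eq:BM_020} of $c_B$. Your additional remarks on well-posedness at the level of $K$-theory classes and on the content deferred to the proof of Lemma \ref{lem:cyc2cocy_Dix} are accurate and consistent with the paper's treatment.
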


It is worth to point out that the result contained in Theorem \ref{teo:02}  relates the topology of $\bb{S}_B$ with the
geometry of the spectral triple $(\bb{S}_B,\s{H}_4, D_B)$.

\begin{remark}[Involutions and topological triviality]\label{rk:triv_inv}
	By using the full set of $\gamma$ matrices one can construct the operator
	\begin{equation}\label{eq:inv_chi}
		\chi\;:=\;{\bf 1}\otimes \gamma_1\gamma_2\gamma_3\gamma_4\;=\;
		\left(\begin{array}{c c c c}
				-{\bf 1} & 0        & 0        & 0        \\
				0        & +{\bf 1} & 0        & 0        \\
				0        & 0        & -{\bf 1} & 0        \\
				0        & 0        & 0        & +{\bf 1}
			\end{array}
		\right)\;.
	\end{equation}
	Like $\Gamma$, this is also an involution, \ie\@ $\chi=\chi^*=\chi^{-1}$. Moreover, $\Gamma$
	anti-commutes with the Dirac operator, \ie\@ $\chi D_B=-D_B \chi$. This makes $(\bb{S}_B,\s{H}_4, D_B,\chi)$ an \emph{even} spectral triple~\cite[Section 3.1]{denittis-sandoval-00}. The latter property is not shared by the involution $\Gamma$. In fact, an easy calculation shows that
	$\Gamma D_B\neq-D_B \Gamma$ (see Appendix \ref{sec:q-sim-dir}). In view of this consideration, it would seem natural to consider the involution $\chi$ instead $\Gamma$ in the construction of the 2-cocycle \eqref{eq:2-cocy-Psi-Dix}. However, if one defines the quantity
	\begin{equation}\label{eq:2-cocy-Psi-Dix-triv}
		\widehat{\rr{Ch}_{B}}(A_0,A_1,A_2)\;:=\;2\;\rr{Int}_B\big(\chi\;\pi(A_0)\;[D_B,\pi(A_1)]\;[D_B,\pi(A_2)]\big)\;,
	\end{equation}
	then the argument  described in Remark \ref{rk:triv_inv_2}  provides
	\begin{equation}\label{eq:2-cocy-Psi-Dix-triv02}
		\widehat{\rr{Ch}_B}(P,P,P)\;=\;0\;,\qquad\forall\; P\in \bb{S}_B\;.
	\end{equation}
	The triviality expressed by equation \eqref{eq:2-cocy-Psi-Dix-triv02} has a deeper motivation. In fact the even spectral triple $(\bb{S}_B,\s{H}_4, D_B,\chi)$ turns out to be a representative of the trivial class in the KK-homology of $\bb{S}_B$~\cite{bourne-priv}. \hfill $\blacktriangleleft$
\end{remark}

Theorem \ref{teo:02} suggests the possibility of expressing the Chern number map  \eqref{eq:BM_02} inside the theory of the \emph{quantized calculus}~\cite[Chapter IV]{connes-94} associated with the magnetic spectral triple $(\bb{S}_B,\s{H}_4, D_B)$. However, as suggested by Remark \ref{rk:triv_inv} it is not the right choice to consider the latter as an even spectral triple with respect to the involution $\chi$. Moreover, definition \eqref{eq:2-cocy-Psi-Dix} shows that an important role is played by the involution $\Gamma$. All these reasons lead to develop the quantized calculus for the \emph{quasi-even} (\cf Definition \ref{def:quas-even}) magnetic spectral triple  $(\bb{S}_B,\s{H}_4, D_B, \Gamma)$. This will be done in full detail in Section \ref{sect:quan-cal}. In order to anticipate the main results let us introduce the
	{\em Dirac phase}
\begin{equation}\label{eq:dir_phas}
	F_{B,\varepsilon} \;:=\; \frac{D_{B}}{\lvert D_{B,\varepsilon} \rvert }\;,\qquad \varepsilon>0\;,
\end{equation}
the \emph{quasi-differential} (\cf Section \ref{sect:q-dif-str})
\begin{equation}\label{eq:dir_dif}
	\dd_B T\;:=\;[F_{B,\varepsilon},T] \;=\; F_{B,\varepsilon}T\;-\;TF_{B,\varepsilon}\;,
\end{equation}
which, in principle, is well-defined for every bounded operator $T\in\bb{B}(\s{H}_4)$,
and the \emph{compatible graded trace} (\cf Definition \ref{def:qg_tr})
\begin{equation}\label{eq:dir_trac}
	\rr{tr}_\Gamma(T)\;:=\;\Tr_{\rm Dix}\left(\Gamma T\right)\;,
\end{equation}
which is well-defined whenever $T\in \rr{S}^{1^+}_{\rm m}$
Then, it follows that the
	{compatible graded trace} $ \rr{tr}_\Gamma$  and the quasi-differential $\dd_B$  provide the constitutive elements of a \emph{quasi-cycle} of dimension 2   for the smooth magnetic algebra $\bb{S}_B$.
This concept will be clarified in full detail in Section \ref{sect:q-C}. Associated with the two-dimensional quasi-cycle there is a canonical
\emph{character} defined by
\begin{equation}\label{eq:ass_charact}
	{\tau}_{B,2}(A_0,A_1,A_2)\;:=\;\frac{1}{2}\;\mathfrak{tr}_\Gamma\big(\pi (A_0)\dd_B\pi(A_1)\dd_B\pi(A_2)\big)\;,
\end{equation}
which turns out to be well-defined for every $A_0,A_1,A_2\in \bb{S}_B$. Interestingly,
the character $ {\tau}_{B,2}$ identifies with $\Psi_B$ as showed
in the following  result, whose proof is postponed to Section \ref{sec_prop_intro_1}.
\begin{lemma}[Second Connes' formula - version 2]\label{lem:cyc2cocy_Dix-fredh}
	It holds true that
	\begin{equation}
		\label{eq:lem-cyc2cocy_Dix-fredh}
		\tau_{B,2}(A_0,A_1,A_2)\;=\;\frac{\ii}{\ell_B^2}\;\Psi_B(A_0,A_1,A_2)\;,\qquad\forall\; A_0,A_1,A_2\in \bb{S}_B
	\end{equation}
	with $\Psi_B$ given by \eqref{eq:2-cocy-Psi}. As a consequence,
	$ {\tau}_{B,2}$ is a cyclic $2$-cocycle of $\bb{S}_B$.
\end{lemma}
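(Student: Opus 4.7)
The approach is to reduce Lemma~\ref{lem:cyc2cocy_Dix-fredh} to the first version, Lemma~\ref{lem:cyc2cocy_Dix}, by establishing the pointwise identity $\tau_{B,2}(A_0,A_1,A_2)=\rr{Ch}_B(A_0,A_1,A_2)$ for every triple $A_0,A_1,A_2\in\bb{S}_B$. Once this is in hand, the stated identity \eqref{eq:lem-cyc2cocy_Dix-fredh} and the cyclic $2$-cocycle property of $\tau_{B,2}$ both follow at once from the corresponding statements for $\rr{Ch}_B$.

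Writing $F:=F_{B,\varepsilon}$, $|D|_\varepsilon:=|D_{B,\varepsilon}|$ and $P_j:=\pi(A_j)$, the starting algebraic identity is
\[
[F,P]\;=\;[D_B,P]\,|D|_\varepsilon^{-1}\;+\;D_B\,[|D|_\varepsilon^{-1},P],
\]
valid for every $A\in\bb{S}_B$ and obtained from the Leibniz rule together with the strong commutation of $D_B$ and $|D|_\varepsilon$. Using the resolvent-type identity $[|D|_\varepsilon^{-1},P]=-|D|_\varepsilon^{-1}[|D|_\varepsilon,P]|D|_\varepsilon^{-1}$, the second summand is rewritten as $-F\,[|D|_\varepsilon,P]\,|D|_\varepsilon^{-1}$, so that $\dd_BP$ splits into a principal piece $[D_B,P]|D|_\varepsilon^{-1}$ carrying one factor of $|D|_\varepsilon^{-1}$ plus a remainder carrying an additional such factor. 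Substituting this splitting into $\Gamma P_0\dd_BP_1\dd_BP_2$ and expanding, the only ``leading'' term is $\Gamma P_0[D_B,P_1]|D|_\varepsilon^{-1}[D_B,P_2]|D|_\varepsilon^{-1}$, while all remaining cross terms carry strictly more than two factors of $|D|_\varepsilon^{-1}$.

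The core of the argument is then to show that each of these cross terms belongs to the null ideal $\rr{S}^{1^+}_{0}$ of the Dixmier trace, by combining the measurability \eqref{eq:meas_prop_trip} with the boundedness of $[D_B,\pi(A)]$ and $[|D|_\varepsilon,\pi(A)]$ for $A\in\bb{S}_B\subset C^\infty(\bb{C}_B)$, and a H\"older inequality in the weak Schatten--Lorentz classes. A parallel estimate shows that the double commutator $[|D|_\varepsilon^{-1},[D_B,P_2]]$ is likewise negligible modulo $\rr{S}^{1^+}_0$, so that the two surviving factors of $|D|_\varepsilon^{-1}$ can be pushed together on the right of the product without altering the Dixmier trace. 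Cyclicity of $\Tr_{\rm Dix}$ together with the commutativity of $\Gamma$ and $|D|_\varepsilon^{-1}$ (both block-diagonal and commuting with $D_B^2$) then yields
\[
\tau_{B,2}(A_0,A_1,A_2)\;=\;\tfrac{1}{2}\,\Tr_{\rm Dix}\bigl(|D_{B,\varepsilon}|^{-2}\,\Gamma\,\pi(A_0)[D_B,\pi(A_1)][D_B,\pi(A_2)]\bigr)\;=\;\rr{Ch}_B(A_0,A_1,A_2),
\]
and Lemma~\ref{lem:cyc2cocy_Dix} closes the loop.

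The main obstacle is the ideal-theoretic bookkeeping: rigorously placing the remainders generated by $D_B[|D|_\varepsilon^{-1},\pi(A)]$ and by $[|D|_\varepsilon^{-1},[D_B,\pi(A)]]$ inside $\rr{S}^{1^+}_0$. I would handle this through a Cauchy or Helffer--Sj\"ostrand integral representation of $|D|_\varepsilon^{-1}$ in terms of the resolvent $(\lambda+D_B^2)^{-1}$, which converts commutators with $|D|_\varepsilon^{-1}$ into iterated commutators with $D_B^2$ weighted by resolvents; each factor is then estimated in the appropriate weak ideal $\rr{S}^{p,\infty}$ using the eigenvalue asymptotics of $Q_B$ recorded around \eqref{eq:harm_osc}. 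This is the technical heart of the proof, whereas the remaining commutator algebra is routine.
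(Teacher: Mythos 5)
Your overall strategy coincides with the paper's: replace each quasi-differential $\dd_B\pi(A_i)$ by $[D_B,\pi(A_i)]\,|D_{B,\varepsilon}|^{-1}$ modulo terms that die under the Dixmier trace, consolidate the two weights into a single $|D_{B,\varepsilon}|^{-2}$ at the front using cyclicity and the commutation of $\Gamma$ with $|D_{B,\varepsilon}|^{-2}$, and then invoke Lemma~\ref{lem:cyc2cocy_Dix}. The final bookkeeping (moving $|D_{B,\varepsilon}|^{-2}$ past $\Gamma\pi(A_0)$) is exactly what the paper does via Lemma~\ref{lemma:help_ideal}.

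However, your justification of why the remainder terms are negligible has a genuine gap. First, the count is off: with the splitting $\dd_B\pi(A)=[D_B,\pi(A)]\,|D_{B,\varepsilon}|^{-1}-F_{B,\varepsilon}[|D_{B,\varepsilon}|,\pi(A)]\,|D_{B,\varepsilon}|^{-1}$, every term in the expansion of $\pi(A_0)\dd_B\pi(A_1)\dd_B\pi(A_2)$ carries exactly two factors of $|D_{B,\varepsilon}|^{-1}$ (the operator $F_{B,\varepsilon}$ is merely bounded, not small), so the cross terms do not carry \emph{strictly more} weight than the leading one. Second, and more importantly, even if they did, the proposed H\"older counting cannot close the argument: $|D_{B,\varepsilon}|^{-1}\in\rr{S}^{4}_{\rm w}$, so a product of three such factors separated by bounded operators lies only in $\rr{S}^{4/3}_{\rm w}$, whose elements satisfy $\sum_{m<N}\mu_m=O(N^{1/4})$ and hence need not belong to $\rr{S}^{1^+}$ at all, let alone to $\rr{S}^{1^+}_0$. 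What actually makes the remainders negligible is the much stronger, algebra-specific fact that $[|D_{B,\varepsilon}|^{-1},\pi(A)]$ is trace class for $A\in\bb{S}_B$ (and $[|D_{B,\varepsilon}|^{-2},\pi(A)]\in\rr{S}^{1^+}_0$ for $A\in\bb{L}^1_B$). The paper proves this by explicit diagonalization on the generators $\Upsilon_{j\mapsto k}$, where the singular values of $Q_{B,\varepsilon}^{-1/2}\Upsilon_{j\mapsto k}-\Upsilon_{j\mapsto k}Q_{B,\varepsilon'}^{-1/2}$ decay like $(m+1)^{-3/2}$ with multiplicity one, thanks to the cancellation $(m+1+\zeta_j)^{-1/2}-(m+1+\zeta_k)^{-1/2}$ and the one-dimensional effective spectral counting after composition with a transition operator, followed by summation over the rapidly decreasing coefficients $a_{j,k}$ (Lemma~\ref{lemma:help_ideal_basic}, Lemma~\ref{lemma:help_ideal}, and Lemma~\ref{lemma_pre_meg_res}). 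A Helffer--Sj\"ostrand argument that treats $\pi(A)$ and its commutators with $D_B^2$ as merely bounded will never see this gain; to make your route rigorous you must import or reprove precisely these trace-class estimates, at which point you have reconstructed the paper's proof.
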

Since $ {\tau}_{B,2}$ is a  cyclic $2$-cocycle of $\bb{S}_B$, it defines a class in the cyclic cohomology which is usually denoted as
$Ch_2(\s{H}_4,F_{B,\varepsilon}):=[ {\tau}_{B,2}]\;\in\;HC^{\rm even}(\bb{S}_B)$. According to the common use, we will refer to  $Ch_2(\s{H}_4,F_{B,\varepsilon})$ as the  \emph{Chern character} of the \emph{quasi-even} Fredholm module $(\s{H}_4,F_{B,\varepsilon})$ endowed with the involution $\Gamma$. As a consequence of Lemma \ref{lem:cyc2cocy_Dix-fredh} and Lemma \ref{lem:cyc2cocy_Dix}
one obtains the following restatement of   Theorem \ref{teo:02}.
\begin{theorem}[Chern character]\label{teo:03}
	The isomorphism $c_B:K_0(\bb{S}_B)\to\Z$  defined by \eqref{eq:BM_02} provides the pairing between $K_0(\bb{S}_B)$ and the {Chern character} $Ch_2(\s{H}_4,F_{B,\varepsilon})$ of the Fredholm module $(\s{H}_4,F_{B,\varepsilon})$, \ie
	\begin{equation}
		\label{eq:teo-03}
		c_B([P])\;=\;\langle Ch_2(\s{H}_4,F_{B,\varepsilon}),[P]\rangle\;,\qquad [P]\in K_0(\bb{S}_B)\;.
	\end{equation}
\end{theorem}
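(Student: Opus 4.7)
The plan is to assemble Theorem~\ref{teo:03} from ingredients already available in the paper, so the argument is essentially a one-line chain of equalities once the pairing formula is made explicit at the cocycle level. By construction the Chern character $Ch_2(\s{H}_4,F_{B,\varepsilon})$ is the class $[\tau_{B,2}]\in HC^{\rm even}(\bb{S}_B)$, and so the canonical pairing with $K_0(\bb{S}_B)$ should be computed through any representative cocycle and any representative projection.

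The first step I would carry out is to recall that, as noted in the Background section, the $K$-theory of $\bb{C}_B$ is entirely realized inside $\bb{S}_B$, so every class $[P]\in K_0(\bb{S}_B)$ can be represented by an actual projection $P\in \bb{S}_B$ (no passage to matrix amplifications is needed in our setting, thanks to $K_0(\bb{S}_B)\simeq\Z[\Pi_0]$). Applied to an even cyclic $2$-cocycle $\phi$ the canonical pairing specialises to
\begin{equation*}
	\langle[\phi],[P]\rangle\;=\;\phi(P,P,P)\,,
\end{equation*}
as spelled out in Appendix~\ref{app:cyc_cohom} (and already used for $\Psi_B$ in \eqref{eq:BM_020}). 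Specialising this to $\phi=\tau_{B,2}$ gives
\begin{equation*}
	\langle Ch_2(\s{H}_4,F_{B,\varepsilon}),[P]\rangle\;=\;\tau_{B,2}(P,P,P)\,.
\end{equation*}

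The second step is to invoke Lemma~\ref{lem:cyc2cocy_Dix-fredh}, which identifies $\tau_{B,2}=\frac{\ii}{\ell_B^2}\Psi_B$ pointwise on $\bb{S}_B^{\times 3}$. Since $P\in\bb{S}_B$, this yields
\begin{equation*}
	\tau_{B,2}(P,P,P)\;=\;\frac{\ii}{\ell_B^2}\,\Psi_B(P,P,P)\,.
\end{equation*}
The third and final step is to recognise the right-hand side as $c_B([P])$, precisely the defining formula \eqref{eq:BM_020} for the Chern number map. Combining these three equalities produces \eqref{eq:teo-03}.

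There is essentially no obstacle beyond what is already established: the integrality of $c_B$ (and hence of the pairing) has been handled in Theorem~\ref{teo:02} via the same chain through $\rr{Ch}_B$ and $\Psi_B$, while the hard analytic content — that $\tau_{B,2}$ is a well-defined cyclic $2$-cocycle on $\bb{S}_B$ and that it coincides with $\frac{\ii}{\ell_B^2}\Psi_B$ — is exactly the content of Lemma~\ref{lem:cyc2cocy_Dix-fredh}. The only subtle point worth flagging is the consistency with the quasi-even (rather than even) Fredholm module structure: one should verify that the pairing formula $\langle[\phi],[P]\rangle=\phi(P,P,P)$ remains valid in the quasi-even setting developed in Section~\ref{sect:quan-cal}, but this is guaranteed once $\tau_{B,2}$ is confirmed to be a genuine cyclic cocycle, which is precisely the conclusion of Lemma~\ref{lem:cyc2cocy_Dix-fredh}.
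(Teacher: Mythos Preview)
Your proposal is correct and matches the paper's own treatment: the paper does not give a separate proof of Theorem~\ref{teo:03} but simply declares it a consequence of Lemma~\ref{lem:cyc2cocy_Dix-fredh} (and Lemma~\ref{lem:cyc2cocy_Dix}), regarded as a restatement of Theorem~\ref{teo:02}. Your chain $\langle Ch_2,[P]\rangle=\tau_{B,2}(P,P,P)=\frac{\ii}{\ell_B^2}\Psi_B(P,P,P)=c_B([P])$ is exactly the intended argument, and if anything is slightly more direct since it bypasses the detour through $\rr{Ch}_B$.
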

\begin{corollary}[The Connes-Kubo-Chern formula]
	Let $H$ be a  self-adjoint operator affiliated with the magnetic von Neumann
	algebra $\bb{M}_B$. Assume that the spectrum of $H$ is bounded from below, and
	that for every (Fermi) energy $E\in\rho(H)$ in the resolvent set of $H$ the spectral projection $P_E:=\chi_{(-\infty,E)}(H)$ lies in
	$\bb{S}_B$.
	Then the \emph{Hall conductance} associated to  the energy spectrum of $H$ below the (Fermi) energy $E$
	is given
	by
	\[
		\sigma_{H}(E)\;=\;\frac{e^2}{2\pi \hbar}\; \langle Ch_2(\s{H}_4,F_{B,\varepsilon}),[P_E]\rangle\;.
	\]
\end{corollary}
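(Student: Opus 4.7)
The corollary is an immediate consequence of Theorem~\ref{teo:03} combined with the physical definition \eqref{eq:int_hall_cond} of the Hall conductance, so my plan is simply to chain the two identifications together and check that the hypotheses put us in the domain of applicability of Theorem~\ref{teo:03}.

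First I would verify that the assumptions produce a well-defined $K$-theory class. Since $H$ is self-adjoint affiliated with $\bb{M}_B$ with spectrum bounded from below, the spectral projection $P_E=\chi_{(-\infty,E)}(H)$ is well-defined for every $E\in\rho(H)$; by hypothesis $P_E\in\bb{S}_B$, and because $\bb{S}_B$ is a pre-$C^*$-algebra of $\bb{C}_B$ the class $[P_E]\in K_0(\bb{S}_B)\simeq K_0(\bb{C}_B)\simeq\Z$ is well-defined. Hence both sides of the claimed formula make sense. Next, invoking the physical definition \eqref{eq:int_hall_cond} one writes
\[
\sigma_{H}(E)\;=\;\frac{e^2}{2\pi\hbar}\;c_{B}([P_E])\;.
\]
Then the conclusion follows by substituting the equality
\[
c_{B}([P_E])\;=\;\langle Ch_2(\s{H}_4,F_{B,\varepsilon}),[P_E]\rangle
\]
provided by Theorem~\ref{teo:03}. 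In particular, the right-hand side is independent of the regularization parameter $\varepsilon>0$ since it equals the integer $c_B([P_E])$.

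There is no real obstacle to overcome at this stage: the entire analytic and cohomological content has already been absorbed into Theorem~\ref{teo:03} and, upstream, into Lemma~\ref{lem:cyc2cocy_Dix-fredh} together with the integrality of the Chern number map \eqref{eq:BM_02}. The sole purpose of the corollary is to repackage Kubo's formula for the Hall conductance intrinsically inside the quantized calculus of the quasi-even magnetic spectral triple $(\bb{S}_B,\s{H}_4,D_B,\Gamma)$, by exhibiting it as the canonical pairing between the $K_0$-class of the Fermi projection and the Chern character of the quasi-even Fredholm module $(\s{H}_4,F_{B,\varepsilon})$.
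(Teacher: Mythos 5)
Your proposal is correct and matches the paper's (implicit) argument exactly: the corollary is stated without a separate proof precisely because it follows by substituting the identity $c_B([P_E])=\langle Ch_2(\s{H}_4,F_{B,\varepsilon}),[P_E]\rangle$ of Theorem~\ref{teo:03} into the definition \eqref{eq:int_hall_cond} of the Hall conductance. Your additional checks that $[P_E]$ is a well-defined class in $K_0(\bb{S}_B)$ and that the pairing is independent of $\varepsilon$ are consistent with the paper's setup.
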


\begin{remark}[Compatible graded trace and noncommutative integral]\label{rk:vol_form_diff}
	It is worth spending some words about a comparison between the
		{noncommutative integral} $\rr{Int}_B$ defined by
	\eqref{eq:meas_prop_trip_02} and the
		{compatible graded trace} $\rr{tr}_\Gamma$ defined by
	\eqref{eq:dir_trac}. Both are built by means of  the Dixmier trace $\Tr_{\rm Dix}$ but in $\rr{Int}_B$ the  Dixmier trace is weighted by the term $|D_{B,\varepsilon}|^{-2}$ which plays the role of a (noncommutative)  \emph{infinitesimal element of volume}.
	From  Lemma \ref{lem:cyc2cocy_Dix} and Lemma \ref{lem:cyc2cocy_Dix-fredh} one infers the equality ${\tau}_{B,2}=\rr{Ch}_B$.
	However,  ${\tau}_{B,2}$ is defined in terms of  $\rr{tr}_\Gamma$ while $\rr{Ch}_B$ is constructed with $\rr{Int}_B$. Nevertheless the equality between the two $2$-cocycles is made possible since the {quasi-differential} $\dd_B$ which enters in the construction of ${\tau}_{B,2}$ provides a weight proportional to $|D_{B,\varepsilon}|^{-1}$, which is exactly the square root of the  infinitesimal element of volume.
	\hfill $\blacktriangleleft$
\end{remark}

The novelty of the results contained in Theorem \ref{teo:01},
Theorem \ref{teo:02}, and Theorem \ref{teo:03} consists on the
use of the magnetic spectral triple $(\bb{S}_B,\s{H}_4, D_B)$, or in the associated Fredholm module $(\s{H}_4,F_{B,\varepsilon})$, for the study of the topology of the magnetic algebra $\bb{S}_B$.
The relevance of this approach relies on the fact that the Dirac operator $D_B$, as defined by \eqref{eq:intro_D}, has compact resolvent. Equivalently, the Dirac phase $F_{B,\varepsilon}$ is a compact operator. This compactness is the real new insight of our approach to the study of the magnetic algebra, which indeed contrasts with other approaches already present in the literature.
For a more precise analysis on this aspect we refer to the
long discussion contained in~\cite[Section 1]{denittis-sandoval-00} and references therein. In order to advocate for the usefulness of the compactness in our approach, let us rewrite the integrated density of states in \eqref{eq:int_idos} and the Hall conductance in \eqref{eq:int_hall_cond} in combination with the results of
Theorem
\ref{teo:01} and Theorem \ref{teo:03}. By making explicit the role of the Dixmier trace and of the resolvent of $D_B$ in the definition of the noncommutative integral $\rr{Int}_B$, one obtains
\begin{equation}\label{eq:int_idos_2}
	N_H(E)\;=\; \frac{1}{8\pi\ell^2_B}\;\Tr_{\rm Dix}\left(|D_{B,\varepsilon}|^{-2}\;\pi(P_E)\right)
\end{equation}
for the {integrated density of states},  and
\begin{equation}\label{eq:int_hall_cond_2}
	\begin{aligned}
		\sigma_H(E)\; & =\;\frac{ e^2}{4\pi \hslash }\;\Tr_{\rm Dix}\big(\Gamma\;\pi (P_E)\dd_B\pi(P_E)\dd_B\pi(P_E)\big)                 \\
		              & =\;\frac{ e^2}{4\pi \hslash }\;\Tr_{{\rm Dix}}\left(|D_{B,\varepsilon}|^{-2}\Gamma\pi(P_E)[D_B,\pi(P_E)]^2\right)
	\end{aligned}
\end{equation}
for the  Hall conductance. Since the operator $|D_{B,\varepsilon}|^{-2}$ is diagonalized by the  Laguerre
basis
$\{\psi_{n,m}\}$, one can hope to use this natural \virg{discretization} to deduce from  \eqref{eq:int_idos_2} and \eqref{eq:int_hall_cond_2} approximate formulas for $N_H(E)$ and
$\sigma_H(E)$. In the case of tight-binding magnetic operators on $\ell^2(\Z^2)$ similar approximated formulas already exists, based on the discreteness of the lattice $\Z^2$. In fact the  density of states for tight-binding magnetic operators
can be estimated with the \emph{windowed DOS}~\cite{loring-lu-watson-21} while
the Chern numbers can be computed with  the \emph{spectral localizer formula}~\cite{schulz-baldes-loring-17,schulz-baldes-loring-19,schulz-baldes-loring-20}.
Our guess is that the latter results can be adapted to the magnetic operators on $L^2(\R^2)$ on the basis of the formulas  \eqref{eq:int_idos_2} and \eqref{eq:int_hall_cond_2}. At the moment, this idea is under investigation.

\medskip

\noindent
{\bf Structure of the paper.}
In Section~\ref{sec:QE-FM} we introduce a generalization of an even Fredholm
module, these so-called \emph{quasi-even Fredholm modules}, which will be used to study the
differential theory of the magnetic algebra. The interest of this generalization
lies in the fact that quotient by a convenient ideal of compact operators gives rise to a genuine even Fredholm module. In Section~\ref{sect:q-dif-str} we study the differential theory of
quasi-even Fredholm modules, with the goal of defining the notion of a $k$-cycle
over a quasi-even Fredholm modules in Section~\ref{sect:q-C}. Here we also introduce an appropriate notion of a graded trace compatible with
quasi-even Fredholm modules.
In Section~\ref{sec:chern-character} we identify the Chern character of the
$2$-cycle associated to the magnetic algebra to~\eqref{eq:2-cocy-Psi} via the second Connes' Formula.
Section \ref{sec_prop_intro_1} contains the
proofs of the key Lemmas \ref{lem:cyc2cocy_Dix} and \ref{lem:cyc2cocy_Dix-fredh}.
In Appendix \ref{app:tech_res} are collected some technical results used in various parts of the paper. Appendix \ref{app:cyc_cohom} is devoted to a brief overview of the cyclic cohomology of the  magnetic algebra.

\medskip

\noindent{\bf Acknowledgments.}
GD's research is supported by the grant {Fondecyt Regular - 1190204}. MS’s research is supported by the grant
	{CONICYT-PFCHA Doctorado Nacional 2018 - 21181868}.
The authors
would like to cordially thank Chris Bourne and Hermann
Schulz-Baldes for several inspiring discussions.

\section{Quantized calculus of the magnetic spectral triple}\label{sect:quan-cal}
In this section we will build the \emph{quantized calculus}
(a la Connes) for the algebra $\bb{S}_B$ based on the geometry of the {magnetic spectral triple} $(\bb{S}_B,\s{H}_4, D_B)$.
Although the treatment presented below follows quite closely the theory presented in~\cite[Chapter IV]{connes-94} we will need to change and generalize some definitions to adapt the general scheme to our case of interest.

\subsection{Quasi-even Fredholm module}\label{sec:QE-FM}
Let $\bb{K}(\s{H})$ be the $C^*$-algebra of compact operators on a Hilbert space $\s{H}$. Let $\bb{A}$ be a pre-$C^*$-algebra and $\pi:\bb{A}\to\bb{B}(\s{H})$ a $\ast$-representation. Following~\cite[Chapter IV]{connes-94}, let us recall that a (compact) Fredholm module over $\bb{A}$, denoted $(\s{H},F)$, is   determined by a bounded operator $F$ such that: $(F-F^*)\in \bb{K}(\s{H})$ (quasi-self-adjoint); $(F^2-{\bf 1})\in \bb{K}(\s{H})$ (quasi-involution) and
$$
	[F,\pi(A)]\;:=\;F\;\pi(A)\;-\;\pi(A)\; F\;\in\; \bb{K}(\s{H})\;,\qquad \forall\; A\in \bb{A}\;.
$$
A graded structure on $\s{H}$ is given by a self-adjoint non-trivial\footnote{That is $\Gamma\neq \pm{\bf 1}$. Equivalently, the spectrum of $\Gamma$ is $\{\pm 1\}$.} involution
$\Gamma=\Gamma^*=\Gamma^{-1}$. A bounded operator $T\in \bb{B}(\s{H})$ has \emph{degree} $0$ with respect to $\Gamma$ if $\Gamma T=T \Gamma$, and has \emph{degree} $1$ if $\Gamma T=-T \Gamma$. We will denote with $\bb{B}(\s{H})_i$ the subset of bounded operator of degree $i=0,1$. In order to combine a graded structure with a Fredholm module the basic request is that the representation $\pi$ has to be of {degree} $0$, \ie\@ $\pi(\bb{A})\subseteq\bb{B}(\s{H})_0$. Said differently, one requires that
$$
	\Gamma\;\pi(A)\;-\;\pi(A)\; \Gamma\;=\;0\;,\qquad \forall\; A\in \bb{A}\;.
$$
A Fredholm module $(\s{H},F)$ with graded structure $\Gamma$ is called \emph{even} if $F\in  \bb{B}(\s{H})_1$, \ie when
\begin{equation}\label{eq:even-gamma}
	\{\Gamma,F\}\;:=\;\Gamma\;F\;+\;F\;\Gamma\;=\;0
\end{equation}
For our aim, equation \eqref{eq:even-gamma}
is not satisfied (\cf Remark \ref{rk:triv_inv}) and for this reason
we need to adapt the notion of {even} Fredholm module.
\begin{definition}[Quasi-even   Fredholm module of dimension $k$]\label{def:quas-even}
	Let $\rr{Z}\subseteq \bb{K}(\s{H})$ be a two-sided self-adjoint ideal of $\bb{B}(\s{H})$ and $\Gamma$ a non-trivial self-adjoint involution. A
	Fredholm module $(\s{H},F)$ over $\bb{A}$ is called
	\emph{quasi-even of dimension $k$} with respect to the pair $(\Gamma,\rr{Z})$ if:
	\begin{itemize}
		\item[(a)] $[F^2,\pi(A_0)]\in\rr{Z}$ for every $A_0\in \bb{A}$;
			\vspace{1mm}
		\item[(b)] $\Gamma [F,\pi(A_0)]\Gamma=-[F,\pi(A_0)]+R(A_0)\Gamma$ such that
			$$
				[F,\pi(A_1)]\;\cdots\;[F,\pi(A_n)]\;R(A_0)\;[F,\pi(A_{n+1})]\cdots[F,\pi(A_{k-1})]\;\in\;\rr{Z}\;,
			$$
			for every
			$A_0,A_1,\ldots,A_n,A_{n+1},\ldots,A_{k-1}\in \bb{A}$.
			\vspace{1mm}
		\item[(c)] $[F,\pi(A_0)][F,\pi(A_1)] \cdots[F,\pi(A_k)] \in\rr{Z}$
			for every $A_0,A_1,\ldots,A_k\in \bb{A}$.
	\end{itemize}
\end{definition}

Condition (a) of Definition \ref{def:quas-even} generalizes the requirement
$F^2={\bf 1}$, which is usually assumed in the theory of Fredholm modules
(see~\cite[Remark 3.13]{denittis-sandoval-00} and references therein).
It is immediate to observe that every even Fredholm module
meets  condition (b) of Definition~\ref{def:quas-even} with $R=0$. In this sense Definition~\ref{def:quas-even} provides a generalization of the notion of  {even} Fredholm module.
From condition (b) one deduces that
$$
	\begin{aligned}
		 & \Gamma\; [F,\pi(A_1)]\; \cdots\;[F,\pi(A_k)]                                                                    \\
		 & =\;(-1)^{n}\;[F,\pi(A_1)]\; \cdots \;[F,\pi(A_{n})]\;\Gamma\;[F,\pi(A_{n+1})]\; \cdots\;[F,\pi(A_k)]\;+\;\rr{Z}
	\end{aligned}
$$
for all $A_1,\ldots,A_k\in \bb{A}$.
When $n=k$ this implies
\begin{equation}\label{eq-parity-k}
	[F,\pi(A_1)]\;\cdots\;[F,\pi(A_k)]\;\in\;\bb{B}(\s{H})_{k\;{\rm mod}\; 2}\;+\;\rr{Z}
\end{equation}
Condition (c)  stipulates that
\begin{equation}\label{eq-parity-k2}
	[F,\pi(A_1)]\;\cdots\;[F,\pi(A_{k'})]\; \in\;\rr{Z}\;
\end{equation}
as soon as $k'>k$ since $\rr{Z}$ is an ideal.

\medskip

Now, let us focus on the \emph{magnetic Fredholm module} $(\s{H}_4,F_{B,\varepsilon})$ over the pre-$C^*$-algebra $\bb{S}_B$, where
the Hilbert space $\s{H}_4$ is defined by \eqref{eq:H4} and the Dirac phase is defined by \eqref{eq:dir_phas}. Let $\Gamma$ be the self-adjoint involution defined by \eqref{eq:inv_Gamma}.
Finally, let us recall the notations $\rr{S}^{p}$ and   $\rr{S}^{p^\pm}$ for the $p$-th Schatten ideal and for the $p$-th Dixmier/Ma\u{c}aev  ideal, respectively (\cf Appendix~\ref{app_w-lp-dual}). The main properties of the {magnetic Fredholm module} $(\s{H}_4,F_{B,\varepsilon})$ are contained in the following result.
\begin{proposition}\label{prop:prop-MFM}
	The following facts hold true:
	\begin{itemize}
		\item[(1)] $[F_{B,\varepsilon},\pi(A)]\in \rr{S}^{2^+}$ for every $A\in \bb{S}_B$;
			\vspace{1mm}
		\item[(2)] $[F_{B,\varepsilon}^2,\pi(A)]\in \rr{S}^{1}$ for every $A\in \bb{S}_B$;
			\vspace{1mm}
		\item[(3)] Let $R(A_0):=
				\Gamma[F_{B,\varepsilon},\pi(A_0)]\Gamma+[F_{B,\varepsilon},\pi(A_0)]$, then
			$$
				R(A_0)\;[F_{B,\varepsilon},\pi(A_1)]\;\in\;  \rr{S}^{1}\;, \quad[F_{B,\varepsilon},\pi(A_1)]\; R(A_0)\;\in\;  \rr{S}^{1}
			$$
			for every $A_0,A_1\in \bb{S}_B$;
			\vspace{1mm}
		\item[(4)]  $[F_{B,\varepsilon},\pi(A_0)][F_{B,\varepsilon},\pi(A_1)][F_{B,\varepsilon},\pi(A_2)]\in \rr{S}^{1}$ for every $A_0,A_1,A_2\in \bb{S}_B$;
	\end{itemize}
\end{proposition}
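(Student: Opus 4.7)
The plan rests on three ingredients from the preceding material: the spectral-dimension-two result $|D_{B,\varepsilon}|^{-2}\pi(A)\in\rr{S}^{1^+}_{\rm m}$ for $A\in\bb{L}^1_B$ from~\cite{denittis-sandoval-00}; the H\"older inequality $\rr{S}^{p^+}\cdot\rr{S}^{q^+}\subseteq\rr{S}^{r^+}$ with $1/r=1/p+1/q$; and the algebraic fact that the dual magnetic momenta $G_1, G_2$ commute with every element of $\pi(\bb{C}_B)$. A preparatory lemma follows at once: for $B\in\bb{S}_B$, both $|D_{B,\varepsilon}|^{-1}\pi(B)$ and $\pi(B)|D_{B,\varepsilon}|^{-1}$ lie in $\rr{S}^{2^+}$, since $T^*T=\pi(B^*)|D_{B,\varepsilon}|^{-2}\pi(B)\in\rr{S}^{1^+}$ (bounded times $\rr{S}^{1^+}_{\rm m}$) for $T=|D_{B,\varepsilon}|^{-1}\pi(B)$, so the singular values of $T$ decay like $n^{-1/2}$.

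The central algebraic input is the decomposition $D_B=D_B^{(K)}+D_B^{(G)}$ with $D_B^{(K)}:=\frac{1}{\sqrt 2}(K_1\otimes\gamma_1+K_2\otimes\gamma_2)$ and $D_B^{(G)}:=\frac{1}{\sqrt 2}(G_1\otimes\gamma_3+G_2\otimes\gamma_4)$. Since $[D_B^{(G)},\pi(A)]=0$, the commutator $[D_B,\pi(A)]=\frac{1}{\sqrt 2}\sum_{j=1,2}\pi([K_j,A])\otimes\gamma_j$ is bounded with $[K_j,A]\in\bb{S}_B$. Moreover $(D_B^{(K)})^2=\frac{1}{2}(K_1^2+K_2^2)-\frac{1}{2}\Gamma$, while $(D_B^{(G)})^2$ and $\Gamma$ both commute with $\pi(A)$, giving the key identity
\[
[D_B^2,\pi(A)]\;=\;[(D_B^{(K)})^2,\pi(A)]\;=\;\tfrac{1}{2}[K_1^2+K_2^2,A]\otimes{\bf 1}_4\;=:\;\pi(B_1)\otimes{\bf 1}_4
\]
with $B_1\in\bb{S}_B$ (by stability of rapidly decreasing coefficients under the action of $K_j$). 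From this: (2) follows via $F_{B,\varepsilon}^2={\bf 1}-\varepsilon|D_{B,\varepsilon}|^{-2}$, so $[F_{B,\varepsilon}^2,\pi(A)]=\varepsilon|D_{B,\varepsilon}|^{-2}\pi(B_1)|D_{B,\varepsilon}|^{-2}$, which is trace class by a direct computation in the Laguerre basis (on each $(m,i)$-block the operator has entries $b^{(1)}_{n,k}/((n+\mu)(k+\mu))$ with $\mu=m+1+\varepsilon+c_i$, and the Schatten $1$-norms summed over $(m,i)$ converge by rapid decrease); (4) follows from (1) by three-factor H\"older, three $\rr{S}^{2^+}$ giving singular values $\mu_n\lesssim n^{-3/2}$.

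For (1), split $[F_{B,\varepsilon},\pi(A)]=[D_B,\pi(A)]|D_{B,\varepsilon}|^{-1}+D_B[|D_{B,\varepsilon}|^{-1},\pi(A)]$: the first term is in $\rr{S}^{2^+}$ by the preparatory lemma, and the second is expressed via the integral formula $|D_{B,\varepsilon}|^{-1}=\pi^{-1}\int_0^\infty t^{-1/2}G_t\,dt$ (with $G_t=(D_B^2+\varepsilon+t)^{-1}$) as $-\pi^{-1}\int_0^\infty t^{-1/2}D_B G_t\pi(B_1)G_t\,dt$, whose integrand is in $\rr{S}^{2^+}$ with integrable $t$-decay from $\|D_BG_t\|\leq(2\sqrt{\varepsilon+t})^{-1}$. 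For (3), using $\Gamma D_B^{(K)}\Gamma=-D_B^{(K)}$, $\Gamma D_B^{(G)}\Gamma=D_B^{(G)}$, $[\Gamma,|D_{B,\varepsilon}|^{-1}]=0$, and $[\Gamma,\pi(A_0)]=0$, one computes
\[
R(A_0)\;=\;\{[F_{B,\varepsilon},\pi(A_0)],\Gamma\}\;=\;2\,D_B^{(G)}[|D_{B,\varepsilon}|^{-1},\pi(A_0)]\,\Gamma
\]
(using $[D_B^{(G)},\pi(A_0)]=0$). The same integral analysis as in (1), with the additional bounded factor $D_B^{(G)}|D_{B,\varepsilon}|^{-1}$ absorbed on the left, shows $R(A_0)\in\rr{S}^{1^+}$; H\"older then gives $R(A_0)[F_{B,\varepsilon},\pi(A_1)]\in\rr{S}^{1^+}\cdot\rr{S}^{2^+}\subseteq\rr{S}^1$, and similarly for the reversed product.

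The main technical obstacle is the uniform Schatten control of the integrands $D_BG_t\pi(B_1)G_t$ and $D_B^{(G)}G_t\pi(B_1)G_t$ across the full $t$-range, which is handled by decomposing $G_t=|D_{B,\varepsilon}|^{-2}({\bf 1}+t|D_{B,\varepsilon}|^{-2})^{-1}$ to isolate the $t$-independent spectral-dimension factor from the $t$-dependent contraction, and by tracking the Clifford block structure within $\s{H}_4$ to apply the H\"older estimates in each block.
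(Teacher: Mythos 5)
Your items (2) and (4) are essentially sound and close to the paper's own argument: (2) reduces, as in the paper, to a trace-class estimate on resolvent-sandwiched elements of $\bb{S}_B$ that can be checked block-by-block in the Laguerre basis, and (4) is the same three-factor H\"older estimate in the weak Schatten ideals. The real problems are in (1) and (3). First, your preparatory lemma does not close as stated: from $T^*T=\pi(B^*)|D_{B,\varepsilon}|^{-2}\pi(B)\in\rr{S}^{1^+}$ you infer $\mu_n(T)\lesssim n^{-1/2}$, but membership in the Dixmier ideal only controls partial sums, $\sum_{m<N}\mu_m(T^*T)=O(\log N)$, which gives $\mu_n(T^*T)=O(n^{-1}\log n)$ and hence only $\mu_n(T)=O(n^{-1/2}\sqrt{\log n})$; this is not enough for $T\in\rr{S}^{2^+}$. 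What you actually need is $T^*T\in\rr{S}^{1}_{\rm w}$, and $\rr{S}^{1}_{\rm w}$ is strictly smaller than $\rr{S}^{1^+}$ (see the chain of strict inclusions in Appendix \ref{app_w-lp-dual}); the quoted measurability result only places $|D_{B,\varepsilon}|^{-2}\pi(A)$ in $\rr{S}^{1^+}_{\rm m}$. The way around this is to return to the explicit spectral data, as the paper does throughout: the singular values of $Q_{B,\varepsilon}^{-1}\Upsilon_{j\mapsto k}$ are exactly of order $(m+1)^{-1}$ with multiplicity one, so one gets genuine weak-$L^1$ (respectively weak-$L^2$) bounds on the generators and then sums over a rapidly decreasing coefficient sequence in the corresponding Calder\'on norm.

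Second, the integral-formula argument for $D_B[|D_{B,\varepsilon}|^{-1},\pi(A)]$ and for $R(A_0)$ has a divergence at $t\to\infty$ that you flag but do not resolve. Your majorant for the integrand is $t^{-1/2}\,\|D_BG_t\|\cdot C\lesssim t^{-1/2}(\varepsilon+t)^{-1/2}$, where $C$ is a $t$-uniform bound on the ideal norm of $\pi(B_1)G_t$; this behaves like $t^{-1}$ at infinity, so the integral of the majorant diverges. The factorization $G_t=|D_{B,\varepsilon}|^{-2}({\bf 1}+t|D_{B,\varepsilon}|^{-2})^{-1}$ does not help, because the contraction factor has operator norm equal to $1$ uniformly in $t$ (the spectrum of $|D_{B,\varepsilon}|^{-2}$ accumulates at $0$), and the weak Schatten norms of $G_t$ that do decay in $t$ live in ideals that are too large. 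This is precisely the hard step, not a routine technicality. The paper avoids it entirely: item (1) is quoted from the earlier work, and item (3) is obtained by writing $R(A_0)=\Gamma\,[\{\Gamma,F_{B,\varepsilon}\},\pi(A_0)]\,\Gamma$ (your formula for $R(A_0)$ drops a factor of $\Gamma$, harmlessly for ideal membership), computing $\{\Gamma,F_{B,\varepsilon}\}=2\Gamma D_{B,+}|D_{B,\varepsilon}|^{-1}$ explicitly in terms of the creation and annihilation operators $\rr{b}^\pm$, showing by a direct singular-value computation that the resulting commutator lies in the Ma\u{c}aev ideal $\rr{S}^{2^-}$, and then using the duality $\rr{S}^{2^-}\cdot\rr{S}^{2^+}\subseteq\rr{S}^{1}$ against item (1), rather than your $\rr{S}^{1^+}\cdot\rr{S}^{2^+}$ pairing. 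Until the large-$t$ convergence is controlled in the correct ideal norm, items (1) and (3) remain unproved in your approach.
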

\begin{proof}
	Item (1) is proved in~\cite[Lemma 3.10]{denittis-sandoval-00}. Item (2)
    follows from the direct computation
	\begin{equation}
		F_{B,\varepsilon}^2\;-\;{\bf 1}\;=\;-\varepsilon\; |D_{B,\varepsilon}|^{-2}
	\end{equation}
	which provides
	$$
		[F_{B,\varepsilon}^2,\pi(A)]\;=\;-\varepsilon\;\left[|D_{B,\varepsilon}|^{-2},\pi(A)\right]\;=\;-\varepsilon\;\sum_{j=1}^4\left[Q_{B,\varepsilon_j}^{-1},A\right]\;\otimes\;\tau_{j,j}
	$$
	where
	$Q_{B,\varepsilon_j}:=Q_B+\varepsilon_j{\bf 1}$
	with $Q_B$  the {harmonic oscillator} \eqref{eq:harm_osc},
	$$
		\varepsilon_1\;=\;\varepsilon_2\;=\;\varepsilon\;,\quad \varepsilon_3\;=\;\varepsilon+1\;,\quad \varepsilon_4\;=\;\varepsilon-1\;,
	$$
	and $\tau_{i,j}\in{\rm Mat}_4(\C)$ the matrix which has a single 1 in the entry at
	the position $(i,j)$ and zeroes in all other positions. Therefore, to prove the
	result it is enough to shows that  $[Q_{B,\varepsilon_j}^{-1},A]\in \rr{S}^{1}$ for every
	$j=1,\ldots,4$ and this is done in Lemma \ref{lemma:help_ideal} and Remark
	\ref{rk:trac-class-strong}. Item (3) follows from Lemma
	\ref{lemma:help_ideal_basic} which shows that $R(A_0)\in \rr{S}^{2^-}$. Since
	$\rr{S}^{2^+}$ is the dual of $\rr{S}^{2^-}$ one gets from item (1) and
	Corollary \ref{cor_dual_opo} the desired result. For item (4) one needs to use
	the H\"older type inequality for weak Schatten ideals $\rr{S}^{p}_{\rm w}$ (see Appendix \ref{app_w-lp-dual}).
	since $[F_{B,\varepsilon},\pi(A_j)]\in \rr{S}^{2^+}=\rr{S}^{2}_{\rm w}$ one obtains that  triple products of these terms lie inside
	$\rr{S}^{\frac{2}{3}}_{\rm w}$. The inclusion $\rr{S}^{\frac{2}{3}}_{\rm w}\subset \rr{S}^1$ concludes the proof.
\end{proof}

Item (1) of Proposition \ref{prop:prop-MFM} says that
the  magnetic Fredholm module $(\s{H}_4,F_{B,\varepsilon})$ is
(densely)  \emph{$2^+$-summable} (\cf~\cite[Theorem 3.12]{denittis-sandoval-00}).
Summarizing all the previous results we can state that:
\begin{theorem}
	The magnetic Fredholm module $(\s{H}_4,F_{B,\varepsilon})$ over the pre-$C^*$-algebra $\bb{S}_B$ is (densely)  {$2^+$-summable} and
	quasi-even of rank 2 with respect to the pair $(\Gamma, \rr{S}^{1})$.
\end{theorem}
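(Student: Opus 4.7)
The plan is to recognize that this theorem is essentially a compact restatement of Proposition~\ref{prop:prop-MFM} combined with an unpacking of Definition~\ref{def:quas-even}. My strategy is therefore to first verify the ambient Fredholm module structure and the compatibility of the graded structure, and then to match each clause of the definition (for $k=2$ and $\rr{Z}=\rr{S}^{1}$) with the corresponding item of the proposition.

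First I would check that $(\s{H}_4,F_{B,\varepsilon})$ is a genuine (compact) Fredholm module over $\bb{S}_B$. Since $D_B$ is essentially self-adjoint, $F_{B,\varepsilon}=D_B\,|D_{B,\varepsilon}|^{-1}$ is bounded and self-adjoint, so $F_{B,\varepsilon}-F_{B,\varepsilon}^*=0\in\bb{K}(\s{H}_4)$. A direct computation gives
\[
F_{B,\varepsilon}^2\;-\;{\bf 1}\;=\;-\varepsilon\,|D_{B,\varepsilon}|^{-2},
\]
which is compact because $D_B$ has compact resolvent. Together with item~(1) of Proposition~\ref{prop:prop-MFM}, which puts each commutator $[F_{B,\varepsilon},\pi(A)]$ in $\rr{S}^{2^+}\subset\bb{K}(\s{H}_4)$, this establishes the Fredholm module axioms, and that same item~(1) immediately gives the (dense) $2^+$-summability claim. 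For the graded structure I would observe that $\Gamma$ in~\eqref{eq:inv_Gamma} is a non-trivial self-adjoint involution by construction (its spectrum is $\{\pm 1\}$), and that the block-diagonal form of both $\pi(A)$ and $\Gamma$ gives $[\Gamma,\pi(A)]=0$ for every $A\in\bb{S}_B$, so $\pi(\bb{S}_B)\subset\bb{B}(\s{H}_4)_0$ as required.

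It then remains to verify conditions~(a)--(c) of Definition~\ref{def:quas-even}. Condition~(a), namely $[F_{B,\varepsilon}^2,\pi(A_0)]\in \rr{S}^{1}$, is exactly item~(2). Condition~(c), the triple product of commutators lying in $\rr{S}^{1}$, is exactly item~(4). The slightly delicate point is condition~(b): the remainder $R(A_0)$ appearing in the definition satisfies $R(A_0)\Gamma=\Gamma[F_{B,\varepsilon},\pi(A_0)]\Gamma+[F_{B,\varepsilon},\pi(A_0)]$, so it differs from the $R(A_0)$ introduced in Proposition~\ref{prop:prop-MFM}(3) only by a right multiplication by the unitary involution $\Gamma$. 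Since $\rr{S}^{1}$ is a two-sided ideal of $\bb{B}(\s{H}_4)$, the two versions of the condition are equivalent, and for $k=2$ the only instances to check (corresponding to $n=0$ and $n=1$) reduce precisely to the two memberships $R(A_0)[F_{B,\varepsilon},\pi(A_1)]\in\rr{S}^{1}$ and $[F_{B,\varepsilon},\pi(A_1)]R(A_0)\in\rr{S}^{1}$ granted by item~(3). The main (modest) obstacle is therefore only bookkeeping: reconciling the two conventions for $R$ and checking that absorbing the bounded operator $\Gamma$ preserves membership in $\rr{S}^{1}$.
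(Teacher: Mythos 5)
Your proposal is correct and coincides with the paper's own (implicit) proof: the theorem is stated there as a direct summary of Proposition~\ref{prop:prop-MFM}, and your matching of its items (1)--(4) with the clauses of Definition~\ref{def:quas-even} for $k=2$ and $\rr{Z}=\rr{S}^{1}$ is exactly what is intended. The only spot where ``absorbing the bounded operator $\Gamma$'' needs one extra word is the product $R(A_0)\,\Gamma\,[F_{B,\varepsilon},\pi(A_1)]$, where $\Gamma$ sits in the middle rather than on the outside; this is still immediate because the proof of item (3) actually shows $R(A_0)\in\rr{S}^{2^-}$, which is itself a two-sided ideal, so $R(A_0)\Gamma\in\rr{S}^{2^-}$ and the duality between $\rr{S}^{2^-}$ and $\rr{S}^{2^+}$ yields both required trace-class memberships.
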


\subsection{Quasi-differential structure}
\label{sect:q-dif-str}
Let $(\s{H},F)$ be a Fredholm module over $\bb{A}$
with a quasi-even structure of dimension $k$ with respect to $(\Gamma, \rr{Z})$. Let
$$
	{\Omega}^0\;:=\;\pi(\bb{A})^+\;=\;\big\{\pi(A)+c{\bf 1}\in\bb{B}(\s{H})\;\big|\; A\in \bb{A},\;\; c\in\C\big\}\;.
$$
Observe that ${\Omega}^0=\pi(\bb{A})$ whenever $\bb{A}$ is unital and $\pi({\bf 1})={\bf 1}$. Moreover, ${\Omega}^0_B$ is made by element of degree $0$ with respect to $\Gamma$, \ie
${\Omega}^0\subseteq\bb{B}(\s{H})_0$.
The \emph{quasi-differential} on  ${\Omega}^0$
is defined by
\begin{equation}\label{eq:q-diff-01}
	\dd \big(\pi(A)+c{\bf 1}\big)\;:=\;[F,\pi(A)]
\end{equation}
for every $A\in\bb{A}$.
For $n\in\N$ one lets ${\Omega}^n$ be the linear span of elements of the type
$$
	\omega\;:=\;(\pi(A_0)+c{\bf 1})\;\dd \pi(A_1)\;\cdots\:\dd \pi(A_n)\;,\qquad A_0,A_1,\ldots, A_n\in \bb{A}\;.
$$
In short, one can write
${\Omega}^n:=\pi(\bb{A})^+\otimes\dd \pi(\bb{A})\otimes\ldots\otimes \dd \pi(\bb{A})$ where the product is repeated $n$-times. From \eqref{eq-parity-k} it follows that
$$
	{\Omega}^{n}\;\subseteq\; \bb{B}(\s{H})_{n\;{\rm mod}\; 2}\;+\;\rr{Z},\qquad n \in \mathbb{N}_{0}\;
$$
and \eqref{eq-parity-k2} implies that
\begin{equation}\label{eq:id_incl}
	{\Omega}^{n}\;\subseteq\; \rr{Z}\;,\qquad \forall\; n>k\;.
\end{equation}
The full \emph{graded quasi-differential algebra} associated with $\bb{A}$ is defined as
$$
	{\Omega}^\bullet\;:=\;\bigoplus_{n\in\N_0}{\Omega}^n\;.
$$
In ${\Omega}^\bullet_B$, the product of operators provides a   well-defined graded product. More precisely, for every pair of elements $\omega\in {\Omega}^n$ and $\omega'\in {\Omega}^{n'}$ one has that  $\omega\omega'\in {\Omega}^{n+n'}$. The proof of this fact is  straightforward (see~\cite[Section IV.1]{connes-94}) and is based on the identity
$$
	[F,\pi(A_1)]\pi(A_2)\;=\;[F,\pi(A_1A_2)]\;-\;\pi(A_1)[F,\pi(A_2)]
$$
valid for every $A_1,A_2\in\bb{A}$.

\medskip

The quasi-differential \eqref{eq:q-diff-01} can be extended to a map $\dd:{\Omega}^\bullet\to{\Omega}^\bullet$ as follows
\begin{equation}\label{eq:q-diff-01}
	\dd \omega\;:=\;F\;\omega\;-\;(-1)^n\;\omega\;F\;,\qquad \forall\;\omega\in{\Omega}^n\;.
\end{equation}
the main  properties of the quasi-differential are listed below.
\begin{proposition}\label{prop:QD}
	Let $(\s{H},F)$ be a Fredholm module over $\bb{A}$
	with a quasi-even structure of dimension $k$ with respect to $(\Gamma, \rr{Z})$.
	The following facts hold true:
	\begin{itemize}
		\item[(1)] $\dd\omega\in \Omega^{n+1}+\rr{Z}$ for every $\omega\in  \Omega^{n}$;
			\vspace{1mm}
		\item[(2)] $\dd^2\omega:=\dd(\dd\omega)\in \rr{Z}$ for every $\omega\in  \Omega^{n}$;
			\vspace{1mm}
		\item[(3)] $\dd(\omega_1\omega_2)=(\dd \omega_1)\omega_2+(-1)^{n_1}\omega_1(\dd \omega_2)$ for every $\omega_j\in\Omega^{n_j}$, with $j=1,2$.
	\end{itemize}
\end{proposition}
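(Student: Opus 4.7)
The plan is to prove items (3), (2), (1) in that order, because once the graded Leibniz rule is in place, item (1) can be harvested from item (2) with essentially no extra work. For item (3), my approach is a purely algebraic expansion using only the definition $\dd\eta = F\eta - (-1)^{m}\eta F$: for $\omega_j\in\Omega^{n_j}$, the right-hand side $(\dd\omega_1)\omega_2 + (-1)^{n_1}\omega_1(\dd\omega_2)$ produces two cross terms $\pm(-1)^{n_1}\omega_1 F\omega_2$ which cancel, while the surviving terms assemble into $F\omega_1\omega_2 - (-1)^{n_1+n_2}\omega_1\omega_2 F = \dd(\omega_1\omega_2)$. No structural input from the Fredholm module is used at this step.

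For item (2), the key identity is $\dd^2\omega = [F^2,\omega]$: assigning to $\dd\omega$ the \virg{expected} degree $n+1$ and applying the definition of $\dd$ a second time, the two mixed terms $F\omega F$ cancel and one is left with $F^2\omega - \omega F^2$. Since $[F^2,\,\cdot\,]$ is a derivation on $\bb{B}(\s{H})$, it distributes over the product defining $\omega$, so it suffices to verify that both $[F^2,\pi(A)]$ and $[F^2,\dd\pi(A)]$ belong to $\rr{Z}$ for every $A\in\bb{A}$. The first containment is exactly condition (a) of Definition~\ref{def:quas-even}. The second follows from the derivation identity $[F^2,[F,\pi(A)]] = F[F^2,\pi(A)] - [F^2,\pi(A)]F$ together with the fact that $\rr{Z}$ is a two-sided ideal of $\bb{B}(\s{H})$.

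For item (1), I would write a typical element $\omega = (\pi(A_0)+c{\bf 1})\,\dd\pi(A_1)\cdots\dd\pi(A_n)$ as a product of two factors and apply the Leibniz rule of item (3):
\[
	\dd\omega\;=\;\dd\pi(A_0)\,\dd\pi(A_1)\cdots\dd\pi(A_n)\;+\;(\pi(A_0)+c{\bf 1})\,\dd\bigl(\dd\pi(A_1)\cdots\dd\pi(A_n)\bigr)\;.
\]
The first summand visibly lies in $\Omega^{n+1}$. For the second, iterating Leibniz on the inner differential $\dd(\dd\pi(A_1)\cdots\dd\pi(A_n))$ produces a signed sum of $n$ products in each of which exactly one factor carries a $\dd^2$; by item (2) such a factor lies in $\rr{Z}$, and the two-sided ideal property of $\rr{Z}$ absorbs the surrounding $\pi$ and $\dd\pi$ factors. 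Hence $\dd\omega\in\Omega^{n+1}+\rr{Z}$, as claimed.

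The only real subtlety, rather than a genuine obstacle, lies in interpreting $\dd$ applied to elements that live in $\Omega^{n+1}+\rr{Z}$ instead of $\Omega^{n+1}$ itself. The defining formula $\dd\eta = F\eta - (-1)^{m}\eta F$ depends only on a degree assignment $m$, so it extends unambiguously to $\Omega^{\bullet}+\rr{Z}$ once the degrees are tracked consistently; once this is granted, both the identity $\dd^2\omega = [F^2,\omega]$ used in item (2) and the iterated Leibniz argument used in item (1) go through verbatim.
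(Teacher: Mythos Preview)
Your proof is correct. The paper establishes the same three facts but in the order (1), (2), (3), and its argument for item (1) is a direct computation: it uses the identity $[F,\pi(A)]\,F = -F\,[F,\pi(A)] + [F^2,\pi(A)]$ to push $F$ through the product $[F,\pi(A_1)]\cdots[F,\pi(A_n)]$, picking up a sign $(-1)^n$ and a remainder in $\rr{Z}$, which immediately gives $\dd\eta = \dd\pi(A_0)\,\dd\pi(A_1)\cdots\dd\pi(A_n) + \rr{Z}$ for an elemental $\eta$. For item (2) the paper likewise reduces to $\dd^2\omega = [F^2,\omega]$ and then argues by induction on $n$, which is just the derivation property you invoke written recursively. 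Your route---proving the Leibniz rule first and then harvesting item (1) from items (2) and (3)---is a genuine reorganisation rather than a different idea: it trades the explicit ``push $F$ through'' computation for an iterated application of Leibniz, which is cleaner and makes the role of condition (a) more transparent. Both approaches rest on the same two ingredients, namely the exact algebraic identity for $\dd(\omega_1\omega_2)$ and the containment $[F^2,\pi(A)]\in\rr{Z}$, so neither buys extra generality; your version simply packages the bookkeeping more efficiently. The final remark about extending $\dd$ to $\Omega^\bullet + \rr{Z}$ by tracking degrees is the same convention the paper adopts implicitly when it writes $\dd^2\omega = [F^2,\omega]$.
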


\begin{proof}
	Item (1) follows from the identity
	$$
		[F,\pi(A)]\;F\;=\;-F\;[F,\pi(A)]\;+\;Z_A
	$$
	where $Z_A:=[F^2,\pi(A)]\in\rr{Z}$ by assumption. Therefore, for every $A_1,...,A_n\in\bb{A}$ one gets
	$$
		\big([F,\pi(A_1)]\;\ldots\; [F,\pi(A_n)]\big)\;F\;=\;(-1)^n\;F\;\big([F,\pi(A_1)]\;\ldots\; [F,\pi(A_n)]\big)\;+\;Z
	$$
	for a certain $Z\in\rr{Z}$ which depends on $A_1,...,A_n$.
	Let
	\begin{equation}\label{eq:el_eta}
		\eta\;:=\;(\pi(A_0)+c{\bf 1})\;\dd \pi(A_1)\;\ldots\;\dd \pi(A_n)
	\end{equation}
	be one of the  elemental generators of $\Omega^n$.
	Then, it turns out that
	$$
		\eta\;F\;-\;(-1)^n\;F\;\eta\;=\;[F, \pi(A_0)+c{\bf 1}]\; \dd \pi(A_1)\;\ldots\:\dd \pi(A_n)\;+\;Z'
	$$
	with $Z'\in \rr{Z}$. This, proves that $\dd \eta\in \Omega^{n+1}+\rr{Z}$. Since every $\omega\in \Omega^{n}$ is a linear combination of elements of the type of $\eta$, the result follows by linearity.
	Item (2) follows by a direct computation which shows that
	$$
		\dd^2\omega\;=\;[F^2,\omega]\;,\qquad \forall\;\omega\in\Omega^n\;.
	$$
	If $\omega\in\Omega^0$ then $\dd^2\omega\in \rr{Z}$ just by assumption. To complete the proof one can use induction on the
	order $n$. Let us assume that   item (2) is true up to order $n-1$ and consider an element
	$\eta\in\Omega^n$ defined as in
	\eqref{eq:el_eta}. One has that $\eta=(\pi(A_0)+c{\bf 1})\eta_0$
	where $\eta_0:=\dd \pi(A_1)\ldots\dd \pi(A_n)\in \Omega^{n-1}$
	Therefore,
	$$
		\dd^2\eta\;=\;[F^2,\pi(A_0)]\;\eta_0 \;+\;(\pi(A_0)+c{\bf 1})\;[F^2,\eta_0]\;\in\;\rr{Z}
	$$
	since both $[F^2,\pi(A_0)]$ and $[F^2,\eta_0]$ are in $\rr{Z}$ by assumption. By linearity one gets the result for a generic $\omega\in\Omega^n$.
	The proof of Item (3) amounts to a direct computation.
\end{proof}

From items (1) and (2) of Proposition \ref{prop:QD} one infers that the quasi-differential $\dd$ acts on the graded algebra ${\Omega}^\bullet$ as follows:
$$
	\begin{aligned}
		\dd\;   & :\;{\Omega}^{n}\;\longrightarrow\;{\Omega}^{n+1}\;+\;{\rr{Z}} \\
		\dd^2\; & :\;{\Omega}^{n}\;\longrightarrow\;{\rr{Z}}
	\end{aligned}\;,\qquad \forall\, n\in\N_0\;.
$$
Moreover, item (3) shows that $\dd$ is a \emph{graded derivation}, \ie\@ it satisfies a graded version of the Leibniz's rule.

\begin{remark}[Induced differential structure]\label{rk:quot-grad-alg} Consider the quotient space
	$
		\widetilde{\Omega}^{n}:={\Omega}^{n}/\rr{Z}
	$
	and the related graded algebra
	$$
		\widetilde{\Omega}^\bullet\;:=\;\bigoplus_{n\in\N_0}\widetilde{\Omega}^n\;.
	$$
	From Proposition \ref{prop:QD} one infers that the quasi-differential $\dd$ behaves well with respect to the quotient an defines a map
	$$
		\begin{aligned}
			\tilde{\dd}\; & :\;\widetilde{\Omega}^{n}\;\longrightarrow\;\widetilde{\Omega}^{n+1}\;,\qquad \forall\, n\in\N_0
		\end{aligned}\;
	$$
	such that $\tilde{\dd}^2=0$.  Said differently, the pair $(\widetilde{\Omega}^\bullet,\tilde{\dd})$ defines a genuine \emph{graded differential algebra} in the sense described in
	\cite[Section IV.1]{connes-94}. It is worth noting that according to  Definition \ref{def:quas-even} one has that $\widetilde{\Omega}^{n}=0$ for every $n>k$ in the case of a quasi-even
	structure of dimension $k$.
	\hfill $\blacktriangleleft$
\end{remark}

Let us now focus on the specific case of the magnetic Fredholm module $(\s{H}_4,F_{B,\varepsilon})$.
We will denote with $\dd_B$ the quasi-differential associated with
$F_{B,\varepsilon}$ according to the definition \eqref{eq:dir_dif}.
The related {graded quasi-differential algebra} associated with the magnetic algebra $\bb{S}_B$ will be denoted with
$$
	{\Omega}^\bullet_B\;:=\;\bigoplus_{n\in\N_0}{\Omega}^n_B\;.
$$
As a consequence of Proposition \ref{prop:prop-MFM} one has that
${\Omega}^1_B\subseteq \rr{S}^{2^+}$ and ${\Omega}^2_B\subseteq \rr{S}^{1^+}$.
Moreover ${\Omega}^2_B \subseteq  \bb{B}(\s{H}_4)_0+\rr{S}^{1}$, \ie
the elements of ${\Omega}^2_B$ are of degree 0 with respect to $\Gamma$ up to a remainder which is trace class.

\subsection{Quasi-cycles}\label{sect:q-C}
Let   $(\s{H},F)$ be a Fredholm module over $\bb{A}$
with a quasi-even structure of dimension $k$ with respect to $(\Gamma, \rr{Z})$. We need to consider a linear map
\begin{equation}\label{eq:qt}
	\rr{tr}_\Gamma\;:\;\Omega^k\;\longrightarrow\;\C
\end{equation}
which satisfies some relevant conditions.
\begin{definition}[A compatible graded trace]\label{def:qg_tr}
	Let $(\s{H},F)$ be a Fredholm module over $\bb{A}$
	with a quasi-even structure of dimension $k$ with respect to $(\Gamma, \rr{Z})$. A compatible graded trace for $(\s{H},F)$
	is a map like \eqref{eq:qt} such that :
	\begin{itemize}
		\item[(a)] $\rr{Z} \subseteq{\rm Ker}(\rr{tr}_\Gamma)$;
			\vspace{1mm}
		\item[(b)] $\rr{tr}_\Gamma(\dd(\omega))=0$ for every $\omega\in \Omega^{k-1}$;
			\vspace{1mm}
		\item[(c)] $\rr{tr}_\Gamma(\omega_1\omega_2)=(-1)^{n_1n_2}\rr{tr}_\Gamma(\omega_2\omega_1)$ for every $\omega_1\in \Omega^{n_1}$ and $\omega_2\in \Omega^{n_2}$  such that $n_1+n_2=k$.
	\end{itemize}
\end{definition}

Property (b) is the \emph{closedness} condition of the trace $\rr{tr}_\Gamma$ with respect to the quasi-differential $\dd$. Property (c) implements the compatibility of $\rr{tr}_\Gamma$  with respect to the graded structure of ${\Omega}^\bullet$. Finally,
from \eqref{eq:id_incl} one infers that
$\rr{tr}_\Gamma(\Omega^n) = \{0\}$ for every $n>k$.

\medskip

The following definition generalizes the concept of cycle given in
\cite[Chapter~3, Section~1.$\alpha$]{connes-94} or~\cite[Definition~8.3]{gracia-varilly-figueroa-01}.
\begin{definition}[Quasi-cycle of dimension $k$]\label{def:qc_dim-k}
	Let $(\s{H},F)$ be a Fredholm module over $\bb{A}$
	with a quasi-even structure of dimension $k$ with respect to $(\Gamma, \rr{Z})$. Let
	${\Omega}^\bullet$ be the associated
	graded quasi-differential algebra with quasi-differential $\dd$ induced by $F$ and $\rr{tr}_\Gamma$ a compatible graded trace. Then the triple $({\Omega}^\bullet,\dd,\rr{tr}_\Gamma)$ will be
	called a \emph{quasi-cycle of dimension $k$} over $\bb{A}$.
\end{definition}

\begin{remark}[Induced cycle]\label{rk:quot-cyc}
	In the same spirit of Remark \ref{rk:quot-grad-alg} one can observe that a compatible graded trace behaves well with respect to the quotient with respect to $\rr{Z}$ and defines a trace
	$$
		\begin{aligned}
			\widetilde{\rr{tr}}_\Gamma\; & :\;\widetilde{\Omega}^{k}\;\longrightarrow\;\C\;.
		\end{aligned}\;
	$$
	In particular, one can check that the triple $(\widetilde{\Omega}^\bullet,\tilde{\dd},\widetilde{\rr{tr}}_\Gamma)$ defines a
	genuine cycle of dimension $k$ in the sense of
	\cite[Chap.~3, Sect.~1.$\alpha$]{connes-94} or~\cite[Def.~8.3 \&
		Def.~8.17]{gracia-varilly-figueroa-01}.
	\hfill $\blacktriangleleft$
\end{remark}

Now, let us focus on the case of the magnetic Fredholm module $(\s{H}_4,F_{B,\varepsilon})$. In such a case, a natural
candidate for  a compatible graded trace is the Dixmier trace.  In order to take care of the grading, we will define the compatible graded trace as in \eqref{eq:dir_trac}, \ie
\begin{equation}\label{eq:dir_trac-02}
	\rr{tr}_\Gamma(\omega)\;:=\;\Tr_{\rm Dix}\left(\Gamma \omega\right)\;,\qquad \forall\; \omega\in\Omega^2_B\;.
\end{equation}
It is worth pointing out
that there is no need to specify the dependence of the Dixmier trace on the choice of a scale-invariant state. In fact, as commented in
\cite[Remark 3.11]{denittis-sandoval-00} one can show that
$\Omega^2_B\subseteq \rr{S}^{1^+}_{\rm m}$ where $\rr{S}^{1^+}_{\rm m}\subset \rr{S}^{1^+}$ denotes the closed space of \emph{measurable elements} whose Dixmier trace  does not depend on the choice of any scale-invariant generalized limit. Let us introduce the subspace $\rr{S}^{1^+}_{\rm m}$ of element with a vanishing Dixmier trace
$$
	{\rm Ker}(\Tr_{{\rm Dix}})\;:=\;\left.\left\{T\in\rr{S}^{1^+}_{\rm m}\;\right|\;\Tr_{{\rm Dix}}(T)=0\right\}\;.
$$
As discussed in Appendix \ref{app_w-lp-dual}, one has that $\rr{S}^{1}\subset \rr{S}^{1^+}_0\subset {\rm Ker}(\Tr_{{\rm Dix}})$.

\begin{proposition}\label{prop:clos_grad_trac}
	The map \eqref{eq:dir_trac-02} is a compatible graded trace for the
	magnetic Fredholm module $(\s{H}_4,F_{B,\varepsilon})$.
\end{proposition}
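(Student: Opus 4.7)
The plan is to verify the three defining axioms (a), (b), (c) of Definition~\ref{def:qg_tr} for the map $\rr{tr}_\Gamma(\omega) := \Tr_{\rm Dix}(\Gamma \omega)$ on $\Omega^2_B$. Axiom (a) is immediate: since $\rr{Z} = \rr{S}^1 \subseteq \rr{S}^{1^+}_0 \subseteq \ker \Tr_{\rm Dix}$ (as recalled in Appendix~\ref{app_w-lp-dual}) and $\Gamma$ is bounded, multiplication by $\Gamma$ keeps $\rr{Z}$ inside the kernel of $\Tr_{\rm Dix}$.

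For axiom (c), the cases $(n_1,n_2)\in\{(0,2),(2,0)\}$ follow at once from the cyclicity of $\Tr_{\rm Dix}$ on products in $\rr{S}^{1^+}_{\rm m}$ together with the fact that $\Omega^0_B\subseteq\bb{B}(\s{H}_4)_0$ (the representation $\pi$ has degree zero). For the remaining case $(n_1,n_2)=(1,1)$, take generators $\omega_1=\pi(A_0)[F_{B,\varepsilon},\pi(A_1)]$ and $\omega_2=\pi(B_0)[F_{B,\varepsilon},\pi(B_1)]$. Cyclicity of $\Tr_{\rm Dix}$ reduces the required identity to $\Tr_{\rm Dix}(\{\Gamma,\omega_1\}\omega_2)=0$. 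A direct application of condition~(b) of Definition~\ref{def:quas-even} combined with $[\Gamma,\pi(A)]=0$ yields the compact expression $\{\Gamma,\omega_1\}=\pi(A_0)R(A_1)$. Applying the Leibniz identity $\pi(B_0)[F_{B,\varepsilon},\pi(B_1)]=[F_{B,\varepsilon},\pi(B_0 B_1)]-[F_{B,\varepsilon},\pi(B_0)]\pi(B_1)$ and invoking item~(3) of Proposition~\ref{prop:prop-MFM} then shows that $\{\Gamma,\omega_1\}\omega_2\in\rr{S}^1$, so its Dixmier trace vanishes.

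Axiom (b), the closedness $\rr{tr}_\Gamma(\dd_B\omega)=0$ for $\omega\in\Omega^1_B$, is the delicate point. For $\omega=\pi(A_0)[F_{B,\varepsilon},\pi(A_1)]$, the graded Leibniz rule of Proposition~\ref{prop:QD} together with $\dd_B^2\pi(A_1)=[F_{B,\varepsilon}^2,\pi(A_1)]\in\rr{S}^1$ (Proposition~\ref{prop:prop-MFM}~(2)) reduces the task to showing
\[
\Tr_{\rm Dix}\big(\Gamma\,[F_{B,\varepsilon},\pi(A_0)]\,[F_{B,\varepsilon},\pi(A_1)]\big)\;=\;0\;.
\]
Writing $F_{B,\varepsilon}=D_B|D_{B,\varepsilon}|^{-1}$ and collecting all factors $|D_{B,\varepsilon}|^{-1}$ on the right (the commutators produced in this rearrangement live in ideals of lower order, invisible to $\Tr_{\rm Dix}$) brings the expression into the form $\Tr_{\rm Dix}(\Gamma[D_B,\pi(A_0)][D_B,\pi(A_1)]|D_{B,\varepsilon}|^{-2})$. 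The key structural input is that the dual magnetic momenta $G_1,G_2$ commute with every element of $\bb{C}_B$, so $[D_B,\pi(A)]=\frac{\ii}{\sqrt{2}\,\ell_B}(-\nabla_2 A\otimes\gamma_1+\nabla_1 A\otimes\gamma_2)$. Expanding the product of two such terms using the Clifford relations splits it into a summand proportional to $\otimes{\bf 1}_4$ and a summand proportional to $\otimes\gamma_1\gamma_2$. The explicit block-diagonal form of $\Gamma$ and $|D_{B,\varepsilon}|^{-2}$, together with the residue-type identity $\Tr_{\rm Dix}((Q_B+c)^{-1}Y)=\fint_B(Y)$ (independent of~$c$), imply that the $\otimes{\bf 1}_4$ piece cancels by the pattern $1+1-1-1=0$, while the $\otimes\gamma_1\gamma_2$ piece reduces to $\frac{2\ii}{\ell_B^2}\fint_B(\nabla_1 A_0\nabla_2 A_1-\nabla_2 A_0\nabla_1 A_1)$. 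A final integration by parts, using $\fint_B\circ\nabla_j=0$ (which is evident from the kernel formula $f_{\nabla_j X}(z)=\ii z_j f_X(z)$, forcing $f_{\nabla_j X}(0)=0$) together with $[\nabla_1,\nabla_2]=0$, drives this residual quantity to zero.

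The principal obstacle is the justification of the passage from $F_{B,\varepsilon}$ to $D_B|D_{B,\varepsilon}|^{-1}$ inside the Dixmier trace in the argument for (b), that is, the control of the ideal class of $[|D_{B,\varepsilon}|^{-1},\pi(A)]$ and of its products with $[D_B,\pi(\cdot)]$, so as to discard them in the limit defining $\Tr_{\rm Dix}$. This requires the asymptotic ideal estimates developed in~\cite{denittis-sandoval-00} and the technical lemmas collected in Appendix~\ref{app:tech_res}; once these are in place every remaining step reduces to Clifford algebra manipulations and integration by parts on the magnetic algebra.
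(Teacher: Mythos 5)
Your proposal is correct and follows essentially the same route as the paper: (a) is immediate from $\rr{S}^{1}\subseteq\ker(\Tr_{\rm Dix})$, (c) is handled by cyclicity of the Dixmier trace together with the remainder estimates of Proposition~\ref{prop:prop-MFM}~(3) (your anticommutator reformulation $\Tr_{\rm Dix}(\{\Gamma,\omega_1\}\omega_2)=0$ is just a compact repackaging of the paper's two-step argument, which justifies the cyclicity via Lidskii's formula), and (b) reduces to the vanishing of $\rr{tr}_\Gamma$ on $[F_{B,\varepsilon},\pi(A_0)][F_{B,\varepsilon},\pi(A_1)]$, which you then establish exactly as the paper does in Lemma~\ref{lemma_pre_meg_res} and Corollary~\ref{cor_closeddNes} (replace $F_{B,\varepsilon}$ by $D_B|D_{B,\varepsilon}|^{-1}$ modulo trace class, Clifford split into the $\delta_0$ and $\delta_1$ pieces, and integrate by parts using $\fint_B\circ\nabla_j=0$). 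The only cosmetic slip is that with the paper's convention $\{\Gamma,[F_{B,\varepsilon},\pi(A_1)]\}=R(A_1)\Gamma$ rather than $R(A_1)$, which is harmless since $\rr{S}^{1}$ is a two-sided ideal.
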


\begin{proof}
	Property (a) of Definition \ref{def:qg_tr} is satisfied  since
	the ideal
	$\rr{S}^{1}$ of trace-class operators is contained in ${\rm Ker}(\Tr_{{\rm Dix}})$.
	To prove property (b) of Definition \ref{def:qg_tr} let us start with an element $\omega=(\pi(A_0)+c{\bf 1})\dd_B\pi(A_1)\in\Omega^1$.
	Then $\dd_B\omega=\dd_B\pi(A_0)\dd_B\pi(A_1)+Z$ with $Z\in \rr{S}^{1}$ and in turn
	$$
		\rr{tr}_\Gamma\left(\dd_B\omega\right)\;=\; \rr{tr}_\Gamma\big([F_{B,\varepsilon},\pi(A_0)][F_{B,\varepsilon},\pi(A_1)]\big)\;=\;0
	$$
	in view of Corollary \ref{cor_closeddNes}. The general case follows by linearity since $\Omega^1$ is   the  linear span of elements of the form of $\omega$.
	Property (c) of Definition \ref{def:qg_tr} can be shown with a direct computation. There are two possible cases. Let us start with the case $\omega_0\in \Omega^0_B$ and $\omega_2\in \Omega^2_B$. Then
	\begin{equation}\label{eq:q_grad_magn}
		\begin{aligned}
			\rr{tr}_\Gamma(\omega_0\omega_2)\; & =\;\Tr_{\rm Dix}\left(\Gamma \omega_0\omega_2\right)\;=\;\Tr_{\rm Dix}\left(\Gamma^2\omega_2\Gamma \omega_0\right) \\
			                                   & =\;\Tr_{\rm Dix}\left(\Gamma \omega_2 \omega_0\right)\;=\;\rr{tr}_\Gamma(\omega_2\omega_0)
		\end{aligned}
	\end{equation}
	where  the second equality follows in view of the cyclicity of the Dixmier trace and the identity $\Gamma^2={\bf 1}$, while the second follows from $\Gamma\omega_2\Gamma=\omega_2+ Z$ for some $Z\in \rr{S}^{1}$ in view of Proposition \ref{prop:prop-MFM} (3). The second case consists in $\omega,\omega'\in \Omega^1_B$.
	By linearity it is enough to prove the claim for elements of the type $\omega=\eta_0\dd_B\eta_1$,
	and $\omega'=\eta_0'\dd_B\eta_1'$ with $\eta_j,\eta_j'\in\pi(\bb{B}_B)^+$ and $j=0,1$.
	We can use the
	same strategy of the computation \eqref{eq:q_grad_magn}.
	The first step consists in justifying the equality
	$$
		\Tr_{\rm Dix}\left(\Gamma \omega\omega'\right)\;=\;\Tr_{\rm Dix}\left(\omega' \Gamma \omega\right)\;.
	$$
	This is true since $\Gamma \omega\omega'$ and $\omega' \Gamma \omega$ are both in the Dixmier ideal and have the same system of non-zero eigenvalues~\cite[Section 3.10, Theorem 5]{birman-solomjak-87}. Therefore, the results follows in view of the \emph{Lidskii's formula} for the Dixmier trace~\cite[Theorem 7.3.1]{lord-sukochev-zanin-12}.
	The second step consists in proving that
	$\Gamma \omega' \Gamma=-\omega'+R$ with a remainder $R$ such that $R\omega\in \rr{S}^{1}$. This follows again from Proposition \ref{prop:prop-MFM} (3). Summing up one gets
	$$
		\rr{tr}_\Gamma(\omega\omega')\;=\;\Tr_{\rm Dix}\left(\Gamma \omega\omega'\right)\;=\;-\Tr_{\rm Dix}\left(\Gamma\omega'\omega\right)\;=\;-\rr{tr}_\Gamma(\omega'\omega)
	$$
	and the proof is completed.
\end{proof}

As a consequence of Proposition \ref{prop:clos_grad_trac} one has that the triple $({\Omega}^\bullet_B, \dd_B, \rr{tr}_\Gamma)$ provides a quasi-cycle of dimension $2$ over the algebra $\bb{S}_B$ in the sense of Definition \ref{def:qc_dim-k}. We will refer to
$({\Omega}^\bullet_B, \dd_B, \rr{tr}_\Gamma)$ as the \emph{magnetic quasi-cycle}.

\subsection{Chern character}
\label{sec:chern-character}
The following definition is adapted from
\cite[Chapet~3, Section~1.$\alpha$]{connes-94}
or~\cite[Definition~8.17]{gracia-varilly-figueroa-01}.

\begin{definition}[Character of a quasi-cycle]
	Let $(\s{H},F)$ be a Fredholm module over $\bb{A}$
	with a quasi-even structure of dimension $k$ with respect to $(\Gamma, \rr{Z})$. Let  $({\Omega}^\bullet,\dd,\rr{tr}_\Gamma)$  be
	the associated {quasi-cycle (of dimension $k$)} over $\bb{A}$
	according to Definition \ref{def:qc_dim-k}.
	The \emph{character} of the quasi-cycle is the $(k+1)$-linear functional $\tau_k:\bb{A}^{k+1}\to\C$ defined by
	$$
		\tau_k(A_0,A_1,\ldots,A_k)\;:=\;\rr{tr}_\Gamma\big(\pi(A_0)\dd\pi(A_1),\ldots,\dd\pi(A_k)\big)
	$$
\end{definition}

As discussed in Remark \ref{rk:quot-cyc}, after the passage to the quotient a quasi-cycle (of dimension $k$) defines a genuine cycle (of dimension $k$). Therefore, one can repeat verbatim the proof of
~\cite[Chapter~3, Section~1.$\alpha$, Proposition~4]{connes-94} or
~\cite[Proposition~8.12]{gracia-varilly-figueroa-01}
to deduce the following result:
\begin{proposition}\label{prop:tau2}
	The character $\tau_k$ of the quasi-cycle $({\Omega}^\bullet,\dd,\rr{tr}_\Gamma)$
	is a cyclic k-cocycle of the algebra $\bb{A}$.
\end{proposition}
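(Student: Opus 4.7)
My plan is to reduce to the classical theorem on characters of cycles by passing to the quotient described in Remark \ref{rk:quot-cyc}. This strategy is precisely what the phrase \virg{repeat verbatim} in the excerpt is alluding to, and the work consists in checking that the passage is legitimate.

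First, I would verify that the character factors through the canonical projection $p \colon \Omega^\bullet \to \widetilde{\Omega}^\bullet := \Omega^\bullet/\rr{Z}$. By Proposition \ref{prop:QD}, the quasi-differential $\dd$ descends to an honest differential $\tilde{\dd}$ on $\widetilde{\Omega}^\bullet$ with $\tilde{\dd}^2 = 0$ and which satisfies the graded Leibniz rule exactly (not just modulo $\rr{Z}$). By property (a) of Definition \ref{def:qg_tr}, the graded trace $\rr{tr}_\Gamma$ vanishes on $\rr{Z}$ and hence descends to a well-defined linear functional $\widetilde{\rr{tr}}_\Gamma \colon \widetilde{\Omega}^k \to \C$. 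Properties (b) and (c) of Definition \ref{def:qg_tr} then translate into genuine closedness $\widetilde{\rr{tr}}_\Gamma \circ \tilde{\dd} = 0$ and genuine graded commutativity for $\widetilde{\rr{tr}}_\Gamma$. In particular, $(\widetilde{\Omega}^\bullet,\tilde{\dd},\widetilde{\rr{tr}}_\Gamma)$ is a cycle of dimension $k$ over $\bb{A}$ in the classical sense of \cite[Chap.~3, Sect.~1.$\alpha$]{connes-94}, and
\[
\tau_k(A_0,A_1,\ldots,A_k)\;=\;\widetilde{\rr{tr}}_\Gamma\bigl(p(\pi(A_0))\,\tilde{\dd}p(\pi(A_1))\cdots\tilde{\dd}p(\pi(A_k))\bigr).
\]

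Second, I would invoke the classical theorem \cite[Chap.~3, Sect.~1.$\alpha$, Prop.~4]{connes-94} (equivalently \cite[Prop.~8.12]{gracia-varilly-figueroa-01}) which asserts that the character of any such cycle is a cyclic $k$-cocycle. Unpacking this, one obtains both the Hochschild cocycle condition $b\tau_k = 0$, which follows from the graded Leibniz rule for $\tilde{\dd}$ together with the closedness of $\widetilde{\rr{tr}}_\Gamma$, and the cyclicity $\tau_k(A_k,A_0,\ldots,A_{k-1}) = (-1)^k\,\tau_k(A_0,\ldots,A_k)$, which follows from the graded trace property applied to the factor $\dd\pi(A_k)$ of degree $1$ relative to the rest of degree $k-1$.

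The main obstacle, and the reason a direct argument inside $\Omega^\bullet$ would be substantially more painful, is that each of the identities used in the classical proof (Leibniz, $\dd^2=0$, graded commutativity of the trace) holds in $\Omega^\bullet$ only up to remainders in $\rr{Z}$. A line-by-line adaptation would require tracking these remainders through several products and invoking the ideal structure of $\rr{Z}$ together with $\rr{Z} \subseteq \mathrm{Ker}(\rr{tr}_\Gamma)$ each time. The quotient step organizes all these remainder computations into a single structural statement, which is the content that must be verified to license the \virg{verbatim} reference.
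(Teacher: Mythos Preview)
Your proposal is correct and follows exactly the paper's approach: the paper's proof is simply the sentence preceding the proposition, which says that by Remark~\ref{rk:quot-cyc} the passage to the quotient yields a genuine cycle, after which one repeats verbatim the proof of \cite[Chap.~3, Sect.~1.$\alpha$, Prop.~4]{connes-94} or \cite[Prop.~8.12]{gracia-varilly-figueroa-01}. Your write-up is in fact more detailed than the paper's, as you spell out explicitly why the character factors through the quotient and why the classical hypotheses are then satisfied.
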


In the case of interest of the {magnetic quasi-cycle}
$({\Omega}^\bullet_B, \dd_B, \rr{tr}_\Gamma)$ the associate character is given (up to the multiplicative prefactor $1/2$) by the trilinear map ${\tau}_{B,2}$ defined by~\eqref{eq:ass_charact}. Therefore one has

\begin{corollary}
	The trilinear functional ${\tau}_{B,2}$ defined by \eqref{eq:ass_charact} is a cyclic $2$-cocycle of  the algebra $\bb{S}_B$.
\end{corollary}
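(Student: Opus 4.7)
The plan is to derive this as an immediate specialization of Proposition \ref{prop:tau2}. First I would observe that all the ingredients needed to invoke that proposition with $k=2$ and $\bb{A}=\bb{S}_B$ have already been assembled: Proposition \ref{prop:prop-MFM} shows that the magnetic Fredholm module $(\s{H}_4,F_{B,\varepsilon})$ is quasi-even of dimension $2$ with respect to $(\Gamma,\rr{S}^{1})$; Proposition \ref{prop:QD} provides the graded quasi-differential structure of ${\Omega}^\bullet_B$ with $\dd_B$; and Proposition \ref{prop:clos_grad_trac} shows that $\rr{tr}_\Gamma$ defined in \eqref{eq:dir_trac-02} is a compatible graded trace. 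Therefore $({\Omega}^\bullet_B,\dd_B,\rr{tr}_\Gamma)$ is a quasi-cycle of dimension $2$ over $\bb{S}_B$ in the sense of Definition \ref{def:qc_dim-k}.

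Next I would apply Proposition \ref{prop:tau2} directly to this quasi-cycle. Its character is the trilinear map
$$
\tau_2(A_0,A_1,A_2)\;=\;\rr{tr}_\Gamma\big(\pi(A_0)\,\dd_B\pi(A_1)\,\dd_B\pi(A_2)\big),
$$
which is a cyclic $2$-cocycle on $\bb{S}_B$ by the proposition. Comparing with \eqref{eq:ass_charact}, one sees that $\tau_{B,2} = \tfrac{1}{2}\,\tau_2$, and since the Hochschild coboundary operator $b$ and the cyclic permutation action are both $\C$-linear, the cyclic $2$-cocycle condition is preserved under multiplication by the scalar $1/2$. Hence $\tau_{B,2}$ is itself a cyclic $2$-cocycle of $\bb{S}_B$.

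The only step that merits checking is well-definedness of the character on arbitrary triples $A_0,A_1,A_2\in\bb{S}_B$, which reduces to verifying that $\pi(A_0)\,\dd_B\pi(A_1)\,\dd_B\pi(A_2)$ lies in the domain of the Dixmier trace. This follows from Proposition \ref{prop:prop-MFM}(1): each commutator $[F_{B,\varepsilon},\pi(A_j)]$ belongs to $\rr{S}^{2^+}$, and the Hölder-type inequality for weak Schatten ideals places the product in $\rr{S}^{1^+}\subset\rr{S}^{1^+}_{\rm m}$ (the measurability being controlled as in~\cite[Remark 3.11]{denittis-sandoval-00}).

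There is no real obstacle here; the proof is essentially bookkeeping. The conceptual work was already done in establishing Proposition \ref{prop:tau2}, whose justification in turn rests on Remark \ref{rk:quot-cyc}: the quotient $(\widetilde{\Omega}^\bullet_B,\widetilde{\dd}_B,\widetilde{\rr{tr}}_\Gamma)$ by $\rr{S}^{1}\subseteq {\rm Ker}(\Tr_{\rm Dix})$ is a genuine cycle of dimension $2$ in the sense of~\cite[Chapter~3, Section~1.$\alpha$]{connes-94}, so the classical argument producing a cyclic $k$-cocycle from the character of a cycle applies after projecting, and then lifts back unchanged because $\rr{tr}_\Gamma$ vanishes on $\rr{S}^{1}$ by property (a) of Definition \ref{def:qg_tr}.
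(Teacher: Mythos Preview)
Your proposal is correct and follows essentially the same route as the paper: the Corollary is stated immediately after Proposition~\ref{prop:tau2} with the remark that ${\tau}_{B,2}$ is, up to the prefactor $1/2$, the character of the magnetic quasi-cycle $({\Omega}^\bullet_B,\dd_B,\rr{tr}_\Gamma)$, so the result is a direct specialization. Your added verification of well-definedness and the observation that the scalar $1/2$ preserves the cyclic cocycle condition are correct elaborations of what the paper leaves implicit.
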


Following~\cite[Section~IV.1.$\beta$]{connes-94}
we will refer to the class of  ${\tau}_{B,2}$ in $HC^{\rm even}(\bb{S}_B)$ as the  \emph{Chern character} of the {quasi-even} Fredholm module $(\s{H}_4,F_{B,\varepsilon})$,
and we will denote it with
$Ch_2(\s{H}_4,F_{B,\varepsilon}):=[ {\tau}_{B,2}]\;\in\;HC^{\rm even}(\bb{S}_B)$.

\medskip

Consider the trilinear functional $\rr{Ch}_B$ on $\bb{S}_B$   defined by \eqref{eq:2-cocy-Psi-Dix}. As a consequence of
Lemma \ref{lem:cyc2cocy_Dix} and Lemma \ref{lem:cyc2cocy_Dix-fredh}
one obtains the equality
$$
	\rr{Ch}_B(A_0,A_1,A_2)\;=\;  {\tau}_{B,2}(A_0,A_1,A_2)\;,\qquad\forall \; A_0,A_1,A_2\in \bb{S}_B
$$
The latter equality represents a stronger version of
\cite[Section~IV.2.$\gamma$, Theorem 8]{connes-94} or~\cite[Theorem~10.32]{gracia-varilly-figueroa-01} and provides an incarnation of the celebrated \emph{local index formula} of Connes and Moscovici~\cite{connes-moscovici-95}.

\section{Direct proof of the Second Connes' formula}\label{sec_prop_intro_1}
In this section we will provide the proof of the two versions of the Second Connes' formula anticipated in Section \ref{int_new}.

\medskip

\noindent
{\bf Proof of Lemma \ref{lem:cyc2cocy_Dix}.}
The commutator $[D_B,\pi(A)]$
is well-defined for every $A\in \bb{S}_B$~\cite[Proposition 3.2]{denittis-sandoval-00}, and in view of~\eqref{eq:eve_D} one gets
$$
	\begin{aligned}
		{[D_B,\pi(A)]}\; & =\;[D_{B,-},\pi(A)]                                                                                                     \\&=\;[K_1,A]\;\otimes\;\frac{\gamma_1}{\sqrt{2}}\;+\;[K_2,A]\;\otimes\;\frac{\gamma_2}{\sqrt{2}}\\
		                 & =\;\nabla_1A\;\otimes\;\frac{\ii\gamma_2}{\sqrt{2}\ell_B}\;-\;\nabla_2A\;\otimes\;\frac{\ii\gamma_1}{\sqrt{2}\ell_B}\;,
	\end{aligned}
$$
where $K_1$ and $K_2$ are the {magnetic momenta} \eqref{eq:momenta}, and $\nabla_1$ and $\nabla_2$ are the spatial derivations \eqref{eq:commut_deriv_01_intr}.
Therefore,
$$
	\begin{aligned}
		{[D_B,\pi(A_1)]}{[D_B,\pi(A_2)]} & \;=\;-\frac{1}{2\ell_B^2}\left(
		\nabla_1A_1\nabla_1A_2\otimes\gamma_2^2+\nabla_2A_1\nabla_2A_2\otimes\gamma_1^2
		\right)                                                                    \\
		                                 & \phantom{=}\;+\frac{1}{2\ell_B^2}\left(
		\nabla_1A_1\nabla_2A_2\otimes\gamma_2\gamma_1+\nabla_2A_1\nabla_1A_2\otimes\gamma_1\gamma_2
		\right)\;,
	\end{aligned}
$$
for every $A_1,A_2\in \bb{S}_B$. Let us introduce the notation
\[
	\begin{aligned}
		\delta_0(A_1,A_2)\;: & =\;\nabla_1A_1\nabla_1A_2\;+\;\nabla_2A_1\nabla_2A_2\;, \\
		\delta_1(A_1,A_2)\;: & =\;\nabla_1A_1\nabla_2A_2\;-\;\nabla_2A_1\nabla_1A_2\;.
	\end{aligned}
\]
Since the algebra $\bb{S}_B$ is closed under the action of the derivations, one has that both $\delta_0(A_1,A_2)$ and $\delta_1(A_1,A_2)$ are in $\bb{S}_B$.
By observing that $\gamma_j^2={\bf 1}_4$ is the $4\times 4$ identity matrix for every $j=1,2,3,4$, one gets that
\begin{equation}\label{eq:app_01}
	\begin{aligned}
		{[D_B,\pi(A_1)]}{[D_B,\pi(A_2)]} & \;=\;-\frac{1}{2\ell_B^2}\pi\big(\delta_0(A_1,A_2)
		\big)\;+\;\frac{\ii}{2\ell_B^2}\pi\big(\delta_1(A_1,A_2)
		\big)\;\Gamma
	\end{aligned}
\end{equation}
where $\Gamma$ is the involution defined by \eqref{eq:inv_Gamma}.
As a consequence, it follows that
\[
	\begin{aligned}
		\Gamma\pi(A_0){[D_B,\pi(A_1)]}{[D_B,\pi(A_2)]}\;=\; & -\frac{1}{2\ell_B^2}\pi\big(A_0\delta_0(A_1,A_2)
		\big)\;\Gamma                                                                                              \\
		                                                    & +\;\frac{\ii}{2\ell_B^2}\pi\big(A_0\delta_1(A_1,A_2)
		\big)\;.
	\end{aligned}
\]
In view of \eqref{eq:meas_prop_trip}, one has that
\begin{equation}\label{eq:help_meas}
	|D_{B,\varepsilon}|^{-2}\;\Gamma\pi(A_0){[D_B,\pi(A_1)]}{[D_B,\pi(A_2)]}\;\in\; \rr{S}^{1^+}_{\rm m}\;,
\end{equation}
for every $A_0,A_1,A_2\in\bb{S}_B$.
Therefore, one is allowed to compute the Dixmier trace, and by linearity one gets that
\[
	\begin{aligned}
		\rr{Ch}_B(A_0,A_1,A_2)\;=\;\frac{\ii}{4\ell_B^2}\big(F_1(A_0,A_1,A_2)\;+\:\ii\;F_0(A_0,A_1,A_2)\big)\;,
	\end{aligned}
\]
where
\[
	\begin{aligned}
		F_0(A_0,A_1,A_2)\;: & =\;\Tr_{\rm Dix}\left(|D_{B,\varepsilon}|^{-2}\;\pi\big(A_0\delta_0(A_1,A_2)
		\big)\;\Gamma\right)                                                                               \\
		F_1(A_0,A_1,A_2)\;: & =\;\Tr_{\rm Dix}\left(|D_{B,\varepsilon}|^{-2}\;\pi\big(A_0\delta_1(A_1,A_2)
		\big)\right)\;.
	\end{aligned}
\]
By using~\cite[Lemma B.3]{denittis-gomi-moscolari-19} and the diagonal representation of  $|D_{B,\varepsilon}|^{-2}= (D_B^2+\varepsilon{\bf 1})^{-1}$ in terms of the harmonic oscillator
$Q_{B}$, one gets that
$$
	F_0(A_0,A_1,A_2)\;=\;\Tr_{\rm Dix}\left(\frac{1}{Q_B+\xi{\bf 1}}\; A_0\delta_0(A_1,A_2)
	\right)\; \Tr_{\C^4}(\ii\gamma_1\gamma_2)\;=\;0\;,
$$
since $\Tr_{\C^4}(\ii\gamma_1\gamma_2)=0$. It is  worth remarking that the
equality is justified by the fact that the
first factor of the central term is well-defined for every $\xi>-1$ and its value does not depend on $\xi$~\cite[Proposition 2.27]{denittis-sandoval-00}.
With a similar argument one gets
$$
	F_1(A_0,A_1,A_2)\;=\;4\Tr_{\rm Dix}\left(\frac{1}{Q_B+\xi{\bf 1}}\;\; A_0\delta_1(A_1,A_2)
	\right)
$$
where the pre-factor comes from $\Tr_{\C^4}({\bf 1}_4)=4$.
By putting together all these results, one gets
\[
	\begin{aligned}
		\rr{Ch}_B(A_0,A_1,A_2)\; & =\;\frac{\ii}{\ell_B^2}\; \Tr_{\rm Dix}\left(\frac{1}{Q_B+\xi{\bf 1}}\; A_0\delta_1(A_1,A_2)
		\right)                                                                                                                 \\
		                         & =\;\frac{\ii}{\ell_B^2}\; \fint_B\left( A_0\delta_1(A_1,A_2)
		\right)
		\\
		                         & =\;\frac{\ii}{\ell_B^2}\; \Psi_B\left( A_0,A_1,A_2
		\right)\;,
	\end{aligned}
\]
where the second equality is proved in~\cite[Proposition 2.27]{denittis-sandoval-00} and the third equality follows from~\eqref{eq:2-cocy-Psi}. This concludes the proof.
\qed
\medskip

\begin{remark}\label{rk:triv_inv_2}
	The same proof described above can be adapted to prove the claim of Remark \ref{rk:triv_inv} about the triviality of $\widehat{\rr{Ch}_{B}}$. The main difference relies in the equality
	\[
		\begin{aligned}
			\chi\pi(A_0){[D_B,\pi(A_1)]}{[D_B,\pi(A_2)]}\;=\; & -\frac{1}{2\ell_B^2}\pi\big(A_0\delta_0(A_1,A_2)
			\big)\;\chi                                                                                              \\
			                                                  & +\;\frac{\ii}{2\ell_B^2}\pi\big(A_0\delta_1(A_1,A_2)
			\big)\;\chi\Gamma\;.
		\end{aligned}
	\]
	In the computation of the Dixmier trace both summands produce vanishing terms. The first summand vanishes since $\chi$ is responsible for a term proportional to $\Tr_{\C^4}(\gamma_1\gamma_2\gamma_3\gamma_4)=0$ and the second summand vanishes since $\chi\Gamma$ is responsible for a term proportional to $\Tr_{\C^4}(\gamma_3\gamma_4)=0$.
	\hfill $\blacktriangleleft$
\end{remark}

As a preparation for the proof of Lemma~\ref{lem:cyc2cocy_Dix-fredh} let us anticipate a result which improves~\cite[Lemma 3.14]{denittis-sandoval-00}. For that we need to define the expression
$$
	\widetilde{I_0}(A_1,A_2)\;:=\;|D_{B,\varepsilon}|^{-2}\;[D_B,\pi(A_1)]\;[D_B,\pi(A_2)]\;, \qquad A_1,A_2\in\bb{S}_B\;.
$$
\begin{lemma}\label{lemma_pre_meg_res}
	Let $A_1,A_2\in\bb{S}_B$. Then it holds true that
	$$
		[F_{B,\varepsilon},\pi(A_1)][F_{B,\varepsilon},\pi(A_2)]\;=\;\widetilde{I_0}(A_1,A_2)\;+\; Z(A_1,A_2)
	$$
	with $Z(A_1,A_2)\in \rr{S}^{1}$.
\end{lemma}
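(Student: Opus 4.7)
The plan is to factor $F_{B,\varepsilon} = |D_{B,\varepsilon}|^{-1}D_B$ (the two factors commute, as $|D_{B,\varepsilon}|$ is a function of $D_B^2$) and split each commutator into a ``main'' piece and a ``remainder'' piece:
\[
[F_{B,\varepsilon},\pi(A_i)] \;=\; \underbrace{|D_{B,\varepsilon}|^{-1}[D_B,\pi(A_i)]}_{=:\,M_i}\;+\;\underbrace{\bigl[|D_{B,\varepsilon}|^{-1},\pi(A_i)\bigr]\,D_B}_{=:\,R_i}\;.
\]
Expanding $[F_{B,\varepsilon},\pi(A_1)][F_{B,\varepsilon},\pi(A_2)] = M_1M_2 + M_1R_2 + R_1M_2 + R_1R_2$ and isolating the principal piece, a short rearrangement yields
\[
M_1M_2 \;-\; \widetilde{I_0}(A_1,A_2) \;=\; |D_{B,\varepsilon}|^{-1}\bigl[[D_B,\pi(A_1)],\,|D_{B,\varepsilon}|^{-1}\bigr]\,[D_B,\pi(A_2)]\;.
\]
Proving the lemma therefore reduces to showing that this correction and each of the three mixed terms lies in $\rr{S}^{1}$.

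Each of these four remainder expressions contains at least one factor of the form $\bigl[|D_{B,\varepsilon}|^{-1},X\bigr]$, where $X$ is bounded (either $\pi(A_i)$, or $[D_B,\pi(A_j)]$ which is bounded by the explicit representation exhibited in the proof of Lemma~\ref{lem:cyc2cocy_Dix} together with the smoothness of $\bb{S}_B$ under the spatial derivations). I would control these commutators via the integral representation
\[
|D_{B,\varepsilon}|^{-1} \;=\; \frac{1}{\pi}\int_0^\infty \lambda^{-1/2}\bigl(D_B^2+\varepsilon+\lambda\bigr)^{-1}d\lambda
\]
combined with the resolvent commutator identity $[(D_B^2+\varepsilon+\lambda)^{-1},X] = -(D_B^2+\varepsilon+\lambda)^{-1}[D_B^2,X](D_B^2+\varepsilon+\lambda)^{-1}$ and the derivation rule $[D_B^2,X]=\{D_B,[D_B,X]\}$ whenever the outer bracket $[D_B,X]$ is bounded. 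Using the spectral bounds $\|(D_B^2+\varepsilon+\lambda)^{-1}\|\le(\varepsilon+\lambda)^{-1}$ and $\|(D_B^2+\varepsilon+\lambda)^{-1}D_B\|\le(\varepsilon+\lambda)^{-1/2}$ to absorb the extra factor of $D_B$ in $R_i$, the $\lambda$-integral converges in the weak Schatten norm, and both $R_i$ and the inner commutator $\bigl[[D_B,\pi(A_1)],|D_{B,\varepsilon}|^{-1}\bigr]$ inherit membership in $\rr{S}^{2^+}=\rr{S}^{2}_{\rm w}$.

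Finally, combining these bounds with $|D_{B,\varepsilon}|^{-1}\in\rr{S}^{2^+}$ from Proposition~\ref{prop:prop-MFM}(1) and the boundedness of $[D_B,\pi(A_j)]$, the correction and each mixed term is a product of three factors from $\rr{S}^{2^+}$; H\"older in weak Schatten ideals gives $\rr{S}^{2}_{\rm w}\cdot\rr{S}^{2}_{\rm w}\cdot\rr{S}^{2}_{\rm w}\subset\rr{S}^{2/3}_{\rm w}\subset\rr{S}^{1}$, exactly as in the proof of Proposition~\ref{prop:prop-MFM}(4). The main obstacle is the Schatten bookkeeping inside the $\lambda$-integral: one must verify uniform-in-$\lambda$ operator bounds on the factors $(D_B^2+\varepsilon+\lambda)^{-1}D_B$ and the integrability of the weight $\lambda^{-1/2}$ against the resulting weak-Schatten norms, separately on $(0,1)$ and on $(1,\infty)$. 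This is a routine but delicate application of the spectral calculus for $D_B$.
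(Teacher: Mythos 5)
Your algebraic skeleton matches the paper's: the splitting $[F_{B,\varepsilon},\pi(A_i)]=M_i+R_i$ reproduces, up to the choice of left/right factorization, the four terms $I_0,\dots,I_3$ of the paper's decomposition, and your identity for $M_1M_2-\widetilde{I_0}$ is correct. The gap is in the Schatten analysis. The assertion that each remainder is ``a product of three factors from $\rr{S}^{2^+}$'' is false on several counts: $[D_B,\pi(A_j)]$ is a matrix with entries $\nabla_k A_j\in\bb{S}_B$, and such operators are bounded but \emph{not} compact on $L^2(\R^2)$ (the representation has infinite multiplicity; e.g.\ $\Pi_0\in\bb{S}_B$ is an infinite-rank projection); moreover $|D_{B,\varepsilon}|^{-1}$ lies only in $\rr{S}^{4^+}$, not $\rr{S}^{2^+}$ --- its singular values decay like $m^{-1/4}$ because the eigenvalue $j+1$ of $Q_B$ has multiplicity $j+1$, consistently with $\Tr_{\rm Dix}(|D_{B,\varepsilon}|^{-4})=2$ --- and Proposition~\ref{prop:prop-MFM}(1) concerns the commutator $[F_{B,\varepsilon},\pi(A)]$, not the bare resolvent. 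So the concluding H\"older step does not close.

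More fundamentally, the resolvent-integral route cannot reach $\rr{S}^{1}$, because it uses only generic spectral bounds on $D_B$ and therefore cannot see the mechanism that produces summability here. Since $\|(D_B^2+\varepsilon+\lambda)^{-1}\|_{2^+}^\star$ is bounded away from zero uniformly in $\lambda$ (again because the $m$-th eigenvalue of $D_B^2$ grows only like $\sqrt{m}$), the best available bound on your integrand for $R_i$ in the $\rr{S}^{2^+}$-norm is of order $\lambda^{-1/2}(\varepsilon+\lambda)^{-1/2}$, whose integral diverges logarithmically; optimizing the distribution of powers only yields $R_i\in\rr{S}^{p^+}$ for $p>2$, whence $M_1R_2$ and $R_1R_2$ land in weak Schatten classes around $\rr{S}^{4/3}_{\rm w}$, far from trace class. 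The indispensable input, which your argument is missing, is the explicit computation on the generators $\Upsilon_{j\mapsto k}$ showing that $C_{\varepsilon,\varepsilon'}(A)=Q_{B,\varepsilon}^{-1/2}A-AQ_{B,\varepsilon'}^{-1/2}$ is \emph{trace class} for $A\in\bb{S}_B$: the transition operator selects a single Landau index, collapsing the multiplicity growth of $Q_B$ and leaving singular values proportional to $(m+1)^{-3/2}$ with multiplicity one (this is the content of the first lemma of Appendix~\ref{app:tech_res}). With that in hand, $[|D_{B,\varepsilon}|^{-1},\pi(A)]\in\rr{S}^{1}$, each mixed term becomes (trace class)$\times$(bounded) --- the boundedness of $|D_{B,\varepsilon}|^{-1}[D_B,\pi(A_1)]|D_{B,\varepsilon}|$ and of $D_B^2[|D_{B,\varepsilon}|^{-1},\pi(A_2)]$ being checked separately as in the paper --- and your correction term is (bounded)$\times$(trace class)$\times$(bounded). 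Without that basis computation the proof has a genuine hole.
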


\begin{proof}
	The starting point of the proof is decomposition of the product
	$$
		[F_{B,\varepsilon},\pi(A_1)][F_{B,\varepsilon},\pi(A_2)]\;:=\;\sum_{i=0}^3I_i(A_1,A_2)
	$$
	as presented in~\cite[eq. (3.12)]{denittis-sandoval-00}.
	By using the equalities $[F_{B,\varepsilon},T^*]^*=-[F_{B,\varepsilon},T]$ and $[D_{B},T^*]^*=-[D_{B},T]$ for every $T\in\pi(\bb{S}_B)$ one gets
	\begin{equation}\label{eq:fact_comm_F_terms_Is}
		\begin{aligned}
			I_0(A_1,A_2)\; & :=\;|D_{B,\varepsilon}|^{-1}[D_B,\pi(A_1)][D_B,\pi(A_2)]|D_{B,\varepsilon}|^{-1}                          \\
			I_1(A_1,A_2)\; & :=\;|D_{B,\varepsilon}|^{-1}[D_B,\pi(A_1)]D_B\left[|D_{B,\varepsilon}|^{-1},\pi(A_2)\right]               \\
			I_2(A_1,A_2)\; & :=\;\left[|D_{B,\varepsilon}|^{-1},\pi(A_1)\right]D_B[D_B,\pi(A_2)]|D_{B,\varepsilon}|^{-1}               \\
			I_3(A_1,A_2)\; & :=\;\left[|D_{B,\varepsilon}|^{-1},\pi(A_1)\right] D_B^2\left[|D_{B,\varepsilon}|^{-1},\pi(A_2)\right]\;.
		\end{aligned}
	\end{equation}
	Let us observe that a direct computation shows that
	$[|D_{B,\varepsilon}|^{-1},\pi(A)]$ is a diagonal matrix with entries given by $C_{\varepsilon+s,\varepsilon+s}(A)$, $s\in\{0,\pm1\}$, where  the notation of Lemma \ref{lemma:help_ideal_basic} has been used. As a consequence it turns out that $[|D_{B,\varepsilon}|^{-1},\pi(A)]\in \rr{S}^{1}$
	for every $A\in\bb{S}_B$.
	Let us focus on the term $I_1(A_1,A_2)=BS$, where $S:=F_{B,\varepsilon}\left[|D_{B,\varepsilon}|^{-1},\pi(A_2)\right]$ is a trace-class operator and
	$$
		\begin{aligned}
			B\; & :=\;|D_{B,\varepsilon}|^{-1}\;[D_B,\pi(A_1)]\;|D_{B,\varepsilon}|                             \\
			    & =\;-|D_{B,\varepsilon}|^{-1}\;\big[|D_{B,\varepsilon}|,[D_B,\pi(A_1)]\big]\;+\;[D_B,\pi(A_1)]
		\end{aligned}
	$$
	is a bounded operator in view of~\cite[Corollary 3.3]{denittis-sandoval-00} which shows that $[D_B,\pi(A_1)]\in\bb{S}_B\otimes {\rm Mat}_4(\C)$, and~\cite[Proposition  3.4]{denittis-sandoval-00}
	which shows that the commutator $[|D_{B,\varepsilon}|,T]$ is bounded for every $T\in\bb{S}_B\otimes {\rm Mat}_4(\C)$.
	It follows that $I_1(A_1,A_2)\in \rr{S}^{1}$.
	From the identity $I_2(A_1,A_2)=I_2(A_2^*,A_1^*)^*$ one immediately concludes that $I_2(A_1,A_2)\in \rr{S}^{1}$. Similarly,  the term
	$I_3(A_1,A_2)=S'B'$ is the product of the trace-class operator $S':=[|D_{B,\varepsilon}|^{-1},\pi(A_1)]$ and the bounded operator
	$$
		\begin{aligned}
			B'\;: & =\;D_B^2\;\left[|D_{B,\varepsilon}|^{-1},\pi(A_2)\right]                                                       \\
			      & =\; \left[D_B^2|D_{B,\varepsilon}|^{-1},\pi(A_2)\right]\;-\;
			\left[D_B^2, \pi(A_2)\right]\; |D_{B,\varepsilon}|^{-1}                                                                \\
			      & =\;\big[ |D_{B,\varepsilon}| ,\pi(A_2)\big]\;-\;\varepsilon\left[|D_{B,\varepsilon}|^{-1},\pi(A_2)\right]\;-\;
			\left[D_B^2, \pi(A_2)\right]\; |D_{B,\varepsilon}|^{-1}\end{aligned}
	$$
	where in the last equality  it has been used the identity $D_B^2=|D_{B,\varepsilon}| ^2-\varepsilon{\bf 1}$ and the boundedness of
	$[D_B^2, \pi(A_2)]$ is discussed in the proof of~\cite[Lemma 3.9]{denittis-sandoval-00}. Therefore, one obtains that
	$$
		\begin{aligned}
			[F_{B,\varepsilon},\pi(A_1)][F_{B,\varepsilon},\pi(A_2)]\; & =\;{I_0}(A_1,A_2)\;+\; Z'(A_1,A_2)                            \\
			                                                           & =\;\widetilde{I_0}(A_1,A_2)\;+\;Z'(A_1,A_2)\;+\; Z''(A_1,A_2)
		\end{aligned}
	$$
	with $Z'(A_1,A_2)=\sum_{i=1}^3I_i(A_1,A_2)\in \rr{S}^{1}$ and
	$$
		\begin{aligned}
			Z''(A_1,A_2)\;: & =\;{I_0}(A_1,A_2)\;-\;\widetilde{I_0}(A_1,A_2)                                                      \\
			                & =\;-|D_{B,\varepsilon}|^{-1}\;\left[|D_{B,\varepsilon}|^{-1},[D_B,\pi(A_1)][D_B,\pi(A_2)]\right]\;.
		\end{aligned}
	$$
	since $[D_B,\pi(A_1)][D_B,\pi(A_2)]$ is a diagonal matrix with entries in $\bb{S}_B$ in view of \eqref{eq:app_01} one gets that the commutator with $|D_{B,\varepsilon}|^{-1}$ diagonal matrix with entries of the type $C_{\varepsilon ,\varepsilon }(A)$ with  the notation of Lemma \ref{lemma:help_ideal_basic}. As a consequence    $Z''(A_1,A_2)\in \rr{S}^{1}$ and the claim is proved.
\end{proof}

\begin{proof}[Proof of Lemma~\ref{lem:cyc2cocy_Dix-fredh}]
	Since $\rr{S}^{1}\subset \rr{S}^{1^+}_0$, and the ideal $\rr{S}^{1^+}_0$ lies in the common kernel of all the Dixmier traces, one infers from Lemma \ref{lemma_pre_meg_res} that
	$$
		\Tr_{{\rm Dix},\omega}\left(Y \dd_B\pi(A_1)\dd_B\pi(A_2)\right)\;=\;\Tr_{{\rm Dix},\omega}\left(Y \widetilde{I_0}(A_1,A_2)\right)\;.
	$$
	To prove Lemma \ref{lem:cyc2cocy_Dix-fredh} we have to fix $Y=\Gamma\pi(A_0)$ for some  $A_0\in \bb{S}_B$. Since $\Gamma$ commutes with $|D_{B,\varepsilon}|^{-2}$ and
	$$
		\pi(A_0)\;|D_{B,\varepsilon}|^{-2}\;-\;|D_{B,\varepsilon}|^{-2}\; \pi(A_0)\;\in\;\rr{S}^{1^+}_0\;
	$$
	in view of
	Lemma \ref{lemma:help_ideal}, one  finally gets
	$$
		\begin{aligned}
			\Tr_{{\rm Dix},\omega} & \left(\Gamma\pi(A_0) \dd_B\pi(A_1)\dd_B\pi(A_2)\right)\;=                                                   \\
			                       & =\;\Tr_{{\rm Dix},\omega}\left(|D_{B,\varepsilon}|^{-2}\Gamma\pi(A_0)[D_B,\pi(A_1)][D_B,\pi(A_2)]\right)\;.
		\end{aligned}
	$$
	In view of  \eqref{eq:help_meas}  the operator inside the Dixmier trace on the
	right-hand side of the latter equation is a measurable element of the Dixmier
	ideal. As a consequence we do not have to specify a generalized limit for the
	computation of the Dixmier trace. In addition a comparison with \eqref{eq:2-cocy-Psi-Dix} provides
	$$
		\begin{aligned}
			\Tr_{{\rm Dix}} & \left(\Gamma\pi(A_0) \dd_B\pi(A_1)\dd_B\pi(A_2)\right)\;=\;2\;\rr{Ch}_B(A_0,A_1,A_2)\;.
		\end{aligned}
	$$
	By using the notation of the {compatible graded trace} \eqref{eq:dir_trac} and Lemma \ref{lem:cyc2cocy_Dix} one gets
	$$
		\begin{aligned}
			\rr{tr}_\Gamma\left(\pi(A_0) \dd_B\pi(A_1)\dd_B\pi(A_2)\right)\;=\;\frac{\ii 2}{\ell_B^2}\; \Psi_B(A_0,A_1,A_2)\;.
		\end{aligned}
	$$
	A comparison with definition \eqref{eq:ass_charact} concludes the proof.
\end{proof}

\begin{corollary}\label{cor_closeddNes}
	It holds true that
	$$
		\rr{tr}_\Gamma\big([F_{B,\varepsilon},\pi(A_1)][F_{B,\varepsilon},\pi(A_2)]\big)\;=\;0
	$$
	for every   $A_1,A_2\in\bb{S}_B$.
\end{corollary}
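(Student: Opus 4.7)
The overall strategy is to reduce the two-argument trace to the (now well-understood) computation appearing in the proof of Lemma \ref{lem:cyc2cocy_Dix}, but with the leading factor $\pi(A_0)$ effectively replaced by the identity. I would first invoke Lemma \ref{lemma_pre_meg_res} to write $[F_{B,\varepsilon},\pi(A_1)][F_{B,\varepsilon},\pi(A_2)] = \widetilde{I_0}(A_1,A_2) + Z(A_1,A_2)$ with $Z(A_1,A_2)\in\rr{S}^{1}$. Since $\rr{S}^{1}\subset\rr{S}^{1^+}_0\subset\ker(\Tr_{\rm Dix})$, the remainder drops out and the problem reduces to evaluating $\Tr_{\rm Dix}\big(\Gamma\,|D_{B,\varepsilon}|^{-2}\,[D_B,\pi(A_1)][D_B,\pi(A_2)]\big)$.

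Next I would exploit the identity \eqref{eq:app_01} established in the proof of Lemma \ref{lem:cyc2cocy_Dix}. Since $\Gamma$ is block-diagonal in the $\C^4$ factor it commutes with $|D_{B,\varepsilon}|^{-2}$ and with $\pi(A)$ for $A\in\bb{S}_B$, and $\Gamma^2=\mathbf{1}$. Substituting \eqref{eq:app_01} and rearranging produces two terms, proportional respectively to
\begin{equation*}
\Tr_{\rm Dix}\big(|D_{B,\varepsilon}|^{-2}\,\pi(\delta_0(A_1,A_2))\,\Gamma\big)\quad\text{and}\quad\Tr_{\rm Dix}\big(|D_{B,\varepsilon}|^{-2}\,\pi(\delta_1(A_1,A_2))\big).
\end{equation*}
Evaluating each by splitting $|D_{B,\varepsilon}|^{-2}$ along its four diagonal blocks and using the $\xi$-independence of $\Tr_{\rm Dix}(Q_{B,\xi}^{-1}\,X)$ from~\cite[Proposition~2.27]{denittis-sandoval-00}, the first factorizes as $\Tr_{\rm Dix}(Q_B^{-1}\,\delta_0(A_1,A_2))\cdot\Tr_{\C^4}(\Gamma)=0$ since $\Tr_{\C^4}(\Gamma)=0$, while the second equals $4\,\fint_B(\delta_1(A_1,A_2))$ by~\eqref{eq:meas_prop_trip_03}.

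The main obstacle, though not a deep one, is then to show that $\fint_B(\delta_1(A_1,A_2))=0$. I would do this by a classical integration-by-parts argument. Closedness of $\fint_B$ under the derivations $\nabla_j$, namely $\fint_B(\nabla_j A)=0$ for every $A\in\bb{S}_B$, follows from the integral-kernel computation $f_{\nabla_j A}(u)=\ii u_j\,f_A(u)$ together with the prescription $\fint_B(A)=f_A(0)$ from~\cite[Corollary~2.22]{denittis-sandoval-00}. Applying the Leibniz rule to $\nabla_j(AB)$ yields the familiar integration-by-parts formula $\fint_B(\nabla_j A\cdot B)=-\fint_B(A\cdot \nabla_j B)$, and applying it twice transforms $\fint_B(\delta_1(A_1,A_2))$ into $-\fint_B(A_1\,[\nabla_1,\nabla_2]A_2)$. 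The commutator $[\nabla_1,\nabla_2]$ vanishes since the position operators $x_1,x_2$ commute, which concludes the proof.
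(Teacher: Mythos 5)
Your proposal is correct and follows essentially the same route as the paper: reduce to $\widetilde{I_0}(A_1,A_2)$ via Lemma \ref{lemma_pre_meg_res}, use \eqref{eq:app_01} so that the $\delta_0$-term dies on $\Tr_{\C^4}(\Gamma)=0$, and then kill $\fint_B(\delta_1(A_1,A_2))$ by the closedness of $\fint_B$ under $\nabla_1,\nabla_2$ together with the commutativity of the derivations. Your phrasing of the last step as a double integration by parts yielding $-\fint_B(A_1[\nabla_1,\nabla_2]A_2)$ is just a reorganization of the paper's rewriting of $\delta_1(A_1,A_2)$ as a sum of total derivatives.
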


\begin{proof}
	From Lemma \ref{lemma_pre_meg_res} and the definition of the {compatible graded trace} \eqref{eq:dir_trac} one obtains that
	$$
		\rr{tr}_\Gamma\big([F_{B,\varepsilon},\pi(A_1)][F_{B,\varepsilon},\pi(A_2)]\big)\;=\; \Tr_{{\rm Dix}}\left(\Gamma \widetilde{I_0}(A_1,A_2)\right)\;.
	$$
	In view of equation \eqref{eq:app_01} one gets
	$$
		\Gamma\widetilde{I_0}(A_1,A_2)\;=\;-\frac{1}{2\ell_B^2}|D_{B,\varepsilon}|^{-2}\pi\big(\delta_0(A_1,A_2)
		\big)\;\Gamma\;+\;\frac{\ii}{2\ell_B^2}|D_{B,\varepsilon}|^{-2}\pi\big(\delta_1(A_1,A_2)\;.
		\big)
	$$
	By using the vanishing of the trace of $\Gamma$ as in the proof of
	Lemma \ref{lem:cyc2cocy_Dix} one obtains
	$$
		\begin{aligned}
			\rr{tr}_\Gamma\big([F_{B,\varepsilon},\pi(A_1)][F_{B,\varepsilon},\pi(A_2)]\big)\; & =\; \frac{\ii}{2\ell_B^2}\Tr_{{\rm Dix}}\left(|D_{B,\varepsilon}|^{-2}\pi\big(\delta_1(A_1,A_2)\right) \\
			                                                                                   & =\; \frac{\ii 2}{\ell_B^2}\Tr_{{\rm Dix}}\left(\frac{1}{Q_B+\xi{\bf 1}}\; \delta_1(A_1,A_2)
			\right)                                                                                                                                                                                     \\
			                                                                                   & =\; \frac{\ii 2}{\ell_B^2}\;\fint_B\big( \delta_1(A_1,A_2)\big)
		\end{aligned}
	$$
	where $\xi>-1$ and the last equality is a consequence of~\cite[Proposition 2.27]{denittis-sandoval-00}. By using the Leibniz's rule for derivations $\nabla_1$ and $\nabla_2$ one obtains that
	$$
		\begin{aligned}
			\nabla_1A_1\nabla_2A_2\; & =\;\nabla_1(A_1\nabla_2A_2)\;-\;A_1(\nabla_1\circ\nabla_2A_2)\;, \\
			\nabla_2A_1\nabla_1A_2\; & =\;\nabla_2(A_1\nabla_1A_2)\;-\;A_1(\nabla_2\circ\nabla_1A_2)\;.
		\end{aligned}
	$$
	Since the derivations $\nabla_1$ and $\nabla_2$ commute one gets
	$$
		\delta_1(A_1,A_2)\;:=\;\nabla_1(A_1\nabla_2A_2)\;-\;\nabla_2(A_1\nabla_1A_2)\;.
	$$
	The property $\fint_B\circ \nabla_j=0$ for $j=1,2$ (\cf~\cite[Section 2.8]{denittis-sandoval-00}) concludes the proof.
\end{proof}

\appendix

\section{Technicalities}\label{app:tech_res}

\subsection{Weak $L^p$-spaces}\label{app_w-lp-dual}
The information contained in this section is quite standard and can be found in
numerous  publications existing in the literature. For the benefit of the reader we will refer mainly to~\cite{pietsch-80,simon-76,simon-05,connes-94,lord-sukochev-zanin-12,azamov-mcdonald-sukochev-zanin-19}.

\medskip

Let $T\in\bb{K}(\s{H})$ be a compact operator.
The Schatten quasi-norm of order $p>0$ is defined as
\begin{equation}
	\|T\|_{p}\;:=\;\left(\sum_{m\in \mathbb{N}_{0}}\mu_m(T)^{p}\right)^{\frac{1}{p}}
\end{equation}
where $\mu_m(T)$ denotes the sequence of singular values of $T$
listed in decreasing order and repeated according to their multiplicity. This is a norm for $p\geqslant 1$. The corresponding
Schatten ideal of order $p$ is defined as
$$
	\rr{S}^{p}\;:=\;\{T\in\bb{K}(\s{H})\;|\;\|T\|_{p} <+\infty \}\;.
$$

\medskip

The \emph{weak} Schatten quasi-norm of order $p>0$ is defined as
\begin{equation}\label{eq:q-norm1}
	\|T\|_{p^+}^\star\;:=\;\sup_{m\geqslant 0}\left\{(m+1)^{\frac{1}{p}}\mu_m(T)\right\}\
\end{equation}
The \emph{weak} Schatten ideal of order $p$ is given by
$$
	\rr{S}^{p}_{\rm w}\;:=\;\{T\in\bb{K}(\s{H})\;|\;\|T\|_{p^+}^\star<+\infty \}\;.
$$
This is a two sided quasi-Banach ideal. The minimal ideal $\rr{S}^{p}_{{\rm w},0}\subset \rr{S}^{p}_{\rm w}$ is defined by
$$
	\rr{S}^{p}_{{\rm w},0}\;:=\;\left\{T\in\rr{S}^{p}_{\rm w}\;\left|\; \lim_{m\to+\infty}\;(m+1)^{\frac{1}{p}}\mu_m(T)=0\right\}\right.\;.
$$
The quasi-norm \eqref{eq:q-norm1} meets the following algebraic properties
\begin{equation}\label{eq:q-norm2}
	\begin{aligned}
		 & \|AT\|_{p^+}^\star\;\leqslant\;\|A\|\; \|T\|_{p^+}^\star\;,\quad  \|TA\|_{p^+}^\star\;\leqslant\;\|A\|\; \|T\|_{p^+}^\star\;, \\
		 & \|T_1+T_2\|_{p^+}^\star\;\leqslant\;2^{\frac{1}{p}}\left(\|T_1\|_{p^+}^\star+\|T_2\|_{p^+}^\star\right)\;,
	\end{aligned}
\end{equation}
valid for every $T,T_1,T_2\in \rr{S}^{p}_{\rm w}$ and $A\in\bb{B}(\s{H})$. Moreover, one has a H\"older type inequality,
\begin{equation}\label{eq:q-norm3}
	\begin{aligned}
		&\|T_1T_2\|_{r^+}^\star\;\leqslant\;2^{\frac{1}{r}}\;\|T_1\|_{p^+}^\star\; \|T_2\|_{q^+}^\star\end{aligned}
\end{equation}
valid for every $T_1\in \rr{S}^{p}_{\rm w}$ and $T_2\in \rr{S}^{q}_{\rm w}$ such that $r^{-1}=p^{-1}+q^{-1}$.
A proof of \eqref{eq:q-norm2} and \eqref{eq:q-norm3} can be found in~\cite[Theorem 2.1]{simon-76}.  For $p>q$, one can  prove that~\cite[eq. (2.2)]{azamov-mcdonald-sukochev-zanin-19}
$$
	\|T\|_p\;\leqslant\;\rr{Z}\left(\frac{p}{q}\right)^{\frac{1}{p}}\; \|T\|_{q^+}^\star
$$
where $\|\;\|_p$ is the norm of the Schatten ideal $\rr{S}^{p}$.
As a consequence one obtains the continuous embeddings $\rr{S}^{q}_{\rm w}\subset \rr{S}^{p}$ if $p>q$. Since the condition $T\in \rr{S}^{p}$ implies that $\mu_m(T)^p$ must  decrease faster than $m^{-1}$ one obtains the chain of  continuous embeddings
$$
	\rr{S}^{q}_{\rm w}\;\subset\; \rr{S}^{p}\;\subset\;\rr{S}^{p}_{{\rm w},0}\; \;\subset\;\rr{S}^{p}_{{\rm w}}\;,\qquad 0<q<p<+\infty\;.
$$

\medskip

For $p>1$ the weak Schatten quasi-norm is equivalent to
Calder\'on
norm
$$
	\|T\|_{p^+}\;:=\;\sup_{N\geqslant 1}\left\{\frac{1}{N^{1-\frac{1}{p}}}\sum_{m=0}^{N-1}\mu_m(T)\right\}\;
$$
since one has the inequalities
$$
	\|T\|_{p^+}^\star\;\leqslant\;\|T\|_{p^+}\;\leqslant\;\frac{p}{p-1}\|T\|_{p^+}^\star\;.
$$
Therefore, when $p>1$ the spaces $\rr{S}^{p}_{\rm w}$ are indeed two-sided Banach ideals. In particular one has that $\rr{S}^{p}_{\rm w}=\rr{S}^{p^+}$ coincides with the $p$-th Dixmier ideal   as defined in~\cite[Appendix B.1]{denittis-gomi-moscolari-19}.

\medskip

For $p=1$  the appropriate Calder\'on
norm is
$$
	\|T\|_{1^+}\;:=\;\sup_{N\geqslant 2}\left\{\frac{1}{\log(N)}\sum_{m=0}^{N-1}\mu_m(T)\right\}
$$
and the corresponding Dixmier ideal is
$$
	\rr{S}^{1^+}\;:=\;\{T\in\bb{K}(\s{H})\;|\;\|T\|_{1^+}<+\infty \}\;.
$$
Also in this case
$\rr{S}^{1^+}_0$ denotes the closure of the finite-rank operators in the  Calder\'on
norm  $\|\;\|_{1^+}$. In this case the weak Schatten ideal $\rr{S}^{1}_{\rm w}$ is just smaller than $\rr{S}^{1^+}$. In fact one has the chain of strict inclusions
$$
	\rr{S}^{1}\;\subset\; \rr{S}^{1^+}_0\;\subset\;\rr{S}^{1}_{{\rm w}}\; \;\subset\;\rr{S}^{1^+}\;.
$$

\medskip

For $q\geqslant 1$, let us introduce the Ma\u{c}aev norm
$$
	\|T\|_{q^-}\;:=\;\sum_{m=0}^{+\infty}\frac{\mu_m(T)}{(m+1)^{1-\frac{1}{q}}}\;,
$$
and the Ma\u{c}aev ideal of order $q$
$$
	\rr{S}^{q^-}\;:=\;\{T\in\bb{K}(\s{H})\;|\;\|T\|_{q^-}<+\infty \}\;.
$$
one has the following strict inclusions
$$
	\rr{S}^{s^+}\;\subset\;\rr{S}^{q^-}\;\subset\; \rr{S}^{q}\;\subset\;\rr{S}^{q^+}\;,\qquad 1\leqslant s <q<+\infty\;.
$$
The space $\rr{S}^{p^+}$ is the dual of $\rr{S}^{q^-}$ when $q^{-1}+p^{-1}=1$, $p,q\geqslant 1$. This means that if $T\in \rr{S}^{p^+}$ and $S\in\rr{S}^{q^-}$, then $TS\in\rr{S}^{1}$
and $
	\|TS\|_1\;\leqslant\;\|T\|_{p^+}\; \|S\|_{q^-}$\;.

\subsection{Relevant trace-class elements}
Let $Q_B$ be the {harmonic oscillator} \eqref{eq:harm_osc}, and let us define
$$
	Q_{B,\varepsilon}\;:=\;Q_B\;+\;\varepsilon{\bf 1}\;,\qquad \varepsilon>-1\;.
$$
\begin{lemma}\label{lemma:help_ideal_basic}
	It holds true that
	\begin{equation}
		\label{eq:lemma-help_ideal_basic}
		C_{\varepsilon,\varepsilon'}(A)\;:=\;Q_{B,\varepsilon}^{-\frac{1}{2}}\;A\;-\;A\;Q_{B,\varepsilon'}^{-\frac{1}{2}}\;\in\;\rr{S}^{1}
	\end{equation}
	for every $A\in\bb{S}_B$, independently of $\varepsilon,\varepsilon'>-1$.
\end{lemma}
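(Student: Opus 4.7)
\medskip

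\noindent
\textbf{Proof proposal for Lemma \ref{lemma:help_ideal_basic}.} The plan is to reduce to an explicit computation on the basis elements $\Upsilon_{j\mapsto k}$ of $\bb{S}_B$, use the fact that $Q_B$ is diagonal in the Laguerre basis $\{\psi_{n,m}\}$ with $Q_B\psi_{n,m}=(n+m+1)\psi_{n,m}$, and then exploit the rapid decrease of the coefficients $\{a_{j,k}\}$ identifying $A\in\bb{S}_B$.

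First, I would compute the action of $C_{\varepsilon,\varepsilon'}(\Upsilon_{j\mapsto k})$ on a Laguerre basis vector. Combining $\Upsilon_{j\mapsto k}\psi_{n,m}=\delta_{j,n}\psi_{k,m}$ with the diagonal action of $Q_B$, one finds
\[
    C_{\varepsilon,\varepsilon'}(\Upsilon_{j\mapsto k})\,\psi_{n,m}\;=\;\delta_{j,n}\,c^{j,k}_m\,\psi_{k,m}\;,\qquad c^{j,k}_m\;:=\;\frac{1}{\sqrt{k+m+1+\varepsilon}}\;-\;\frac{1}{\sqrt{j+m+1+\varepsilon'}}\;.
\]
Since $\{\psi_{j,m}\}_{m\in\N_0}$ and $\{\psi_{k,m}\}_{m\in\N_0}$ are orthonormal systems, this operator has singular values $\{|c^{j,k}_m|\}_{m\in\N_0}$, so
\[
    \|C_{\varepsilon,\varepsilon'}(\Upsilon_{j\mapsto k})\|_1\;=\;\sum_{m=0}^{\infty}|c^{j,k}_m|\;.
\]

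The next step is to estimate this sum polynomially in $j,k$. Setting $a:=k+\varepsilon$, $b:=j+\varepsilon'$ and applying the mean value theorem to $x\mapsto(m+1+x)^{-1/2}$ yields
\[
    |c^{j,k}_m|\;\leqslant\;\frac{|a-b|}{2}\,(m+1+\min(a,b))^{-3/2}\;,
\]
and summing over $m$ with the integral comparison test produces a bound of the form
\[
    \|C_{\varepsilon,\varepsilon'}(\Upsilon_{j\mapsto k})\|_1\;\leqslant\;C_{\varepsilon,\varepsilon'}\,\sqrt{\,1+|j-k|\,}
\]
(an explicit telescoping argument gives essentially $2(\sqrt{a}-\sqrt{b})$ when $a>b$), valid uniformly once $\varepsilon,\varepsilon'>-1$ are fixed. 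The key point is merely that the bound is \emph{polynomially} growing in $(j,k)$.

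Finally, for a general $A=\sum_{(j,k)\in\N_0^2}a_{j,k}\Upsilon_{j\mapsto k}\in \bb{S}_B$ with $\{a_{j,k}\}\in S(\N_0^2)$, linearity and the triangle inequality for the trace norm yield
\[
    \|C_{\varepsilon,\varepsilon'}(A)\|_1\;\leqslant\;\sum_{(j,k)\in\N_0^2}|a_{j,k}|\;\|C_{\varepsilon,\varepsilon'}(\Upsilon_{j\mapsto k})\|_1\;\leqslant\;C_{\varepsilon,\varepsilon'}\sum_{(j,k)\in\N_0^2}|a_{j,k}|\,\sqrt{1+|j-k|}\;,
\]
which converges since $\{a_{j,k}\}$ is rapidly decreasing. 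This places the series defining $C_{\varepsilon,\varepsilon'}(A)$ in the trace ideal $\rr{S}^1$, independently of the choice of $\varepsilon,\varepsilon'>-1$. The main technical obstacle is the telescoping/asymptotic estimate for the sum $\sum_m |c^{j,k}_m|$, but this is a routine calculus exercise once the explicit form of $c^{j,k}_m$ has been isolated; the representation of $A$ in terms of the transition operators and the control of the basis action of $Q_B$ do all of the heavy lifting.
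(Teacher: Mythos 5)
Your proposal is correct and follows essentially the same route as the paper: diagonalize $C_{\varepsilon,\varepsilon'}(\Upsilon_{j\mapsto k})$ explicitly on the Laguerre basis, read off the singular values $\propto(m+1)^{-3/2}$ with a prefactor growing at most polynomially in $(j,k)$, and then sum over the generators using the rapid decrease of $\{a_{j,k}\}$. The only cosmetic difference is that your telescoping/MVT estimate gives a $\sqrt{1+|j-k|}$ bound where the paper settles for the linear bound $\tfrac{1}{2}\rr{Z}(3/2)\,|\zeta_j-\zeta_k|$; either is amply sufficient.
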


\begin{proof}
	Let us start with the simple case of $A=\Upsilon_{j\mapsto k}$, where the   {transition operator} is defined by \eqref{eq:intro:basic_op}. In the proof of~\cite[Lemma 3.14]{denittis-sandoval-00} it has been proved that
	$$
		C_{\varepsilon,\varepsilon'}(\Upsilon_{j\mapsto k})\;:=\;Q_{B,\varepsilon}^{-\frac{1}{2}}\Upsilon_{j\mapsto k}\;-\;\Upsilon_{j\mapsto k}Q_{B,\varepsilon'}^{-\frac{1}{2}}\;\in\;\rr{S}^{1}\subset\rr{S}^{1^+}_0
	$$
	is a trace class element. This follows by observing that the    singular values are given by
	$$
		\mu_m\left[C_{\varepsilon,\varepsilon'}(\Upsilon_{j\mapsto k})\right]\;:=\;\frac{\alpha_m^{(j,k)}}{(m+1)^{\frac{3}{2}}}\;\propto\;(m+1)^{-\frac{3}{2}}
	$$
	with
	$$
		\alpha_m^{(j,k)}\;:=\;\frac{|\zeta_j-\zeta_k|}{\sqrt{1+\frac{\zeta_j}{m+1}}\sqrt{1+\frac{\zeta_k}{m+1}}\left(\sqrt{1+\frac{\zeta_j}{m+1}}+\sqrt{1+\frac{\zeta_k}{m+1}}\right)}\;\leqslant\;\frac{|\zeta_j-\zeta_k|}{2}
	$$
	where $\zeta_j:=j+ \varepsilon'$, $\zeta_k:=k+\varepsilon$. Moreover, from the explicit form of the singular values one infers that
	$$
		\left\|C_{\varepsilon,\varepsilon'}(\Upsilon_{j\mapsto k})\right\|_{1}\;\leqslant\;\frac{1}{2}\;\rr{Z}\left(\frac{3}{2}\right)\;|\zeta_j-\zeta_k|
	$$
	where $\rr{Z}$ denotes the Riemann zeta function.
	Now, let us consider a generic element
	$A=\sum_{(j,k)\in\N_0^2}a_{j,k}\Upsilon_{j\mapsto k}$ with $\{a_{j,k}\}\in S(\N_0^2)$. The linearity of the commutator,  the triangular inequality and  the  Cauchy–Schwarz inequality imply
	$$
		\left\|Q_{B,\varepsilon}^{-\frac{1}{2}}A\;-\;AQ_{B,\varepsilon}^{-\frac{1}{2}}\right\|_{1}\;\leqslant\;
		\frac{1}{2}\;\rr{Z}\left(\frac{3}{2}\right)\;\sum_{(j,k)\in\N_0^2}|\zeta_j-\zeta_k|\;|a_{j,k}|\;\leqslant\;M_{\varepsilon,\varepsilon'}^p\;r_p(\{a_{j,k}\})
	$$
	where
	$$
		M_{\varepsilon,\varepsilon'}^p\;:=\;\frac{1}{2}\;\rr{Z}\left(\frac{3}{2}\right)\;\left(\sum_{(j,k)\in\N_0^2}\frac{|\zeta_j-\zeta_k|}{(2j+1)^p(2k+1)^p}\right)^{\frac{1}{2}}
	$$
	is a finite constant whenever $p>2$ and
	$$
		r_p(\{a_{j,k}\})\;:=\;\left(\sum_{(j,k)\in\N_0^2}{(2j+1)^p(2k+1)^p|a_{j,k}|^2}\right)^{\frac{1}{2}}
	$$
	is the $p$-th Schwarz semi-norm of the Fr\'echet space $S(\N_0^2)$.
	This proves that $C_{\varepsilon,\varepsilon'}:\bb{S}_B\to \rr{S}^{1}$ is a continuous map.
\end{proof}

The next result can be proved along the same lines as those of the proof of Lemma  \ref{lemma:help_ideal_basic}. However, it provides two  vanishing criteria for elements in   $\bb{L}^1_B$ and not only in $\bb{S}_B$. The proof of the next result justifies the formulas
anticipated after~\cite[Proposition 2.27]{denittis-sandoval-00} .

\begin{lemma}\label{lemma:help_ideal}
	It holds true that
	\begin{equation}
		\label{eq:appendix-commutator-with-Q-trace-class-I}
		D_{\varepsilon,\varepsilon'}(A)\;:=\;Q_{B,\varepsilon}^{-1}\;A\;-\;A\;Q_{B,\varepsilon'}^{-1}\;\in\;\rr{S}^{1^+}_0\;,
	\end{equation}
	and
	\begin{equation}
		\label{eq:appendix-commutator-with-Q-trace-class-I}
		J_{\varepsilon,\varepsilon',\varepsilon''}(A)\;:=\;Q_{B,\varepsilon}^{-\frac{1}{2}}\;A\; Q_{B,\varepsilon'}^{-\frac{1}{2}}\;-\;Q_{B,\varepsilon''}^{-1}\;A\;\in\;\rr{S}^{1^+}_0\;,
	\end{equation}
	for every $A\in\bb{L}^1_B$, independently of $\varepsilon,\varepsilon',\varepsilon''>-1$.
\end{lemma}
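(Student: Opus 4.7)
The plan is to reduce both identities to Lemma \ref{lemma:help_ideal_basic} through algebraic manipulation, and then to pass from $\bb{S}_B$ to $\bb{L}^1_B$ by a density argument. Using the defining relation
\[
Q_{B,\varepsilon}^{-\frac{1}{2}}A \;=\; A\,Q_{B,\varepsilon'}^{-\frac{1}{2}} \;+\; C_{\varepsilon,\varepsilon'}(A),
\]
a double substitution inside $Q_{B,\varepsilon}^{-1}A - AQ_{B,\varepsilon'}^{-1} = Q_{B,\varepsilon}^{-\frac{1}{2}}(Q_{B,\varepsilon}^{-\frac{1}{2}}A) - (AQ_{B,\varepsilon'}^{-\frac{1}{2}})Q_{B,\varepsilon'}^{-\frac{1}{2}}$ shows
\[
D_{\varepsilon,\varepsilon'}(A) \;=\; Q_{B,\varepsilon}^{-\frac{1}{2}}\,C_{\varepsilon,\varepsilon'}(A) \;+\; C_{\varepsilon,\varepsilon'}(A)\,Q_{B,\varepsilon'}^{-\frac{1}{2}}\;,
\]
while a single substitution inside $Q_{B,\varepsilon}^{-\frac{1}{2}}AQ_{B,\varepsilon'}^{-\frac{1}{2}}$ gives
\[
J_{\varepsilon,\varepsilon',\varepsilon''}(A) \;=\; C_{\varepsilon,\varepsilon'}(A)\,Q_{B,\varepsilon'}^{-\frac{1}{2}} \;-\; D_{\varepsilon'',\varepsilon'}(A).
\]
So the $J$-assertion follows from the $D$-assertion together with the ideal-membership of $C_{\varepsilon,\varepsilon'}(A)\,Q_{B,\varepsilon'}^{-\frac{1}{2}}$, and everything reduces to controlling products of $C_{\varepsilon,\varepsilon'}(A)$ against the bounded operator $Q_{B,\cdot}^{-\frac{1}{2}}$.

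For $A \in \bb{S}_B$ this is immediate: Lemma \ref{lemma:help_ideal_basic} gives $C_{\varepsilon,\varepsilon'}(A)\in\rr{S}^1$, and since $\rr{S}^1$ is a two-sided ideal of $\bb{B}(\s{H})$ both summands lie in $\rr{S}^1 \subseteq \rr{S}^{1^+}_0$ via the chain of inclusions of Appendix \ref{app_w-lp-dual}. To extend to general $A \in \bb{L}^1_B$, I would approximate $A$ by a sequence $A_n \in \bb{S}_B$ converging in the coefficient $\ell^1$-norm (possible since $S(\N_0^2)$ is dense in $\ell^1(\N_0^2)$), note that each $D_{\varepsilon,\varepsilon'}(A_n)$, $J_{\varepsilon,\varepsilon',\varepsilon''}(A_n)$ lies in $\rr{S}^{1^+}_0$ by the base case, and exploit closedness of $\rr{S}^{1^+}_0$ inside $(\rr{S}^{1^+}, \|\cdot\|_{1^+})$.

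The main obstacle is establishing convergence of the approximating sequences in the Dixmier ideal norm $\|\cdot\|_{1^+}$. Via the decomposition above this reduces to estimating $\|Q_{B,\cdot}^{-\frac{1}{2}}\,C_{\varepsilon,\varepsilon'}(A_n - A)\|_{1^+}$ by a quantity controlled by $\|A_n - A\|_{\bb{L}^1_B}$. The spectral counting for the two-dimensional harmonic oscillator gives $Q_{B,\cdot}^{-\frac{1}{2}}\in\rr{S}^{4^+}$, so by the H\"older-type inequality for weak Schatten ideals (Appendix \ref{app_w-lp-dual}) it would suffice to upgrade Lemma \ref{lemma:help_ideal_basic} to a continuous extension $C_{\varepsilon,\varepsilon'}\colon\bb{L}^1_B\to\rr{S}^{q^+}$ for some $q\leqslant \frac{4}{3}$. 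The naive per-generator bound $\|C_{\varepsilon,\varepsilon'}(\Upsilon_{j\mapsto k})\|_{q^+}\sim |j-k|$ is not summable against generic $\ell^1$-coefficients, so closing this step will require grouping the generators by the diagonal index $j-k$ and exploiting cancellations across the block-diagonal (in $m$) Laguerre-basis matrix representation of $C_{\varepsilon,\varepsilon'}(\Upsilon_{j\mapsto k})$, whose non-zero singular values decay as $|j-k|\,(m+1)^{-3/2}$ and therefore leave enough room to absorb the $|j-k|$ growth once the sum is organized diagonal-by-diagonal.
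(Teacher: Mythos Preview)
Your algebraic reduction $D_{\varepsilon,\varepsilon'}(A)=Q_{B,\varepsilon}^{-1/2}C_{\varepsilon,\varepsilon'}(A)+C_{\varepsilon,\varepsilon'}(A)Q_{B,\varepsilon'}^{-1/2}$ is correct and, together with Lemma~\ref{lemma:help_ideal_basic}, settles the case $A\in\bb{S}_B$ cleanly. The gap is exactly where you locate it: passing to $\bb{L}^1_B$. The ``diagonal grouping'' you propose at the end does not close it. For fixed offset $d=j-k$ the operators $C_{\varepsilon,\varepsilon'}(\Upsilon_{j\mapsto k})$ act on mutually orthogonal Landau subspaces, so their singular values simply concatenate rather than cancel; you still face a per-generator bound proportional to $|j-k|$ against bare $\ell^1$ coefficients, and there is no interference to exploit. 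The obstruction is structural: Lemma~\ref{lemma:help_ideal_basic} only delivers $C_{\varepsilon,\varepsilon'}$ as a map $\bb{S}_B\to\rr{S}^1$, and that map does \emph{not} extend continuously to $\bb{L}^1_B$ into any $\rr{S}^{q^+}$ with $q\leqslant 4/3$.

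The paper sidesteps this by abandoning the $C$-decomposition altogether for the density step. It computes the singular values of $D_{\varepsilon,\varepsilon'}(\Upsilon_{j\mapsto k})$ directly (they behave like $|\zeta_j-\zeta_k|\,(m+1)^{-2}$, so each generator lands in $\rr{S}^1\subset\rr{S}^{1^+}_0$), and then---this is the point you are missing---observes that in the \emph{Dixmier} norm one has the crude but uniform bound
\[
\|D_{\varepsilon,\varepsilon'}(\Upsilon_{j\mapsto k})\|_{1^+}\;\leqslant\;\|Q_{B,\varepsilon}^{-1}\Upsilon_{j\mapsto k}\|_{1^+}+\|\Upsilon_{j\mapsto k}Q_{B,\varepsilon'}^{-1}\|_{1^+}\;\leqslant\;2,
\]
because each $Q_{B,\cdot}^{-1}\Upsilon_{j\mapsto k}$ is supported on a single Landau level where its singular values are $(m+1+\mathrm{const})^{-1}$, giving $\|\cdot\|_{1^+}\leqslant 1$ independently of $j,k$ (cf.\ \cite[Proposition~2.27]{denittis-sandoval-00}). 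The analogous bound holds for $J_{\varepsilon,\varepsilon',\varepsilon''}(\Upsilon_{j\mapsto k})$. This uniform estimate---which fails for $\|\cdot\|_1$ but holds for $\|\cdot\|_{1^+}$---is what makes $D_{\varepsilon,\varepsilon'}:\bb{L}^1_B\to\rr{S}^{1^+}$ continuous by the triangle inequality, after which closedness of $\rr{S}^{1^+}_0$ finishes the argument exactly as you outlined.
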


\begin{proof}
	Let us start with the simple case of $A=\Upsilon_{j\mapsto k}$. By using the same notation in~\cite[Corollary 2.26]{denittis-sandoval-00} one gets the spectral resolution
	$$
		D_{\varepsilon,\varepsilon'}(\Upsilon_{j\mapsto k})\;=\;\left(\sum_{m\in\N_0}\frac{\zeta_j-\zeta_k}{(m+1+\zeta_j)(m+1+\zeta_k)}P_m\right)\Upsilon_{j\mapsto k}\;,
	$$
	where $\zeta_j:=j+\varepsilon'$, $\zeta_k:=k+\varepsilon$
	and the $P_m$ are the dual Landau projections defined in~\cite[eq. (2.15)]{denittis-sandoval-00}. It follows that
	$$
		\big|D_{\varepsilon,\varepsilon'}(\Upsilon_{j\mapsto k})\big|\;=\;\left(\sum_{m\in\N_0}\frac{|\zeta_j-\zeta_k|}{(m+1+\zeta_j)(m+1+\zeta_k)}P_m\right)\Pi_j
	$$
	which shows that the sequence of singular values of $D_{\varepsilon,\varepsilon'}(\Upsilon_{j\mapsto k})$ behaves asymptotically as
	\begin{equation}\label{eq:sing_va-D}
		\mu_m\left[D_{\varepsilon,\varepsilon'}(\Upsilon_{j\mapsto k})\right]\;:=\;\frac{|\zeta_j-\zeta_k|}{(m+1+\zeta_j)(m+1+\zeta_k)}\;\propto\;(m+1)^{-2}
	\end{equation}
	and has multiplicity $1$. As a consequence one gets that
	$D_{\varepsilon,\varepsilon'}(\Upsilon_{j\mapsto k})$ is trace-class and in turn
	$D_{\varepsilon,\varepsilon'}(\Upsilon_{j\mapsto k})\in\rr{S}^{1}\subset\rr{S}^{1^+}_0$. Moreover, from the proof of~\cite[Proposition 2.27]{denittis-sandoval-00} one obtains
	$$
		\|D_{\varepsilon,\varepsilon'}(\Upsilon_{j\mapsto k})\|_{1+}\;\leqslant\;\|Q_{B,\varepsilon}^{-1} \Upsilon_{j\mapsto k}\|_{1+}\;+\; \|\Upsilon_{j\mapsto k}\;Q_{B,\varepsilon}^{-1}\|_{1+}\;\leqslant\; 2\;.
	$$
	Now, let us consider a generic element
	$A=\sum_{(j,k)\in\N_0^2}a_{j,k}\Upsilon_{j\mapsto k}$ with $\{a_{j,k}\}\in\ell^1(\N_0^2)$. The linearity of $C_{\varepsilon,\varepsilon'}$ and the triangular inequality imply
	$$
		\|D_{\varepsilon,\varepsilon'}(A)\|_{1+}\;\leqslant\;
		\sum_{(j,k)\in\N_0^2}|a_{j,k}|\; \|D_{\varepsilon,\varepsilon'}(\Upsilon_{j\mapsto k})\|_{1+}\;\leqslant\;2\; \|\{a_{j,k}\}\|_{\ell^1}\;.
	$$
	This proves that $D_{\varepsilon,\varepsilon'}:\bb{L}^1_B\to \rr{S}^{1^+}$ is a continuous map. Since
	$D_{\varepsilon,\varepsilon'}(A)\in \rr{S}^{1^+}_0$ if $A$ is a finite linear combination of the operators
	$\Upsilon_{j\mapsto k}$, and $\rr{S}^{1^+}_0$ is closed in the norm $\|\;\|_{_{1+}}$, one
	infers that $D_{\varepsilon,\varepsilon'}(\bb{L}^1_B) \subseteq \rr{S}^{1^+}_0$ by continuity.
	The proof for the map $J_{\varepsilon,\varepsilon',\varepsilon''}$ is similar. From  the spectral resolution
	$$
		J_{\varepsilon,\varepsilon',\varepsilon''}(\Upsilon_{j\mapsto k})\;=\;\left(\sum_{m\in\N_0}\frac{m+1+\xi_k-\sqrt{(m+1+\zeta_j)(m+1+\zeta_k)}}{(m+1+\xi_k)\sqrt{(m+1+\zeta_j)(m+1+\zeta_k)}}P_m\right)\Upsilon_{j\mapsto k}\;,
	$$
	with $\xi_k:=k+\varepsilon''$, one gets
	$$
		\mu_m\left[J_{\varepsilon,\varepsilon',\varepsilon''}(\Upsilon_{j\mapsto k})\right]\;=\;\frac{\left|m+1+\xi_k-\sqrt{(m+1+\zeta_j)(m+1+\zeta_k)}\right|}{(m+1+\xi_k)\sqrt{(m+1+\zeta_j)(m+1+\zeta_k)}}\;\propto\;(m+1)^{-2}\;,
	$$
	which implies $J_{\varepsilon,\varepsilon',\varepsilon''}(\Upsilon_{j\mapsto k})\in\rr{S}^{1}\subset\rr{S}^{1^+}_0$.
	Again~\cite[Proposition 2.27]{denittis-sandoval-00} provides the estimate $\|J_{\varepsilon,\varepsilon',\varepsilon''}(\Upsilon_{j\mapsto k})\|_{1+}\leqslant 2$. At this point, the continuity argument follows as in the previous case.
\end{proof}

\begin{remark}\label{rk:trac-class-strong}
	From the explicit form of the singular values of $D_{\varepsilon,\varepsilon'}(\Upsilon_{j\mapsto k})$ provided in \eqref{eq:sing_va-D} one infers that
	$$
		\|D_{\varepsilon,\varepsilon'}(\Upsilon_{j\mapsto k})\|_{1}\;\leqslant\;\rr{Z}\left(3\right)\;|\zeta_j-\zeta_k|
		\;
	$$
	where $\rr{Z}$ denotes the Riemann zeta function. Therefore, one can use the same argument in the proof of Lemma \ref {lemma:help_ideal_basic}  to show that $D_{\varepsilon,\varepsilon'}:\bb{S}_B\to \rr{S}^{1}$ is a continuous map. A similar result also holds for $J_{\varepsilon,\varepsilon',\varepsilon''}$.
	\hfill $\blacktriangleleft$
\end{remark}

For the next result we need to introduce the creation and annihilation operators
\begin{equation}\label{eq:b-oper}
	\rr{b}^\pm\;:=\;-\frac{1}{\sqrt{2}}\left(G_1\pm\ii G_2\right)
\end{equation}
and the number operator $N_{\rr{b}}:= \rr{b}^+\rr{b}^-$. From the canonical commutation relation one gets $\rr{b}^-\rr{b}^+=N_{\rr{b}}+1$. If $P_m$ is the dual Landau projection, then one has that $N_{\rr{b}}P_m=P_m N_{\rr{b}}=mP_m$.

\begin{lemma}\label{lemma:help_ideal_basic-+b}
	It holds true that
	\begin{equation}
		\label{eq:lemma-commutator-with-Q-and-b}
		\rr{b}^\pm C_{\varepsilon,\varepsilon'}(A)\;:=\;\rr{b}^\pm\left(Q_{B,\varepsilon}^{-\frac{1}{2}}\;A\;-\;A\;Q_{B,\varepsilon'}^{-\frac{1}{2}}\right)\;\in\;\rr{S}^{q^-}\;,\qquad q>1
	\end{equation}
	for every $A\in\bb{S}_B$, independently of $\varepsilon,\varepsilon'>-1$.
\end{lemma}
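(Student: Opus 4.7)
The strategy mirrors that of Lemma \ref{lemma:help_ideal_basic}: first treat the generating case $A = \Upsilon_{j \mapsto k}$ by an explicit spectral computation, and then extend to the whole Fréchet algebra $\bb{S}_B$ by estimating the Maca\u{e}v norm through a Schwartz semi-norm. More precisely, I would start from the identity
\[
C_{\varepsilon,\varepsilon'}(\Upsilon_{j\mapsto k})\;=\;\Upsilon_{j\mapsto k}\sum_{m\in\N_0}\left[\frac{1}{\sqrt{m+1+\zeta_k}}-\frac{1}{\sqrt{m+1+\zeta_j}}\right]P_m\;,
\]
obtained in the same spirit as the analogous formula in Lemma \ref{lemma:help_ideal} for $D_{\varepsilon,\varepsilon'}$, with $\zeta_j:=j+\varepsilon'$ and $\zeta_k:=k+\varepsilon$. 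The coefficient in front of $P_m$ has magnitude of order $|\zeta_j-\zeta_k|(m+1)^{-3/2}$ for large $m$, which reproduces the decay used in Lemma \ref{lemma:help_ideal_basic}.

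Next, I would exploit the fact that $\rr{b}^{\pm}$ shifts the dual Landau levels. Since $N_{\rr{b}}P_m=mP_m$ and $\rr{b}^-\rr{b}^+=N_{\rr{b}}+{\bf 1}$, one has $\rr{b}^{\pm}P_m = P_{m\pm 1}\rr{b}^{\pm}$, and the operators $P_{m\pm 1}\rr{b}^{\pm}P_m$ act as partial isometries multiplied by scalars of order $\sqrt{m+1}$. Inserting this on the left of the previous spectral representation gives a decomposition of $\rr{b}^{\pm}C_{\varepsilon,\varepsilon'}(\Upsilon_{j\mapsto k})$ whose singular values satisfy
\[
\mu_m\!\left[\rr{b}^{\pm}C_{\varepsilon,\varepsilon'}(\Upsilon_{j\mapsto k})\right]\;\leqslant\;\frac{K\,|\zeta_j-\zeta_k|}{m+1}\;,\qquad m\in\N_0\;,
\]
for a universal constant $K>0$. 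Substituting into the Maca\u{e}v norm,
\[
\|\rr{b}^{\pm}C_{\varepsilon,\varepsilon'}(\Upsilon_{j\mapsto k})\|_{q^-}\;\leqslant\;K\,|\zeta_j-\zeta_k|\sum_{m\geqslant 0}\frac{1}{(m+1)^{2-\frac{1}{q}}}\;=\;K\,|\zeta_j-\zeta_k|\,\rr{Z}\!\left(2-\tfrac{1}{q}\right)\;,
\]
which is finite precisely when $q>1$. This establishes the result for each transition operator with a bound linear in $|\zeta_j-\zeta_k|$.

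For a general $A=\sum_{(j,k)\in\N_0^2}a_{j,k}\Upsilon_{j\mapsto k}$ with $\{a_{j,k}\}\in S(\N_0^2)$, the triangle and Cauchy--Schwarz inequalities yield
\[
\|\rr{b}^{\pm}C_{\varepsilon,\varepsilon'}(A)\|_{q^-}\;\leqslant\;K\,\rr{Z}\!\left(2-\tfrac{1}{q}\right)\sum_{(j,k)\in\N_0^2}|\zeta_j-\zeta_k|\,|a_{j,k}|\;\leqslant\;M^p_{\varepsilon,\varepsilon',q}\,r_p(\{a_{j,k}\})\;,
\]
for a finite constant $M^p_{\varepsilon,\varepsilon',q}$ obtained by the same weight-splitting trick as in the proof of Lemma \ref{lemma:help_ideal_basic}, where $r_p$ is the $p$-th Schwartz semi-norm with $p>2$. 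This proves that $\rr{b}^{\pm}C_{\varepsilon,\varepsilon'}\colon\bb{S}_B\to\rr{S}^{q^-}$ is continuous and settles the claim.

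\textbf{Anticipated difficulty.} The routine part is the Schwartz-semi-norm extension; the delicate step is the spectral control of $\rr{b}^{\pm}C_{\varepsilon,\varepsilon'}(\Upsilon_{j\mapsto k})$, where one must combine the $\sqrt{m+1}$ growth produced by $\rr{b}^\pm$ with the $(m+1)^{-3/2}$ decay coming from the difference of the square roots of resolvents, in order to reach exactly the critical decay $(m+1)^{-1}$ that places the operator on the boundary $q=1$ of the Maca\u{e}v scale.
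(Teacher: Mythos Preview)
Your proposal is correct and follows essentially the same route as the paper: both start from the explicit spectral representation of $C_{\varepsilon,\varepsilon'}(\Upsilon_{j\mapsto k})$, observe that left-multiplication by $\rr{b}^{\pm}$ contributes a factor of order $(m+1)^{1/2}$ to the $m$-th singular value (the paper writes this as $\mu_m[\rr{b}^{\pm}C_{\varepsilon,\varepsilon'}(\Upsilon_{j\mapsto k})]=(m+\tfrac{1\pm1}{2})^{1/2}\mu_m[C_{\varepsilon,\varepsilon'}(\Upsilon_{j\mapsto k})]$), deduce the bound $\|\rr{b}^{\pm}C_{\varepsilon,\varepsilon'}(\Upsilon_{j\mapsto k})\|_{q^-}\leqslant\tfrac{1}{2}\rr{Z}(2-\tfrac{1}{q})|\zeta_j-\zeta_k|$, and then extend to $\bb{S}_B$ by the Schwartz semi-norm continuity argument of Lemma~\ref{lemma:help_ideal_basic}.
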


\begin{proof}
	The proof follows the same strategy of Lemma \ref{lemma:help_ideal_basic}. Using the spectral representation of
	of $C_{\varepsilon,\varepsilon'}(\Upsilon_{j\mapsto k})$ given in~\cite[Lemma 3.14]{denittis-sandoval-00}
	one can compute explicitly the singular values of $\rr{b}^\pm C_{\varepsilon,\varepsilon'}(\Upsilon_{j\mapsto k})$. One obtains  that
	$$
		\mu_m\left[\rr{b}^\pm C_{\varepsilon,\varepsilon'}(\Upsilon_{j\mapsto k})\right]\;=\; \left(m+\frac{1\pm 1}{2}\right)^{\frac{1}{2}}\mu_m\left[C_{\varepsilon,\varepsilon'}(\Upsilon_{j\mapsto k})\right]\;\propto\;(m+1)^{-1}\;
	$$
	where the singular values $\mu_m\left[C_{\varepsilon,\varepsilon'}(\Upsilon_{j\mapsto k})\right]$ are explicitly described in the proof of Lemma \ref{lemma:help_ideal_basic}. By using the definition of the norm $\|\;\|_{q^-}$ given in Appendix \ref{app_w-lp-dual} one obtains that
	$$
		\left\|\rr{b}^\pm C_{\varepsilon,\varepsilon'}(\Upsilon_{j\mapsto k})\right\|_{q^-}\;\leqslant\;\frac{1}{2}\;\rr{Z}\left(2-\frac{1}{q}\right)\;|\zeta_j-\zeta_k|
	$$
	where $\rr{Z}$ is the Riemann zeta function. Therefore, one has that
	$\rr{b}^\pm C_{\varepsilon,\varepsilon'}(\Upsilon_{j\mapsto k})\in \rr{S}^{q^-}$ whenever $q>1$. The same argument used in the proof of Lemma \ref{lemma:help_ideal_basic} provides the continuity of the maps $\rr{b}^\pm C_{\varepsilon,\varepsilon'}:\bb{S}_B\to\rr{S}^{q^-}$ and this concludes the proof.
\end{proof}

\begin{corollary}\label{cor_dual_opo}
	It holds true that
	$$
		T\;\rr{b}^\pm C_{\varepsilon,\varepsilon'}(A)\;\in\;\rr{S}^{1}\;,\quad \qquad \rr{b}^\pm C_{\varepsilon,\varepsilon'}(A)\;T\;\in\;\rr{S}^{1}
	$$
	for every $A\in\bb{S}_B$ and $T\in\rr{S}^{p^+}$, with $p>1$, independently of $\varepsilon,\varepsilon'>-1$.
\end{corollary}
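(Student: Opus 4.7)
The plan is to reduce the statement to a direct application of the H\"older-type duality for the weak Schatten ideals recalled in Appendix~\ref{app_w-lp-dual}, combined with Lemma~\ref{lemma:help_ideal_basic-+b}.

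First, given $p>1$, I would choose the conjugate exponent $q:=p/(p-1)$, so that $p^{-1}+q^{-1}=1$ and $q>1$. Lemma~\ref{lemma:help_ideal_basic-+b} guarantees $\rr{b}^\pm C_{\varepsilon,\varepsilon'}(A)\in \rr{S}^{q^-}$ for every $A\in\bb{S}_B$ and every admissible $\varepsilon,\varepsilon'>-1$. The duality statement recalled at the end of Appendix~\ref{app_w-lp-dual} asserts that $\rr{S}^{p^+}$ is the dual of $\rr{S}^{q^-}$ and provides the H\"older estimate
$$
\lVert TS\rVert_1\;\leqslant\;\lVert T\rVert_{p^+}\;\lVert S\rVert_{q^-}\;,
\qquad T\in\rr{S}^{p^+},\;S\in\rr{S}^{q^-}\;.
$$
Applying this with $T$ as given and $S=\rr{b}^\pm C_{\varepsilon,\varepsilon'}(A)$ yields the first inclusion $T\,\rr{b}^\pm C_{\varepsilon,\varepsilon'}(A)\in \rr{S}^1$.

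For the symmetric statement $\rr{b}^\pm C_{\varepsilon,\varepsilon'}(A)\,T\in \rr{S}^1$, I would argue by adjoint. Both ideals $\rr{S}^{p^+}$ and $\rr{S}^{q^-}$ are self-adjoint two-sided ideals (see Appendix~\ref{app_w-lp-dual}), so $T^*\in\rr{S}^{p^+}$ and $(\rr{b}^\pm C_{\varepsilon,\varepsilon'}(A))^*\in \rr{S}^{q^-}$. Applying the previous step to $T^*$ and this adjoint gives $T^*(\rr{b}^\pm C_{\varepsilon,\varepsilon'}(A))^*\in\rr{S}^1$; taking adjoints once more and using that $\rr{S}^1$ is closed under the $*$-operation produces $\rr{b}^\pm C_{\varepsilon,\varepsilon'}(A)\,T\in \rr{S}^1$, as claimed.

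There is no real obstacle here: the substantive analytic input (the $\rr{S}^{q^-}$-membership with exponent $q$ arbitrarily close to $1$) has already been delivered by Lemma~\ref{lemma:help_ideal_basic-+b}, and the rest is bookkeeping with the H\"older inequality and the self-adjointness of the ideals. The only point to be slightly careful about is that $p>1$ is strictly needed so that the conjugate exponent $q=p/(p-1)$ satisfies $q>1$, which is precisely the range in which Lemma~\ref{lemma:help_ideal_basic-+b} applies.
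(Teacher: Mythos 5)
Your proposal is correct and follows the same route as the paper: the duality $\rr{S}^{p^+}=(\rr{S}^{q^-})^*$ with $q=p/(p-1)$ combined with Lemma~\ref{lemma:help_ideal_basic-+b} for the first inclusion, and the adjoint trick together with the self-adjointness of the ideals for the second. No gaps.
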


\begin{proof}
	Let $p>1$ and define $q:=p/(p-1)>1$. Since $\rr{S}^{p^+}$ is the dual of $\rr{S}^{q^-}$ (see Appendix \ref{app_w-lp-dual}) and
	using the result proved in Lemma \ref{lemma:help_ideal_basic-+b} one obtains $T\rr{b}^\pm C_{\varepsilon,\varepsilon'}(A)\in \rr{S}^{1}$. The second implication follows from the identity $(A^*B^*)^*=BA$ and the fact that $\rr{S}^{p^\pm}$ and $\rr{S}^{1}$ are self-adjoint ideals.
\end{proof}

\subsection{Quasi-symmetry of the Dirac operator and its consequences}\label{sec:q-sim-dir}
Let us  represent the Dirac operator \eqref{eq:intro_D} as the sum
$$
	D_B\;=\;D_{B,-}\;+\;D_{B,+}
$$
of the two terms
\[
	\begin{aligned}
		D_{B,-}\; & :=\;\frac{1}{\sqrt{2}}\big(K_1\;\otimes\;\gamma_1\;+\;K_2\;\otimes\;\gamma_2\big)\;, \\
		D_{B,+}\; & :=\;\frac{1}{\sqrt{2}}\big(G_1\;\otimes\;\gamma_3\;+\;G_2\;\otimes\;\gamma_4\big)\;.
	\end{aligned}
\]
A simple calculation involving the commutation relations  between the operators
$K_j$, $G_j$, and the matrices $\gamma_j$ provides
\[
	D_{B,+}D_{B,-}\;=\;-D_{B,-}D_{B,+}\;.
\]
The latter equation immediately implies the two relations
\[
	D_B^2\;=\;D_{B,-}^2\;+\;D_{B,+}^2\;,\qquad [D_B^2,D_{B,\pm}]\;=\;0\;.
\]
Finally, a straightforward computation provides
$$
	\Gamma \;D_{B,\pm}\; \Gamma\; =\; \pm D_{B,\pm}\;.
$$
It is worth to point out that all the equations presented above are initially well-defined on the common core $S(\R^2)\otimes\C^4$ of
$D_{B,\pm}$ and then are extended by continuity to the whole Hilbert space $\s{H}_4$. The commutator of $D_{B}$ with elements in
$\pi(\bb{S}_B)$ is well-defined~\cite[Proposition 3.2]{denittis-sandoval-00}.
Since $D_{B,+}$ commutes with $\pi(\bb{S}_B)$ in view of~\cite[Lemma 2.19]{denittis-sandoval-00} it follows that
\begin{equation}\label{eq:eve_D}
	[D_B,\pi(A)]\;=\;[D_{B,-},\pi(A)]
\end{equation}
and in turn
$$
	\Gamma\;[D_B,\pi(A)]\;\Gamma\;=\;-[D_B,\pi(A)]\;.
$$
The latter equation shows that $[D_B,\pi(A)]$ has degree 1 with respect to $\Gamma$.

\medskip

In~\cite[Lemma 3.10]{denittis-sandoval-00} it has been proved that
$[F_{B,\varepsilon},\pi(A)]\in \rr{S}^{2^+}$ if $A\in\bb{S}_B$. By replacing $F_{B,\varepsilon}$ with the anticommutator
$$
	\{\Gamma, F_{B,\varepsilon}\}\;:=\;\Gamma F_{B,\varepsilon}+ F_{B,\varepsilon}\Gamma\;=\;2\Gamma\frac{D_{B,+}}{|D_{B,\varepsilon}|}
$$
one obtains a stronger result.
\begin{lemma}\label{lemma:help_ideal_basic}
	It holds true that
	\begin{equation}
		\label{eq:lemma-help_ideal_basic}
		\big[\{\Gamma, F_{B,\varepsilon}\},\pi(A)\big]\;\in\;\rr{S}^{q^-}\;,\qquad q>1
	\end{equation}
	for every $A\in\bb{S}_B$, independently of $\varepsilon>0$.
\end{lemma}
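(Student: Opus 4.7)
The plan is to exploit the block structure of the Dirac operator together with the commutation properties established earlier, in order to reduce the claim to a direct consequence of Lemma \ref{lemma:help_ideal_basic-+b}. Since $\pi(\bb{S}_B) \subseteq \bb{B}(\s{H}_4)_0$, that is, $\pi(A)$ commutes with $\Gamma$, and using the identity $\{\Gamma, F_{B,\varepsilon}\} = 2\Gamma D_{B,+}|D_{B,\varepsilon}|^{-1}$ recalled just before the statement, one gets
$$
\big[\{\Gamma, F_{B,\varepsilon}\}, \pi(A)\big] \;=\; 2\Gamma\,\big[D_{B,+}|D_{B,\varepsilon}|^{-1}, \pi(A)\big]\;.
$$
Since multiplication by the bounded operator $\Gamma$ preserves $\rr{S}^{q^-}$, it suffices to prove that $[D_{B,+}|D_{B,\varepsilon}|^{-1},\pi(A)] \in \rr{S}^{q^-}$.

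Next, I would use the two compatibilities established in this subsection: $D_{B,+}$ commutes with every element of $\pi(\bb{S}_B)$ (by~\cite[Lemma 2.19]{denittis-sandoval-00}) and with $|D_{B,\varepsilon}|^{-1}$ (by functional calculus, since $[D_B^2, D_{B,+}]=0$). Therefore
$$
\big[D_{B,+}|D_{B,\varepsilon}|^{-1}, \pi(A)\big] \;=\; D_{B,+}\,\big[|D_{B,\varepsilon}|^{-1}, \pi(A)\big]\;,
$$
so the problem reduces to showing $D_{B,+}[|D_{B,\varepsilon}|^{-1}, \pi(A)] \in \rr{S}^{q^-}$.

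Now I would use the matrix structure. As spelled out in the proof of Proposition \ref{prop:prop-MFM}, $|D_{B,\varepsilon}|^{-1}$ is a diagonal $4\times 4$ matrix with entries $Q_{B,\varepsilon_j}^{-1/2}$, where $\varepsilon_1=\varepsilon_2=\varepsilon$, $\varepsilon_3=\varepsilon+1$, $\varepsilon_4=\varepsilon-1$; consequently $[|D_{B,\varepsilon}|^{-1},\pi(A)]$ is the diagonal matrix whose entries are the operators $C_{\varepsilon_j,\varepsilon_j}(A)$ of Lemma \ref{lemma:help_ideal_basic}. On the other side, inverting the definition \eqref{eq:b-oper} of the creation/annihilation operators yields $G_1$ and $G_2$ as linear combinations of $\rr{b}^\pm$, and hence
$$
D_{B,+}\;=\;\rr{b}^+\otimes M_+\;+\;\rr{b}^-\otimes M_-\;,
$$
for fixed $4\times 4$ matrices $M_\pm$ built from $\gamma_3, \gamma_4$. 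Multiplying this with the diagonal matrix above produces a finite sum of operators of the form $\rr{b}^\pm C_{\varepsilon_j,\varepsilon_j}(A)$ tensored with fixed matrix units.

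Finally, Lemma \ref{lemma:help_ideal_basic-+b} provides exactly $\rr{b}^\pm C_{\varepsilon,\varepsilon'}(A) \in \rr{S}^{q^-}$ for every $q>1$ and $A\in \bb{S}_B$. Since $\rr{S}^{q^-}$ is a two-sided ideal stable under finite linear combinations, the total expression belongs to $\rr{S}^{q^-}$, concluding the proof. There is no real analytic obstacle here: the only point requiring care is the bookkeeping of the $4\times 4$ block structure of $|D_{B,\varepsilon}|^{-1}$ and $D_{B,+}$, so that the commutator is correctly reduced to the scalar-entry expressions $\rr{b}^\pm C_{\varepsilon_j,\varepsilon_j}(A)$ controlled by the previous lemma.
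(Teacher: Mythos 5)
Your proposal is correct and follows essentially the same route as the paper: both reduce the commutator, via the identity $\{\Gamma, F_{B,\varepsilon}\}=2\Gamma D_{B,+}|D_{B,\varepsilon}|^{-1}$ and the fact that $\rr{b}^{\pm}$ (equivalently $D_{B,+}$) commutes with $\pi(A)$, to a $4\times 4$ matrix whose nonzero entries are of the form $\rr{b}^{\pm}C_{\varepsilon_j,\varepsilon_j}(A)$, and then invoke Lemma \ref{lemma:help_ideal_basic-+b}. The only difference is presentational: the paper writes out the full matrix of $\{\Gamma,F_{B,\varepsilon}\}$ explicitly, whereas you first strip off $\Gamma$ and $D_{B,+}$ abstractly before doing the block bookkeeping.
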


\begin{proof}
	A direct computation shows that
	$$
		\begin{aligned}
			\{\Gamma, F_{B,\varepsilon}\} & =\;2\Gamma\;\left(\begin{array}{c c c c}
					0        & \rr{b}^+ & 0         & 0         \\
					\rr{b}^- & 0        & 0         & 0         \\
					0        & 0        & 0         & -\rr{b}^- \\
					0        & 0        & -\rr{b}^+ & 0
				\end{array}
			\right)|D_{B,\varepsilon}|^{-1}                                               \\
			                              & =\;2\left(\begin{array}{c c c c}
					0                                              & \rr{b}^+\; Q_{B,\varepsilon}^{-\frac{1}{2}} & 0                                             & 0                                           \\
					-\rr{b}^-\; Q_{B,\varepsilon-1}^{-\frac{1}{2}} & 0                                           & 0                                             & 0                                           \\
					0                                              & 0                                           & 0                                             & \rr{b}^-\; Q_{B,\varepsilon}^{-\frac{1}{2}} \\
					0                                              & 0                                           & \rr{b}^+\; Q_{B,\varepsilon+1}^{-\frac{1}{2}} & 0
				\end{array}
			\right)
		\end{aligned}
	$$
	where the $\rr{b}^\pm$ are the  creation and annihilation operators
	defined by \eqref{eq:b-oper}. Since the operators $\rr{b}^\pm$ commute with $\pi(A)$ one gets that the non-zero elements of the commutator $[\{\Gamma, F_{B,\varepsilon}\},\pi(A)]$ are of the type
	$\rr{b}^\pm C_{\varepsilon,\varepsilon}(A)$. Therefore, the result follows from Lemma \ref{lemma:help_ideal_basic-+b}.
\end{proof}

\section{The cyclic cohomology of the magnetic algebra}\label{app:cyc_cohom}
Cyclic cohomology provides a natural analog of the classical \emph{de Rham
	theory} in the context of noncommutative $C^*$-algebras. A complete description of this theory is presented
in~\cite[Chapter~3]{connes-94} and~\cite[Chapters~8 \& 10]{gracia-varilly-figueroa-01}.
In this section, we will review only the most basic aspects of the
theory.

\medskip

To describe the cyclic cohomology of the magnetic algebra we
start by considering the set of $(n + 1)$-multilinear functionals $\varphi$ defined on $\mathscr{S}_{B}$, satisfying the cyclic condition
\[
	\varphi(A_1,\dots,A_n,A_0)={ (-1) }^n\varphi(A_0,A_1,\dots,A_n),\qquad A_{i} \in \mathscr{S}_{B}\;,
\]
Let $C^n_\lambda(\mathscr{S}_{B})$, with $n \in \mathbb{N}_{0}$, be the linear space of cyclic densely defined
$(n+1)$-linear functionals. The elements of $C^n_\lambda(\mathscr{S}_{B})$ are called
\emph{cyclic $n$-cochains}. On the family of sets $C^n_\lambda(\mathscr{S}_{B})$ acts
the \emph{Hochschild coboundary map}
$b:C^n_\lambda(\mathscr{S}_{B})\to C^{n+1}_\lambda(\mathscr{S}_{B})$ given by
\begin{align*}
	(b\varphi)(A_0,\ldots,A_{n+1}) & \;=\;\sum_{j=0}^n{ (-1) }^n\varphi(A_0,\ldots,A_j A_{j+1},\ldots,A_{n+1}) \\
	                               & \qquad +{ (-1) }^{n+1}\varphi(A_{n+1}A_0,\ldots,A_{n+1})\;.
\end{align*}
From the definition, one gets $b^2=0$. An element $\varphi\in C^n_\lambda(\mathscr{S}_{B})$ is
called \emph{cyclic $n$-cocycle}, if and only if, $b\varphi=0$. Elements of the form
$b\varphi\in C^n_\lambda(\mathscr{S}_{B})$ are called \emph{cyclic $n$-coboundaries}. The
\emph{cyclic cohomology} of $\mathscr{S}_{B}$ is the cohomology of the cyclic
complex $(C^\bullet_\lambda(\mathscr{S}_{B}),b)$, and it is denoted by
$\HC^\bullet(\mathscr{S}_{B})$. More precisely one has that
\[
	\HC^n(\mathscr{S}_{B})\:=\:\frac{{\rm Ker}\left(b:C^n_\lambda(\mathscr{S}_{B})\to C_\lambda^{n+1}(\mathscr{S}_{B})\right)}{{\rm Ran}\big(b:C_\lambda^{n-1}(\mathscr{S}_{B})\to C_\lambda^{n}(\mathscr{S}_{B})\big)}\;,\qquad n \in \mathbb{N}_{0}\;.
\]
Any element of $\HC^n(\mathscr{S}_{B})$ is an equivalence class of {cyclic
		$n$-cocycles} modulo {cyclic $n$-coboundaries}.

\medskip

Let us recall that there exist the \emph{periodicity operator} $S$ which provides group homomorphisms
$S:\HC^n(\mathscr{S}_{B})\to \HC^{n+2}(\mathscr{S}_{B})$
\cite[Section 10.1]{gracia-varilly-figueroa-01}. Using
this operator, one obtains two groups as the inductive limits
$$
	\HC^{\rm even}(\bb{S}_B)\;:=\;\varinjlim\HC^{2n}(\mathscr{S}_{B})\;,\qquad \HC^{\rm odd}(\bb{S}_B)\;:=\;\varinjlim\HC^{2n+1}(\mathscr{S}_{B})\;.
$$
which define the \emph{periodic cyclic cohomology} of $\bb{S}_B$.
The next result is essentially proved in
\cite{elliot-natsume-nest-88}.
\begin{lemma}\label{lemma:comp_cohom}
	It holds true that
	$$
		\HC^{\rm even}(\bb{S}_B)\;=\;\Z\;{\textstyle [\fint_B]}\;,\qquad \HC^{\rm odd}(\bb{S}_B)\;=\;0\;.
	$$
\end{lemma}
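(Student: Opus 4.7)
My plan is to reduce the claim to a known computation of the continuous cyclic cohomology of the Schwartz algebra of rapidly decreasing matrices, using the $C^*$-isomorphism $\bb{C}_B\simeq\bb{K}$ recalled in Section~\ref{sec:BG_mat}.

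First I would make this isomorphism concrete. Via the Laguerre basis $\{\psi_{n,m}\}$ one obtains a tensor decomposition $L^2(\R^2)\simeq\ell^2(\N_0)\otimes\ell^2(\N_0)$ in which the transition operator acts as $\Upsilon_{j\mapsto k}=|k\rangle\langle j|\otimes\mathbf{1}$. Hence the assignment $\Upsilon_{j\mapsto k}\mapsto |k\rangle\langle j|$ realizes $\bb{C}_B\simeq\bb{K}(\ell^2(\N_0))$, and the description~\eqref{eq:exp_op} of $\bb{S}_B$ shows that under this isomorphism $\bb{S}_B$ corresponds exactly to the Fréchet subalgebra of $\bb{K}(\ell^2(\N_0))$ whose entry-sequence belongs to $S(\N_0^2)$, the Schwartz algebra of rapidly decreasing matrices. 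By construction the Fréchet topologies are intertwined, so the identification is one of Fréchet pre-$C^*$-algebras.

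Next I would invoke the cyclic cohomology computation of this Schwartz matrix algebra carried out in~\cite{elliot-natsume-nest-88}. This is an instance of the topological Morita invariance of continuous cyclic cohomology within the category of $m$-convex Fréchet algebras: the smooth matrix algebra $S(\N_0^2)$ is Morita equivalent to $\C$ via a rank-one corner that lies in the smooth subalgebra, so its continuous cyclic cohomology coincides with $\HC^\bullet(\C)$. Consequently, in every even degree the group is generated by an iterate of the operator trace under the periodicity operator $S$, while odd degrees vanish. Passing to the inductive limit under $S$ yields the periodic values stated in the lemma.

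Finally I would pin down the generator. Under the identification $\bb{S}_B\simeq S(\N_0^2)$, the Landau projection $\Pi_0$ corresponds to the rank-one matrix unit $|0\rangle\langle 0|$, on which the canonical matrix trace $\Tr$ evaluates to $1$. Since also $\fint_B(\Pi_0)=1$ by~\eqref{eq:BM_01}, the functional $\fint_B$ is identified with $\Tr$, and consequently $[\fint_B]$ generates $\HC^{\rm even}(\bb{S}_B)$. The principal subtlety of the argument is that the cyclic cohomology must be taken continuous with respect to the Fréchet topology of $\bb{S}_B$; once this topological framework is fixed, the Morita-invariance computation of~\cite{elliot-natsume-nest-88} applies verbatim to the transported algebra $S(\N_0^2)$ and no further analytic estimates are needed.
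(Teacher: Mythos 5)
Your proposal is correct and lands on the same key external input as the paper, namely the Elliott--Natsume--Nest computation of the cyclic cohomology of the smooth compact operators; the only genuine difference is the route taken to identify $\bb{S}_B$ with a standard model of that algebra. The paper invokes the Weyl transform (via Folland) to produce a $\ast$-isomorphism $\bb{S}_B\simeq\bb{K}^\infty$, the algebra of Hilbert--Schmidt operators on $L^2(\R)$ with Schwartz integral kernels, which is literally the algebra appearing in \cite[Theorem 2]{elliot-natsume-nest-88}. You instead read the identification straight off the defining expansion \eqref{eq:exp_op}: the Laguerre basis gives $L^2(\R^2)\simeq\ell^2(\N_0)\otimes\ell^2(\N_0)$ with $\Upsilon_{j\mapsto k}=\ketbra{k}{j}\otimes{\bf 1}$, so $\bb{S}_B$ is the Fr\'echet algebra $S(\N_0^2)$ of rapidly decreasing matrices. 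This is more intrinsic to the paper's setup and avoids the Weyl transform entirely; the small price is that you must either know $S(\N_0^2)\simeq\bb{K}^\infty$ as Fr\'echet algebras (standard, e.g.\ via Hermite expansion of kernels) or use a version of the Morita-invariance statement formulated directly for rapidly decreasing matrices, since the cited theorem is stated for Schwartz kernels. Your final step is actually slightly more careful than the paper's: you check that $\fint_B$ transports to the normalized matrix trace by evaluating on $\Pi_0\mapsto\ketbra{0}{0}$, whereas the paper only remarks that a cyclic $0$-cocycle is a trace and that $\fint_B$ is one; pinning down the normalization $\fint_B(\Pi_0)=1$ is what makes $[\fint_B]$ a $\Z$-generator rather than merely a nonzero class.
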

\begin{proof}
	In \cite[Theorem 2]{elliot-natsume-nest-88} it is proved that
	$\HC^{\star}(\bb{K}^\infty)\;=\;\HC^{\star}(\C)$ where $\star$  stays for even or odd and $\bb{K}^\infty$ denotes the $\ast$-algebra
	of those
	Hilbert-Schmidt operators on $L^2(\R)$ whose integral kernels belong to $S(\R^2)$.
	By adapting \cite[Theorem 1.30]{folland-89} one obtains that the
	\virg{Weyl transform} $\rho$ provides a $\ast$-isomorphism $\bb{S}_B\simeq\bb{K}^\infty$. As a consequence one has that the
	periodic cyclic cohomology of $\bb{S}_B$ coincides with that of $\C$ which is known to be $\HC^{\rm even}(\C)\simeq\Z$ and
	$\HC^{\rm odd}(\C)=0$. To conclude the proof it is enough to observe that a cyclic 0-cocycle is clearly the same thing as a trace and $\bb{S}_B$ is endowed with the (faithful)
	trace $\fint_B$.
\end{proof}

There are, in principle, two
canonical pairings between periodic cyclic cohomology and $K$-theory~\cite[Section~3.III]{connes-94}. In the specific case of the $\ast$-algebra $\bb{S}_B$ the only relevant pairing is
\[
	\langle\;,\;\rangle:HC^{\rm even}(\bb{S}_B)\times K_0(\bb{S}_B)\to\C\;
\]
implemented by
\[
	\langle[\varphi],[P]\rangle\;:=\;\frac{1}{m!}(\varphi\;\sharp\; {\rm Tr}_{\C^N})(P,\ldots,P)\;,
\]
where $\varphi\in C^{2m}_\lambda(\mathscr{S}_{B})$ is a representative of $[\varphi]\in HC^{\rm even}(\bb{S}_B)$ and the projection
$P\in \mathscr{S}_{B}\otimes {\rm Mat}_N(\C)$ is a representative of $[P]\in K_0(\bb{S}_B)$.
The odd pairing is trivial in view of the fact that
$HC^{\rm odd}(\bb{S}_B)=0= K_1(\bb{S}_B)$.


\begin{thebibliography}{GGISV}




	\bibitem[AM]{alberti-matthes-02}  Alberti, P.~M.; Matthes, R.,
	{Connes' Trace Formula and Dirac Realization of Maxwell and Yang-Mills Action}.
	{Noncommutative Geometry and the Standard Model of Elementary Particle Physics}, \emph{Lecture Notes in Physics} {\bf 596}
	(F. Scheck, W. Werner and H. Upmeier eds.). Springer, Berlin, 2002,  pp. 40-74

	\bibitem[AMSZ]{azamov-mcdonald-sukochev-zanin-19}
	{Azamov, N.; McDonald, E.; Sukochev, F.; Zanin, D.:}
	{\sl A Dixmier trace formula for the density of states}.
	Commun. Math. Phys. {\bf 377},  2597-2628 (2020)


	\bibitem[BES]{bellissard-elst-schulz-baldes-94} Bellissard,~J.; van~Elst,~A.; Schulz-Baldes,~H.: {\sl The Non-Commutative Geometry of the Quantum Hall Effect}.
	{J. Math. Phys.}~{\bf 35}, 5373-5451 (1994)

	\bibitem[Bel1]{bellissard-86}
	{Bellissard, J.}:
	{\sl $K$-theory of $C^*$-algebras in solid state physics}. {\em Statistical Mechanics and Field Theory: Mathematical Aspects}. Lecture Notes in Physics {\bf 257} (edit. by Dorlas, T.; Hugenholtz, M.; Winnink, M.). Springer-Verlag, Berlin, pp. 99-156, 1986

	\bibitem[Bel2]{bellissard-93}
	{Bellissard, J.}:
	{\sl Gap labelling theorems for Schr\"odinger operators}.
	{\em From Number Theory to Physics}. Springer, Berlin, pp. 538-630, 1993

	\bibitem[Bou]{bourne-priv}
	{Bourne, C.}: {\sl Private communication}. (2020)

	\bibitem[BS]{birman-solomjak-87}
	{Birman, M.; Solomjak, M. Z.}:
	{\em Spectral Theory of Self-Adjoint Operators in Hilbert Space}.
	D. Reidel Publishing Co., Dordrecht, 1987


	\bibitem[CM]{connes-moscovici-95}
	Connes,  A., Moscovici, H.,
	{\sl The local index formula in noncommutative geometry}.
	{Geom. Func. Anal.}~{\bf 5}, 174-243 (1995)

	\bibitem[Con]{connes-94}
	{Connes, A.}:
	{\em Noncommutative Geometry}.
	Academic
	Press, 1994


	\bibitem[DGM]{denittis-gomi-moscolari-19}
	{De Nittis,~G.; Gomi, K.; Moscolari, M.:}
	{\sl The geometry of (non-abelian) landau levels}. 	J. Geom. Phys.
		{\bf 152}, 103649 (2020)

	\bibitem[DS]{denittis-sandoval-00}
	{De Nittis, G.; Sandoval, M.}:
	{\sl The noncommutative geometry of the Landau Hamiltonian: Metric aspects}.
	SIGMA {\bf 16}, 146 (2020)





	\bibitem[ENN]{elliot-natsume-nest-88}
	{Elliot,~G. A.; Natsume, T.; Nest, R.:}
	{\sl Cyclic cohomology for one-parameter smooth
		crossed products.}
	Acta Math. {\bf 160},  285-305 (1988)



	\bibitem[Fol1]{folland-89}
	{Folland, G. B.}:
	{\em Harmonic Analysis in Phase Space}.
	Princeton University Press, Princeton, 1989




	\bibitem[Gree]{greenleaf-69}  Greenleaf, F.~P.:
	{\em  Invariant Means on Topological Groups And Their Applications}.  Van Nostrand
	Reinhold Co., New York, 1969

	\bibitem[GVF]{gracia-varilly-figueroa-01}  {Gracia-Bondia, J. M., Varilly, J. C.,  Figueroa,  H.}:
	{\em Elements of Noncommutative Geometry}.  Birkh\"{a}user, Boston, 2001


	\bibitem[LLW]{loring-lu-watson-21}
	{Loring T. A., Lu, J.; Watson, A. B.}:
	{\sl Locality of the windowed local density of states}.
	\href{https://arxiv.org/abs/2101.00272}{preprint arXiv:2101.00272},  (2021)

	\bibitem[LSB1]{schulz-baldes-loring-17}
	{Loring T. A., Schulz-Baldes, H.}:
	{\sl Finite volume calculation of K-theory invariants}.
	New York J. Math. {\bf 22}, 1111-1140 (2017)

	\bibitem[LSB2]{schulz-baldes-loring-19}
	{Loring T. A., Schulz-Baldes, H.}:
	{\sl Spectral Flow Argument Localizing an Odd Index Pairing}.
	{Canad. Math. Bull.} {\bf 62}, 373-381  (2019)

	\bibitem[LSB3]{schulz-baldes-loring-20}
	{Loring T. A., Schulz-Baldes, H.}:
	{\sl The spectral localizer for even index pairings}.
	{J. Noncommut. Geom.}  (electronically published)  (2020)

	\bibitem[LSZ]{lord-sukochev-zanin-12}
	Lord, S., Sukochev, F., Zanin, D.,
	{\em Singular Traces: Theory and Applications}. De Gruyter, Berlin,  2012


	\bibitem[Pie]{pietsch-80}
	{Pietsch, A.}:
	{\sl Weyl Numbers and Eigenvalues of Operators in Banach Spaces}.
	{Math. Ann.} {\bf 247}, 149-168   (1980)


	\bibitem[Sim1]{simon-76}
	{Simon, B.}:
	{\sl Analysis With Weak Trace Ideals and the Number of Bound States of Schrödinger Operators}.
	Trans. Amer. Math. Soc.  {\bf 224}, 367-380 (1976)

	\bibitem[Sim2]{simon-05}
	{Simon, B.}:
	{\em Trace Ideals and Their Applications}.
	AMS, Providence,  2005


	\bibitem[Ves]{veselic-08} Veseli\'{c}, I.:
	{\em Existence and Regularity Properties of the Integrated Density of States of Random Schr\"{o}dinger Operators}. Springer, Berlin-Heidelberg,  2008


	\bibitem[Xia]{xia-88}
	{Xia, J.}:
	{\sl Geometric Invariants of the Quantum Hall Effect}.
	{Commun. Math. Phys.}~{\bf 119}, 29-50 (1988)

\end{thebibliography}
\end{document}